\newtheorem{theorem}{Theorem}
\newtheorem{lemma}{Lemma}
\newtheorem{definition}{Definition}
\newtheorem{corollary}{Corollary}
\newtheorem{proposition}{Proposition}
\newtheorem{remark}{Remark}
\newtheorem{counter-example}{Counter-example}
\DeclareMathOperator*{\supp}{supp}
\DeclareMathOperator*{\interior}{int}
\DeclareMathOperator*{\spn}{span}
\DeclareMathOperator*{\sign}{sign}
\DeclareMathOperator*{\dist}{dist}
\newcommand{\mb}{\mathbf}
\begin{document}
\title{Robustness of Sparse Recovery via $F$-minimization: A Topological Viewpoint}
\author{Jingbo~Liu,~\IEEEmembership{Student Member,~IEEE,}
        Jian~Jin,~
        and~Yuantao~Gu,~\IEEEmembership{Member,~IEEE}
\thanks{J. Liu is now with the Department of Electrical Engineering, Princeton University, Princeton, NJ 08544 USA (e-mail: jingbo@princeton.edu). Main part of this work was done when he was with the Department
of Electronic Engineering, Tsinghua University, Beijing, 100084 China. J. Jin and Y. Gu are both with the Department of Electronic Engineering, Tsinghua University (e-mail: jinjian620@gmail.com,~gyt@tsinghua.edu.cn). The corresponding author of this paper is Yuantao Gu.

This paper was presented in part at IEEE International Symposium on
Information Theory (ISIT) in Istanbul, Turkey, 2013.}
}
\markboth{
}%
{Shell \MakeLowercase{\textit{et al.}}: Robustness of Sparse Recovery via $F$-minimization: A Topological Viewpoint}
%



\maketitle

\begin{abstract}
A recent trend in compressed sensing is to consider non-convex optimization techniques for sparse recovery. The important case of $F$-minimization has become of particular interest, for which the exact reconstruction condition (ERC) in the noiseless setting can be precisely characterized by the null space property (NSP). However, little work has been done concerning its robust reconstruction condition (RRC) in the noisy setting. We look at the null space of the measurement matrix as a point on the Grassmann manifold, and then study the relation between the ERC and RRC sets, denoted as $\Omega_J$ and $\Omega_J^r$, respectively. It is shown that $\Omega_J^r$ is the interior of $\Omega_J$, from which a previous result of the equivalence of ERC and RRC for $\ell_p$-minimization follows easily as a special case. Moreover, when $F$ is non-decreasing, it is shown that $\overline{\Omega}_J\setminus\interior(\Omega_J)$ is a set of measure zero and of the first category. As a consequence, the probabilities of ERC and RRC are the same if the measurement matrix $\mathbf{A}$ is randomly generated according to a continuous distribution. Quantitatively, if the null space $\mathcal{N}(\bf A)$ lies in the ``$d$-interior'' of $\Omega_J$, then RRC will be satisfied with the robustness constant $C=\frac{2+2d}{d\sigma_{\min}(\mathbf{A}^{\top})}$; and conversely if RRC holds with $C=\frac{2-2d}{d\sigma_{\max}(\mathbf{A}^{\top})}$, then $\mathcal{N}(\bf A)$ must lie in $d$-interior of $\Omega_J$. We also present several rules for comparing the performances of different cost functions. Finally, these results are capitalized to derive achievable tradeoffs between the measurement rate and robustness with the aid of Gordon's escape through the mesh theorem or a connection between NSP and the restricted eigenvalue condition.
\end{abstract}

\begin{IEEEkeywords}
Reconstruction algorithms,
compressed sensing,
minimization methods,
robustness,
null space
\end{IEEEkeywords}

%
\IEEEpeerreviewmaketitle

\section{Introduction}
\IEEEPARstart{C}{ompressed} Sensing is a method of recovering a sparse signal from a set of under-determined linear measurements. Ideally, the sparsest solution is given by the $\ell_0$-norm minimization method:
\begin{equation}
\min_{\mathbf{x}\in \mathbb{R}^n}~\|\mathbf{x}\|_0~\textrm{s.t.}~\bf y=Ax,
\end{equation}
where $\mathbf{A}$ is an $m\times n$ measurement matrix, $\mathbf{y}\in \mathbb{R}^m$ is the linear measurements, and we assume that $m<n$. It is well known that exactly solving the $\ell_0$-minimization is computational intractable since it is a hard combinatorial problem \cite{DC}. Therefore, many algorithms have been proposed to reduce the computational complexity. Roughly speaking, these algorithms fall into two categories: 1) minimization techniques, where the sparse solution is retrieved by minimizing an appropriate cost function \cite{Donoho,zap}, and 2) greedy pursuits, a representative of which is the orthogonal matching pursuit (OMP) \cite{OMP}.

In general, the greedy algorithms often incur less computational complexity, but the minimization techniques are more advantageous in terms of accuracy. The most basic minimization technique is the $\ell_1$-minimization, also known as Basis Pursuit (BP) \cite{DC,Donoho,donoho2}:
\begin{equation}\label{min1}
\min_{\mathbf{x}\in \mathbb{R}^n}\|\mathbf{x}\|_1\quad \textrm{s.t.}~\mathbf{y}=\mathbf{A}\mathbf{x},
\end{equation}
which is a simple convex optimization and can be recast as a linear program. Recently there is a trend to consider minimizing non-convex cost functions. Examples include:

$\bullet$ $\ell_p$ cost function. The $\ell_p$-minimization ($0<p<1$) \cite{chartrand,Foucart1,Gribonval} considers an optimization problem similar to (\ref{min1}) but the cost function is replaced with $\|\mathbf{x}\|^p_p$.

$\bullet$ Approximate $\ell_0$ cost functions, such as those in the zero point attracting projection (ZAP), \cite{zap}, and smooth $\ell_0$ algorithm \cite{sl0}. Also for statisticians, smoothly clipped absolute deviation (SCAD) penalty \cite{Fan} and the minimax concave penalty (MCP) \cite{MCP} are familiar concave penalties used for variable selection.

Although the non-convex nature of these cost functions makes it difficult to exactly solve the corresponding optimization problems, various practical algorithms can be adapted to these non-convex problems, including the iteratively re-weighted least squares minimization (IRLS) \cite{chartrand1,irls}, iterative thresholding algorithm (IT) \cite{it}, which are based on fixed point iteration; and the zero point attracting projection algorithm (ZAP) \cite{zap,wang,laming}, which is based on Newton's method for solving nonlinear optimization. In general the non-convex algorithms have empirically outperformed BP in the various respects, because nonlinear cost functions can better promote sparsity than the $\ell_1$ cost function. Thus, a detailed study of the reconstruction properties of these sparse recovery methods remains important.

Most of these non-convex optimizations can be subsumed in a general category called ``$F$-minimization'' \cite{lqnsp}, in which the cost function satisfies some desirable properties, such as subadditivity. The precise definition of the class of cost functions of our interest will be given in the next section.

Two concepts arise naturally in the compressed sensing problem: The \emph{exact recovery condition} (ERC) in the noiseless setting and the \emph{robust recovery condition} (RRC) in the noisy setting. In the literature, ERC typically requires that all sparse signals can be exactly recovered. In addition to this, RRC requires that if the measurement is noisy, the reconstruction error is bounded by the norm of the noise vector multiplied by a constant factor.

While the rigorous definitions of ERC and RRC are deferred to Section \ref{sec2}, we remark here in passing that RRC trivially implies ERC, because ERC can be seen as a special case of RRC where the measurement is free of noise.
Conversely, it is not obvious whether ERC also implies RRC, or RRC is \emph{strictly} stronger than ERC. Early work in compressed sensing have provided sufficient conditions for ERC and RRC of the $\ell_1$-minimization, based on the so-called restricted isometry property (RIP) \cite{DC}, and those sufficient conditions appear to be identical.
However, analysis based on RIP generally fails to provide exact (necessary and sufficient) condition for ERC and RRC. Another line of research has considered the null space property (NSP), which gives a both necessary and sufficient condition for ERC of the $\ell_p$-minimization. On the other hand, the connection between NSP and RRC is generally much less known: ``While the NSP is both necessary and sufficient for establishing guarantees of..., these guarantees do not account for \emph{noise}'' \cite[Section~1.4.2]{davenport2011introduction}.
However \cite{Foucart} provided a sufficient condition, called NSP', for RRC of $\ell_p$-minimization. Later Aldroubi et al.~proved in \cite{lqnsp} that NSP and NSP' are in fact equivalent. Hence, we have that ERC and RRC are actually the same condition for $\ell_p$-minimization.

In contrast to the special case of $\ell_p$-minimization, the robust recovery condition for the more general case of $F$-minimization has been recognized as ``not easy to establish'' \cite{lqnsp}, merely based on the idea of NSP. The fundamental issue of robustness in $F$-minimization has remained relatively unexplored.

The primary purpose of this paper is to give an exact characterization of the relationship between ERC and RRC in the general $F$-minimization problem. We first show that ERC and RRC depends only on the configuration of the null space of the measurement matrix (the entire entries of the matrix is of course sufficient, but not necessary, information).
Moreover, since the null spaces are linear subspaces of the Euclidean space, they can be viewed as points on a Grassmann manifold, which has a natural topological structure,
hence concepts such as open sets and interior are well defined for collections/sets of the null spaces. We denote by $\Omega_J$ and $\Omega_J^{r}$ the sets that consist of the null spaces satisfying ERC and RRC for the $F$-minimization, respectively.
We show that $\Omega_J^{r}$ is exactly the interior of $\Omega_J$ (Theorem \ref{th2}).
Hence we can give an alternative proof of the equivalence of ERC and RRC in $\ell_p$-minimization,
by simply showing that $\Omega_J$ is open in this special case.
We would like to remark that this analytical framework also gives rise to new ideas and results, including:

\begin{figure}
\begin{center}
\begin{tikzpicture}
\node[rectangle] (rrc0) {$C=\frac{2-2d}{d\sigma_{\max}(\mathbf{A}^{\top})}$};
\node[rectangle] (int) [above=-0.05cm of rrc0] {$d\mbox{-}\interior(\Omega_J)$};
\node[rectangle] (rrc1) [above=0cm of int] {$C=\frac{2+2d}{d\sigma_{\min}(\mathbf{A}^{\top})}$};
\node[rectangle] (rrc) [above=0cm of rrc1] {RRC};
\node[rectangle] (erc) [above=0cm of rrc] {ERC};

\node[draw=black,inner sep=-4pt,thick,ellipse,fit=(rrc0)] (nrrc0) {};
\node[draw=black,inner sep=-4pt,thick,ellipse,fit=(nrrc0) (int)] (nint) {};
\node[draw=black,inner sep=-7pt,thick,ellipse,fit=(nint) (rrc1)] (nrrc1) {};
\node[draw=black,inner sep=-10pt,thick,ellipse,fit=(nrrc1) (rrc)] (nrrc) {};
\node[draw=black,inner sep=-13pt,thick,ellipse,fit=(nrrc) (erc)] (nerc) {};
\end{tikzpicture}
\caption{Relationship between subsets of $G_l(\mathbb{R}^n)$.}
\end{center}
\label{fig0}
\end{figure}
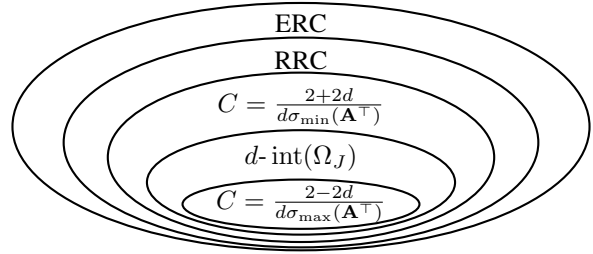

\begin{enumerate}
  \item Equivalence of ERC and RRC in probability. Under some mild assumptions we show that $\Omega_J$ and $\Omega_J^{r}$ are ``almost equal'' in the sense that the difference set is of measure zero and of the first category. Building on this, we show that ERC and RRC hold true with the same probability if the measurement matrix is randomly generated according to a continuous distribution.
  \item Comparison between different sparseness measures. It is interesting and valuable to know how the performances between different sparseness measures compare. Gribonval et al.~\cite[Lemma 7]{Gribonval} provided a condition under which one spareness measure is better than another in the sense of ERC. Combining this with our result, we show that this condition also provides a comparison in terms of RRC. Moreover, with the concept of measure zero set on the Grassmannian, we are able to provide additional comparison rules which guarantee that one sparse measure is better than the other in terms of probability of ERC/RRC.
  \item Tradeoff between measurement rate and robustness. We show that a matrix whose null space falls in the ``$d$-interior'' of $\Omega_J$ satisfies RRC with the robustness constant $C=\frac{2(1+d)}{d\sigma_{\min}(\mathbf{A}^{\top})}$. Conversely, $C=\frac{2(1-d)}{d\sigma_{\max}(\mathbf{A}^{\top})}$ implies that the null space lies in the ``$d$-interior''; see Figure~\ref{fig0}. This result can be seen as a quantitative version of the aforemention interior characterization of the RRC set, and can be combined with 2) to compute achievable tradeoffs between the measurement rate and robustness: for rotationally invariant matrix ensembles, Gordon's escape through the mesh theorem can be used to upper bound the measure of the $d$-interior. To illustrate this method, we derive the tradeoff when $F(x)/x$ is non-increasing in the asymptotic linear growth case. For matrices satisfying the restricted isometry property (RIP), a tradeoff can also be derived using a connection between RIP, the restricted eigenvalue condition and NSP.
\end{enumerate}

The rotationally invariance in 3) means that the distribution of the matrix is invariant under right multiplication with any orthogonal matrix. This is an nice property, not only because it is satisfied by important matrix ensembles such as the standard Gaussian ensemble, but also because it can be employed to construct universal encoders \cite{tao1}: suppose a signal $\mb{s}$ is sparse in a certain basis $\bf\Phi$, i.e.\
\begin{align}
\bf s=\Phi x,
\end{align}
where $\bf{x}$ is a sparse vector and $\bf\Phi$ is an orthogonal matrix known to the decoder but not the encoder. If $\bf{A}$ satisfies rotational invariance, the encoder can always take the measurement
\begin{align}
\bf As=A\Phi x,
\end{align}
and the unavailability of the side information $\bf\Phi$ to the encoder does not matter since $\bf A\Phi$ has the same distribution as $\bf A$.

The organization of the paper is as follows. In Section \ref{sec2} we present the mathematical formulation of the problem and a brief introduction to null space property and the Grassmann manifold. Section \ref{secrelation} studies the relationship between ERC and RRC: Section~\ref{pa} gives an exact characterization of RRC set as the interior of ERC set on the Grassmannian; in Section~\ref{s3b} we show than the ERC and RRC sets differ by a set of measure zero and of the first category; Section~\ref{s3c} provides quantitative results of the robustness of the measurement matrix whose null space lies in $d\mbox{-}\interior(\Omega_J)$ (the $d$-interior mentioned earlier). In Section~\ref{secrules} we provide some rules for comparing the performance of different sparse measures. Utilizing results from \ref{s3c} and \ref{secrules}, Section~\ref{sgordon} and Section~\ref{sbeyond} provide two approaches of estimating the probability of $d\mbox{-}\interior(\Omega_J)$ and deriving the tradeoffs between the measurement rate and the robustness. Section~\ref{comp} compares our approaches and definitions with related ones in the literature. Finally in Section~\ref{conclusion} we conclude by reviewing the results and pointing out possible directions for future work.

\section{Problem Setup and Key Definitions}\label{sec2}
This section provides the mathematical formulation of the problem and the definitions of some key concepts. We shall use lower case bold letters for vectors, and upper case bold letters for matrices. Notation $\mathbb{M}(m,n)$ denotes the set of $m\times n$ real matrices. Throughout the paper we suppose the observation matrix is $m\times n$, and set $l:=n-m$, unless otherwise indicated. $\|\mathbf{x}\|_0$ refers to the $\ell_0$ norm\footnote{Strictly speaking, the $\ell_0$ norm and $\ell_p(0<p<1)$ norm defined here do not satisfy the definition of norm in mathematics.} of $\mathbf{x}$, i.e., the number of non-zero elements in the vector, and $\|\mathbf{x}\|_p:=(\sum_k|x(k)|^p)^{1/p}$ denotes the $\ell_p$ norm of $\mathbf{x}$.
\subsection{Basic Model}
Let $\mathbf{\bar{x}}\in \mathbb{R}^n$, $\mathbf{A}\in \mathbb{M}(m,n)$, $\mathbf{v}\in \mathbb{R}^m$ be the sparse signal, the measurement matrix, and the additive noise, respectively. Let $T:=\supp(\mathbf{\bar{x}})$ be the support of $\mathbf{\bar{x}}$. Vector $\mathbf{\bar{x}}$ is called $k$-sparse if $|T|\le k$. The linear measurement $\mathbf{y}$ is given by
\begin{equation}
\mathbf{y}=\mathbf{A}\mathbf{\bar{x}}+\mathbf{v}.
\end{equation}

We consider the problem of recovering $\mathbf{\bar{x}}$ through an optimization. Supposing $F:[0,+\infty)\to[0,+\infty)$ is a given function, we define the cost function
\begin{equation}\label{cost}
J({\mathbf{x}}):=\sum_{k=1}^n F(|x(k)|).
\end{equation}
With a slight abuse of the notation, we shall also use the notations:
\begin{align}
J(\mathbf{x}_T):&=\sum_{k\in T} F(|x(k)|),\nonumber\\
J({\mathbf{x}}_{T^c}):&=\sum_{k\in T^c} F(|x(k)|),\nonumber
\end{align}
where $\mathbf{x}_T\in\mathbb{R}^{|T|},~
\mathbf{x}_{T^c}\in\mathbb{R}^{n-|T|}$ denote the restriction of $\mathbf{x}$ on the set $T,~T^c$, respectively. Clearly (\ref{cost}) is a very general model: For example, if one chooses $F(x)=1_{x>0}$ then $J(\mathbf{x})=\|\mathbf{x}\|_0$; if $F(x)=x^p$ then $J(\mathbf{x})=\|\mathbf{x}\|^p_p$.

The conditions ERC and RRC are commonly formulated as follows, see for example \cite{Donoho}\cite{tao1}\cite{lqnsp}.
\begin{definition}[Exact recovery condition]
In the noiseless case, the sparse signal is retrieved via the following optimization:
\begin{equation}\label{minimization}
\min_{\mathbf{x}\in\mathbb{R}^n}J({\mathbf{x}})\quad \textrm{s.t.}~\mathbf{A}\mathbf{x}=\mathbf{y}.
\end{equation}
We say $\mathbf{A},~J$ satisfy the \emph{exact recovery condition} (ERC) if for any measurement $\mathbf{y}=\mathbf{A}\mathbf{\bar{x}}$, where $\mathbf{\bar{x}}$ is $k$-sparse, the vector $\mathbf{\bar{x}}$ is also the unique solution to (\ref{minimization}).
\end{definition}

\begin{definition}[Robust recovery condition]
In the noisy measurement ($\mathbf{v}\neq \mathbf{0}$) case, the sparse signal is retrieved via the following optimization:
\begin{equation}\label{minimization2}
\min_{\mathbf{x}\in\mathbb{R}^n}J({\mathbf{x}})\quad \textrm{s.t.}~\|\mathbf{A}\mathbf{x}-\mathbf{y}\|<\epsilon,
\end{equation}
where $\epsilon\in \mathbb{R}^+$ is a constant chosen to tolerate the noise. We say that the \emph{robust recovery condition} (RRC) is satisfied if the following holds. For any $k$-sparse signal $\mathbf{\bar{x}}$, noise $\mathbf{v}$ and $\epsilon$ satisfying $\|\mathbf{v}\|\le \epsilon$, and feasible solution $\mathbf{\hat{x}}$ satisfying $J(\mathbf{\hat{x}})\le J(\mathbf{\bar{x}})$, we have
\begin{equation}\label{defc}
\|\mathbf{\bar{x}}-\mathbf{\hat{x}}\|<C\epsilon,
\end{equation}
where $C$ is a constant.
\end{definition}

We end this subsection by remarking that ERC, RRC, and the constant $C$ in the definition of RRC all depend only on $\mathbf{A},k,J$.

\subsection{Null Space Property}
The null space property \cite{gribonval1,nsp,Gribonval} is useful for the analysis of a special class of cost functions, which we introduce as follows:
\begin{definition}[sparseness measure]\label{def1}
Function
\begin{equation}
F:[0,+\infty)\to[0,+\infty)
\end{equation}
is called a \emph{sparseness measure} if the following two conditions are satisfied:

$\bullet$ $F(|\cdot|)$ is subadditive on $\mathbb{R}$, i.e. $F(|x+y|)\le F(|x|)+F(|y|)$ for all $x,y,z\in \mathbb{R}$;

$\bullet$ $F(x) = 0$ if and only if $x=0$.\\
We denote by $\mathcal{M}$ the set of all sparseness measures.\footnote{For our purpose, the definition of sparseness measure in this paper does not need to require that $F(x)/x$ is non-increasing. A comparison with other definitions of the sparseness measure is given in Section~\ref{s5b}.}
\end{definition}

In this paper we assume that the function $F$ is a sparseness measure as in Definition \ref{def1}. This is a rather loose assumption, so that the key optimization problems in many of the sparse recovery algorithms can be subsumed in our framework, including $\ell_p$-minimization and ZAP algorithm. The definition is also quite natural, since it can be checked that $F$ is a sparseness measure if and only if its corresponding cost function $J$ induces a metric on $\mathbb{R}^n$ via $d(\mathbf{x},\mathbf{y})
:=J(\mathbf{x}-\mathbf{y})$.

When $F\in\mathcal{M}$, the \emph{null space property} (NSP) turns out to be equivalent with ERC:
\begin{lemma}[Null space property \cite{Gribonval}(Lemma 6)]\label{nspcond}
If $F\in\mathcal{M}$, then a necessary and sufficient condition for ERC is
\begin{equation}
J(\mathbf{z}_T)<J(\mathbf{z}_{T^c}),\quad\forall \mathbf{z}\in \mathcal{N}(\mathbf{A})\setminus\{\mathbf{0}\},~T:|T|\le k.
\end{equation}
where $\mathcal{N}(\mathbf{A})$ denotes the null space of $\mathbf{A}$.
\end{lemma}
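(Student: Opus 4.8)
The plan is to prove both directions of the equivalence by exploiting the metric structure induced by the sparseness measure $F$. The NSP condition $J(\mathbf{z}_T) < J(\mathbf{z}_{T^c})$ should be read as saying that, for any nonzero null-space vector $\mathbf{z}$, the ``mass'' of $J$ on any support set $T$ of size at most $k$ is strictly smaller than the mass on its complement. The connection to ERC comes from the observation that two feasible points $\mathbf{x}$ and $\mathbf{\bar{x}}$ of the noiseless problem~(\ref{minimization}) satisfy $\mathbf{A}\mathbf{x} = \mathbf{A}\mathbf{\bar{x}} = \mathbf{y}$ precisely when their difference $\mathbf{z} := \mathbf{x} - \mathbf{\bar{x}}$ lies in $\mathcal{N}(\mathbf{A})$. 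So the uniqueness of $\mathbf{\bar{x}}$ as a minimizer is equivalent to a statement about how $J$ behaves along the affine fiber $\mathbf{\bar{x}} + \mathcal{N}(\mathbf{A})$.

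For the sufficiency direction (NSP $\Rightarrow$ ERC), I would fix a $k$-sparse signal $\mathbf{\bar{x}}$ with support $T$, $|T| \le k$, and let $\mathbf{x}$ be any other feasible point, so $\mathbf{z} := \mathbf{x} - \mathbf{\bar{x}} \in \mathcal{N}(\mathbf{A}) \setminus \{\mathbf{0}\}$. I want to show $J(\mathbf{\bar{x}}) < J(\mathbf{x})$. The key estimate uses subadditivity of $F(|\cdot|)$ coordinate-wise. On $T^c$, since $\mathbf{\bar{x}}$ vanishes, we have $x(i) = z(i)$ for $i \in T^c$, giving $J(\mathbf{x}_{T^c}) = J(\mathbf{z}_{T^c})$. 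On $T$, subadditivity applied to $\bar{x}(i) = x(i) - z(i)$ yields $F(|\bar{x}(i)|) \le F(|x(i)|) + F(|z(i)|)$, hence $J(\mathbf{\bar{x}}_T) \le J(\mathbf{x}_T) + J(\mathbf{z}_T)$, i.e. $J(\mathbf{x}_T) \ge J(\mathbf{\bar{x}}_T) - J(\mathbf{z}_T)$. Adding the two pieces gives
\begin{equation}
J(\mathbf{x}) = J(\mathbf{x}_T) + J(\mathbf{x}_{T^c}) \ge J(\mathbf{\bar{x}}_T) - J(\mathbf{z}_T) + J(\mathbf{z}_{T^c}) > J(\mathbf{\bar{x}}),\nonumber
\end{equation}
where the strict inequality is exactly the NSP hypothesis $J(\mathbf{z}_{T^c}) > J(\mathbf{z}_T)$ together with $J(\mathbf{\bar{x}}) = J(\mathbf{\bar{x}}_T)$. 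This shows $\mathbf{\bar{x}}$ is the strict minimizer.

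For the necessity direction (ERC $\Rightarrow$ NSP), I would argue by contradiction: suppose there exist $\mathbf{z} \in \mathcal{N}(\mathbf{A}) \setminus \{\mathbf{0}\}$ and $T$ with $|T| \le k$ such that $J(\mathbf{z}_T) \ge J(\mathbf{z}_{T^c})$. The idea is to construct a $k$-sparse signal whose recovery fails. A natural candidate is to take $\mathbf{\bar{x}} := \mathbf{z}_T$ (the restriction of $\mathbf{z}$ to $T$, which is $k$-sparse) and compare it against $\mathbf{x} := -\mathbf{z}_{T^c}$, which is also feasible since $\mathbf{x} - \mathbf{\bar{x}} = -\mathbf{z} \in \mathcal{N}(\mathbf{A})$; then $J(\mathbf{x}) = J(\mathbf{z}_{T^c}) \le J(\mathbf{z}_T) = J(\mathbf{\bar{x}})$ contradicts uniqueness of $\mathbf{\bar{x}}$ (when the inequality is strict, or already breaks uniqueness when it is an equality). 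The main obstacle I anticipate is handling the boundary case $J(\mathbf{z}_T) = J(\mathbf{z}_{T^c})$ cleanly, since then one gets a tie rather than a strict violation, and one must invoke the uniqueness clause in the definition of ERC rather than mere optimality; care is also needed because the NSP in Lemma~\ref{nspcond} uses a strict inequality, so the contrapositive negation must be set up precisely. This is essentially the argument of \cite{Gribonval}, and the only delicate point is bookkeeping the strict-versus-nonstrict comparisons consistently across the two directions.
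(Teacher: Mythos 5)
Your proof is correct: the sufficiency direction via the split $J(\mathbf{x}) = J(\mathbf{x}_T) + J(\mathbf{x}_{T^c})$ with subadditivity on $T$, and the necessity direction via the pair $\mathbf{\bar{x}} = \mathbf{z}_T$, $\mathbf{x} = -\mathbf{z}_{T^c}$ (with the tie case $J(\mathbf{z}_T) = J(\mathbf{z}_{T^c})$ handled through the uniqueness clause of ERC), is exactly the standard argument. The paper itself gives no proof --- it cites the result as Lemma 6 of \cite{Gribonval} --- and your reconstruction matches that reference's approach, so there is nothing to correct.
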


It's useful to define the \emph{null space constant \cite{Gribonval}}, especially when one wants to study $\ell_p$-minimization or to compare it with $F$-minimization:
\begin{definition}[Null space constant, NSC]\label{de_nsc}
Suppose $F\in \mathcal{M},~q\in(0,1]$. Define the null space constant is defined as:
\begin{equation}\label{nsc}
\theta_{J}:=\sup_{\mathbf{z}\in\mathcal{N}(\mathbf{A})\setminus\{\mathbf{0}\}}
\max_{|T|\le k}\frac{J(\mathbf{z}_{T})}{J(\mathbf{z}_{T^c})}.
\end{equation}
In the same spirit, we denote by $\theta_{\ell_p}$ the null space constant associated with $\ell_p$ cost function.
\end{definition}

The null space constant is closely associated with NSP, and hence characterizes the performance of $F$-minimization. We have the following result, which is a direct consequence of Definition \ref{de_nsc} and Lemma \ref{nspcond}.
\begin{lemma}\label{nspcond2}~\\
1) $\theta_J\le 1$ is a necessary condition for ERC;\\
2) $\theta_J <1$ is a sufficient condition for ERC.
\end{lemma}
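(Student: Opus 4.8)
The plan is to obtain both claims by directly combining the null space property (Lemma \ref{nspcond}) with the definition of the null space constant $\theta_J$ (Definition \ref{de_nsc}); the only care needed is in tracking how the \emph{strict} inequality appearing in NSP translates into a bound on the \emph{supremum} defining $\theta_J$.

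For the sufficiency (part 2), I would assume $\theta_J < 1$ and aim to verify the NSP condition of Lemma \ref{nspcond}. Fix any $\mathbf{z} \in \mathcal{N}(\mathbf{A}) \setminus \{\mathbf{0}\}$ and any $T$ with $|T| \le k$. First I would rule out a vanishing denominator: if $J(\mathbf{z}_{T^c}) = 0$, then (since $F(x) = 0$ iff $x = 0$) the vector $\mathbf{z}$ is supported on $T$, so $J(\mathbf{z}_T) > 0$ and the corresponding ratio is infinite, contradicting $\theta_J < 1$. With $J(\mathbf{z}_{T^c}) > 0$ guaranteed, the defining bound $J(\mathbf{z}_T)/J(\mathbf{z}_{T^c}) \le \theta_J < 1$ rearranges to $J(\mathbf{z}_T) < J(\mathbf{z}_{T^c})$, which is exactly NSP. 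Hence ERC holds.

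For the necessity (part 1), I would run the argument in reverse: assuming ERC, Lemma \ref{nspcond} supplies $J(\mathbf{z}_T) < J(\mathbf{z}_{T^c})$ for all $\mathbf{z} \in \mathcal{N}(\mathbf{A}) \setminus \{\mathbf{0}\}$ and all $|T| \le k$. Each such inequality gives a ratio strictly below $1$, and taking the supremum over $\mathbf{z}$ together with the maximum over $T$ yields $\theta_J \le 1$.

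There is no substantive obstacle here, but the instructive point---and the reason parts 1 and 2 are not mirror images---is that a supremum of numbers each strictly less than $1$ may still equal $1$. Thus ERC only forces $\theta_J \le 1$, and the boundary value $\theta_J = 1$ is genuinely inconclusive: it is consistent both with every ratio staying below $1$ (supremum approached but unattained, so NSP and hence ERC hold) and with some ratio equaling $1$ (supremum attained, so $J(\mathbf{z}_T) = J(\mathbf{z}_{T^c})$ for some admissible pair, violating NSP). This gap is precisely why the sharp thresholds are stated asymmetrically as $\theta_J \le 1$ for necessity and $\theta_J < 1$ for sufficiency.
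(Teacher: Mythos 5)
Your proposal is correct and matches the paper's approach: the paper states this lemma as a ``direct consequence of Definition \ref{de_nsc} and Lemma \ref{nspcond}'' and omits the proof, and your write-up simply fills in exactly those details, including the correct handling of the degenerate case $J(\mathbf{z}_{T^c})=0$ and the accurate observation that a supremum of ratios each strictly below $1$ may equal $1$, which explains the asymmetry between the two parts.
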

In the case of $\ell_p$-minimization, one can obtain the following characterization (c.f.\cite{Foucart}), which is more exact than the case of $F$-minimization as described in Lemma \ref{nspcond2}:
\begin{lemma}\label{le1}
For $\ell_p$ cost functions, $\theta_{\ell_p}< 1$ is a both necessary and sufficient condition for ERC.
\end{lemma}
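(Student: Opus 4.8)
The sufficiency direction requires no new work: if $\theta_{\ell_p}<1$, then by Definition \ref{de_nsc} every ratio $J(\mathbf{z}_T)/J(\mathbf{z}_{T^c})$ is bounded above by $\theta_{\ell_p}<1$, so the null space inequality of Lemma \ref{nspcond} holds and ERC follows. This is exactly Lemma \ref{nspcond2}, part 2, specialized to the $\ell_p$ cost function. The entire burden of the proof therefore lies in the necessity direction, where I must \emph{upgrade} the weak necessary condition $\theta_{\ell_p}\le 1$ of Lemma \ref{nspcond2}, part 1, to the strict inequality $\theta_{\ell_p}<1$.

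The plan is to exploit the one feature distinguishing $\ell_p$ from a general sparseness measure, namely homogeneity: since $J(\lambda\mathbf{z})=|\lambda|^p J(\mathbf{z})$, each ratio $J(\mathbf{z}_T)/J(\mathbf{z}_{T^c})$ is invariant under scaling of $\mathbf{z}$. Consequently the supremum defining $\theta_{\ell_p}$ may be taken over the unit sphere $S:=\{\mathbf{z}\in\mathcal{N}(\mathbf{A}):\|\mathbf{z}\|_2=1\}$, which is compact because $\mathcal{N}(\mathbf{A})$ is finite-dimensional. Writing $g(\mathbf{z}):=\max_{|T|\le k} J(\mathbf{z}_T)/J(\mathbf{z}_{T^c})$, I then have $\theta_{\ell_p}=\sup_{\mathbf{z}\in S} g(\mathbf{z})$.

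The key step is to show this supremum is attained and is strictly below $1$. Assuming ERC, Lemma \ref{nspcond} gives $J(\mathbf{z}_T)<J(\mathbf{z}_{T^c})$ for every nonzero $\mathbf{z}\in\mathcal{N}(\mathbf{A})$ and every $|T|\le k$. First I would use this to rule out a vanishing denominator: if $\mathbf{z}_{T^c}=\mathbf{0}$ for some $\mathbf{z}\in S$ and $|T|\le k$, the inequality would force $J(\mathbf{z}_T)<0$, which is impossible; hence $J(\mathbf{z}_{T^c})>0$ on all of $S$, so each ratio is finite and continuous there, and $g$, being a maximum over the finitely many admissible subsets $T$ of continuous functions, is continuous on $S$ as well. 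Compactness of $S$ then guarantees a maximizer $\mathbf{z}^\ast\in S$ with $\theta_{\ell_p}=g(\mathbf{z}^\ast)=J(\mathbf{z}^\ast_{T_0})/J(\mathbf{z}^\ast_{T_0^c})$ for some $T_0$ with $|T_0|\le k$. Applying the strict null space inequality to this particular $\mathbf{z}^\ast$ and $T_0$ yields $g(\mathbf{z}^\ast)<1$, and therefore $\theta_{\ell_p}<1$.

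I expect the main obstacle to be precisely the passage from ``strict inequality for each fixed $\mathbf{z}$'' to ``strict inequality for the supremum''. A priori the supremum could equal $1$ while being attained by no vector, which is exactly the gap between Lemma \ref{nspcond2} and the sharper Lemma \ref{le1}. The argument closes this gap only because homogeneity lets me restrict to the compact sphere $S$ and because the null space property simultaneously supplies finiteness (no zero denominators) and continuity of $g$ there; care must be taken that both are invoked before the maximizer is claimed to exist. For a general $F\in\mathcal{M}$ neither homogeneity nor the attendant reduction to a compact set is available, which is consistent with the sharper conclusion being special to $\ell_p$.
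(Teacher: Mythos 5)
Your proof is correct. Note that the paper itself gives no proof of this lemma --- it states it with a pointer to \cite{Foucart} --- and your argument (use homogeneity of $\|\cdot\|_p^p$ to restrict the supremum to the compact unit sphere of $\mathcal{N}(\mathbf{A})$, use the strict NSP inequality to rule out vanishing denominators and obtain continuity of $g$, then extract a maximizer and apply strictness at it) is exactly the standard route taken in that reference. It is also the same homogeneity-plus-compactness device the paper deploys for a different purpose in Lemma \ref{le3} via Lemma \ref{cont}, where the supremum defining $\theta_{\ell_p}$ is likewise reduced to the compact sphere $S^{l-1}$; your closing remark that the argument breaks for general $F\in\mathcal{M}$ for lack of homogeneity is consistent with the paper's Counter-example \ref{ex1}, where ERC holds but the supremum equals $1$ without being attained.
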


\subsection{Preliminaries of the Grassmann Manifold}\label{pc}
In this subsection we briefly review some relevant properties of the Grassmann manifold. More detailed treatment of this subject and related concepts in differential topology can be found in many standard texts, such as \cite{boothby,milnor}. The main thrust for considering this object is that, by Lemma~\ref{nspcond} the property of exact recovery of a particular measurement matrix is completely determined by its null space $\mathcal{N}(\mathbf{A})$, which is an $l:=n-m$ dimensional linear subspace of $\mathbb{R}^n$ when $\mathbf{A}$ is of full rank.

Geometrically, the Grassmann manifold $G_l(\mathbb{R}^n)$ can be conceived as the collection of all the $l$ dimensional subspaces ($l$-planes) of $\mathbb{R}^n$. One can introduce a topology on $G_l(\mathbb{R}^n)$ by defining a metric on it: for arbitrary $\nu,\nu'\in G_l(\mathbb{R}^n)$, the distance between $\nu,\nu'$ can be defined as \cite{mattila}:
\begin{equation}\label{def_dis}
\dist(\nu,\nu'):=\|\mathbf{P}_{\nu}-\mathbf{P}_{\nu'}\|,
\end{equation}
where $\mathbf{P}_{\nu}$ (resp. $\mathbf{P}_{\nu'}$) is the projection matrix onto $\nu$ (resp. $\nu'$), and $\|\cdot\|$ denotes the spectral norm. The Grassmann manifold is then a compact metric space.

We shall next define the coordinates on $G_l(\mathbb{R}^n)$ to introduce its differential manifold structure. Let $F(n,l)$ be the set of all non-degenerate (invertible) $n\times l$ matrices, and let $\sim$ be the following equivalence relation:
If $\mathbf{X},\mathbf{Y}\in F(n,l)$,
then $\mathbf{X}\sim \mathbf{Y}$ if there is an $l\times l$ invertible matrix $\mathbf{V}$ such that $\mathbf{Y = XV}$.
Hence the Grassmann manifold can be defined as a quotient space $G_l(\mathbb{R}^n):=F(n,l)/\sim$, for which we denote by $\pi: F(n,l)\to G_l(\mathbb{R}^n)$ the associated natural projection. For any arbitrary collection of indices $1\le i_1<i_2<\dots<i_l\le n$, let $1\le \bar{i}_1<\bar{i}_2<\dots<\bar{i}_{n-l}\le n$ be the remaining indices.
Given an index set $I=\{i_1,i_2,\dots,i_l\}$, we denote by $\mathbf{X}_I$ the $l\times l$ sub-matrix formed by the rows of $\mathbf{X}$  indexed  by $I$. Define
\begin{align}
V_{I}:=&\{\mathbf{X}\in F(n,l)~|~\det \mathbf{X}_{I}\neq 0\},
\\
U_{I}:=&\pi(U_{I}).\label{e13}
\end{align}
Then $\{U_{I}\}$ constitutes an open covering of $G_l(\mathbb{R}^n)$. For $\mathbf{Y}\in\pi^{-1}(\nu)$,
where $\nu\in U_{I}$, the matrix $\mathbf{X}=\mathbf{Y}\mathbf{Y}_I^{-1}$ is an invariant of $\nu$, meaning that for any other $\tilde{\mathbf{Y}}\in \pi^{-1}(\nu)$, $\tilde{\mathbf{Y}}(\tilde{\mathbf{Y}}_I)^{-1}=\mathbf{X}$. Since $\mathbf{X}_I$ is the $l\times l$ identity matrix, $\mathbf{X}$ is determined by $\mathbf{X}_{I^c}$.
Define $\phi_{I}: U_{I}\to \mathbb{M}(n-l,l), v\mapsto\mathbf{X}_{I^c}$. We call each $(U_{I},\phi_{I})$ a \emph{chart}. Then $\{(U_{I},\phi_{I})~|~1\le i_1<\dots<i_l\le n\}$ forms an \emph{atlas} of $G_l(\mathbb{R}^n)$, meaning that $U_{I}$ covers $G_l(\mathbb{R}^n)$ and any two charts in this collection are \emph{$C^{\infty}$ compatible}.

Concepts such as open sets and interior are well-defined once a topology on $G_{l}(\mathbb{R}^n)$ has been unambiguously chosen. One might notice that there are possibly two topologies defined on $G_{l}(\mathbb{R}^n)$: the metric topology arising from the metric defined in (\ref{def_dis}), and the manifold topology (which is connected to the standard topology on $\mathbb{R}^{ml}$ by all the homeomorphisms $\{\phi_I\}$). Unsurprisingly these two topologies agree, since standard calculations would show that the metric on $U_I$ induced from the Euclidean metric on $\phi_I(U_I)$ is topologically equivalent to the metric defined in (\ref{def_dis}).

Further, since $G_{l}(\mathbb{R}^n)$ is a $C^\infty$ (therefore differentiable) manifold, the concept of measure zero set can be defined as follows:
\begin{definition}\label{def2}\cite[Definition 1.16]{milnor}
A subset $A$ of a differentiable manifold has \emph{measure zero} if $\phi(A\cap U)$ has Lebesgue measure zero for every chart $(U,\phi)$.
\end{definition}
\begin{remark}
The differentiability of the manifold ensures that the definition of the measure zero set is ``consistent'' for the various choices of $\phi$. In particular, to check that a set $A$ is of measure zero, one only needs to pick a collection of homeomorphisms $\{\phi_I\}_{I\in S}$ whose domains cover $U$, and check that $\phi_I(A\cap U_I)$ has measure zero for each $I\in S$.
\end{remark}

The Haar measure, denoted as $\mu$, is the unique probability measure on $G_l(\mathbb{R}^n)$ which is invariant with respect to the orthogonal group's action on $G_l(\mathbb{R}^n)$, that is, the action on the quotient space $G_l(\mathbb{R}^n)$ induced from the action of left multiplication of $n\times n$ orthogonal matrix on $F(n,l)$. The requirement that a set $A$ has zero Haar measure agrees with Definition \ref{def2} \cite{boothby}.
The Haar measure is of practical importance, since it coincides with the probability distribution of the null space of $\mathbf{A}$ when $\mb{A}$ is \emph{rotationally invariant}, that is, the distribution of $\mathbf{A}$ is the same as that of $\mathbf{A}\mb{Q}$ for any orthogonal matrix $\mb{Q}$. In particular, this is true when $\mb{A}$ is a (standard) Gaussian random matrix.

\section{The Relationship between ERC and RRC}\label{secrelation}
\subsection{Equivalence Lost: $\Omega_J^r=\interior(\Omega_J)$}\label{pa}
We have mentioned earlier that NSP is a necessary and sufficient condition for ERC. If $\mathbf{A}\in\mathbb{M}(m,n)$ is in a general position (i.e., the rows of $\mathbf{A}\in\mathbb{M}(m,n)$ are linearly independent), then $\mathbf{A}$ is of full rank, and $\mathcal{N}(\mathbf{A})$ is a $l$-dimensional subspace in $\mathbb{R}^n$ (recall that $l=n-m$). Therefore almost every measurement matrix (except for the set of $\mathbf{A}$'s not in a general position, which is of Lebesgue measure zero) corresponds to an element in $G_{l}(\mathbb{R}^n)$; and this element is sufficient to determine whether NSC, and therefore ERC, is satisfied. By Lemma~\ref{nspcond}, the set of null spaces such that ERC is satisfied is as follows:
\begin{align}\label{exactdef}
\Omega_{J}(n,k,l):=&\{\nu\in G_{l}(\mathbb{R}^n)~|~ J(\mathbf{z}_T)<J(\mathbf{z}_{T^c}),\forall \mathbf{z}\in \nu\setminus \{\mathbf{0}\},T:|T|\le k\}.
\end{align}
For simplicity we shall omit the arguments $n,k,l$ throughout this paper when there is no confusion. If two cost functions induced from the sparseness measures $F,G\in \mathcal{M}$ satisfy the following condition
\begin{equation}\label{e_comp}
\Omega_{J_G}\subseteq\Omega_{J_F},
\end{equation}
then ERC for $G$-minimization implies ERC for $F$-minimization, i.e., $F$ is a better sparseness measure than $G$ in the sense of ERC. In light of this we can describe and compare the performances of different sparseness measures in terms of ERC by a simple set inclusion relation like (\ref{e_comp}).

In Lemma \ref{nspcond}, the necessary and sufficient condition for exact recovery is fully characterized by the structure of the null space. Inspired by this fact we now provide a necessary and sufficient condition for robust recovery:
\begin{theorem}\label{suf2}
Consider the minimization problem in (\ref{minimization2}). Let $\sigma_{\min}$ and $\sigma_{\max}$ be the least and largest singular values of $\mathbf{A}^\top$, respectively. Then RRC holds with constant $C=\frac{2(1+d)}{d\sigma_{\min}}$ if there exists a $d>0$, such that for each $\mathbf{z}\in \mathcal{N}(\mathbf{A})\setminus \{\mathbf{0}\}$, $\mathbf{n}\in\mathbb{R}^n$, $T\subseteq\{1,...,n\}$ satisfying $\|\mathbf{n}\|<d\|\mathbf{z}\|$, and $|T|\le k$, we have the following:
\begin{equation}\label{eq16}
J(\mathbf{z}_T+\mathbf{n}_T)<J(\mathbf{z}_{T^c}+\mathbf{n}_{T^c}).
\end{equation}
Conversely, if RRC holds with $C=\frac{2(1-d)}{d\sigma_{\max}}$ for some $0<d<1$, then $\mathbf{z}\in \mathcal{N}(\mathbf{A})\setminus \{\mathbf{0}\}$, $\mathbf{n}\in\mathbb{R}^n$, $T\subseteq\{1,...,n\}$ satisfying $\|\mathbf{n}\|<d\|\mathbf{z}\|$, and $|T|\le k$, such that (\ref{eq16}) is true.
\end{theorem}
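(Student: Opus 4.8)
The plan is to recast the robustness statement as a bound on the error vector $\mathbf{h}:=\hat{\mathbf{x}}-\bar{\mathbf{x}}$ and to exploit the orthogonal splitting of $\mathbf{h}$ relative to the null space of $\mathbf{A}$. For the sufficiency (forward) direction I would fix a $k$-sparse $\bar{\mathbf{x}}$ with support $T$ and a feasible $\hat{\mathbf{x}}$ with $J(\hat{\mathbf{x}})\le J(\bar{\mathbf{x}})$, then write $\mathbf{h}=\mathbf{z}+\mathbf{w}$ with $\mathbf{z}\in\mathcal{N}(\mathbf{A})$ and $\mathbf{w}$ in the row space of $\mathbf{A}$. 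Feasibility together with $\|\mathbf{v}\|\le\epsilon$ forces $\|\mathbf{A}\mathbf{h}\|<2\epsilon$, and since $\mathbf{A}\mathbf{h}=\mathbf{A}\mathbf{w}$ with $\mathbf{w}$ in the row space, the least singular value of $\mathbf{A}^\top$ gives $\|\mathbf{w}\|\le\|\mathbf{A}\mathbf{w}\|/\sigma_{\min}<2\epsilon/\sigma_{\min}$. This controls the component of the error seen by the measurements.

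The remaining task is to bound $\|\mathbf{z}\|$, and here is where the hypothesis enters. Using that $\bar{\mathbf{x}}$ vanishes off $T$ and that $F(|\cdot|)$ is subadditive, the optimality inequality $J(\hat{\mathbf{x}})\le J(\bar{\mathbf{x}})$ collapses to $J(\mathbf{h}_{T^c})\le J(\mathbf{h}_T)$, that is, $J(\mathbf{z}_{T^c}+\mathbf{w}_{T^c})\le J(\mathbf{z}_T+\mathbf{w}_T)$. The crucial observation is that this is precisely the negation of (\ref{eq16}) with the admissible perturbation $\mathbf{n}=\mathbf{w}$. By the contrapositive of the hypothesis, the premise $\|\mathbf{w}\|<d\|\mathbf{z}\|$ must fail, so $\|\mathbf{w}\|\ge d\|\mathbf{z}\|$ (the case $\mathbf{z}=\mathbf{0}$ is immediate). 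Hence $\|\mathbf{z}\|\le\|\mathbf{w}\|/d<2\epsilon/(d\sigma_{\min})$, and the triangle inequality yields $\|\mathbf{h}\|\le\|\mathbf{z}\|+\|\mathbf{w}\|<2(1+d)\epsilon/(d\sigma_{\min})=C\epsilon$, which is RRC.

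For the converse (necessity) I would argue by contraposition: assuming (\ref{eq16}) fails for some admissible $(\mathbf{z},\mathbf{n},T)$ with $\mathbf{z}\neq\mathbf{0}$, I construct an explicit witness violating RRC. Take $\bar{\mathbf{x}}:=-(\mathbf{z}_T+\mathbf{n}_T)$, which is $k$-sparse, and $\hat{\mathbf{x}}:=\bar{\mathbf{x}}+\mathbf{z}+\mathbf{n}=\mathbf{z}_{T^c}+\mathbf{n}_{T^c}$; then the assumed failure $J(\mathbf{z}_T+\mathbf{n}_T)\ge J(\mathbf{z}_{T^c}+\mathbf{n}_{T^c})$ is exactly $J(\hat{\mathbf{x}})\le J(\bar{\mathbf{x}})$. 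Planting the noise $\mathbf{v}=\tfrac12\mathbf{A}\mathbf{n}$ and taking $\epsilon$ just above $\tfrac12\|\mathbf{A}\mathbf{n}\|$ makes $\hat{\mathbf{x}}$ feasible, since both $\|\mathbf{v}\|\le\epsilon$ and $\|\mathbf{A}\hat{\mathbf{x}}-\mathbf{y}\|=\tfrac12\|\mathbf{A}\mathbf{n}\|<\epsilon$. The error satisfies $\|\bar{\mathbf{x}}-\hat{\mathbf{x}}\|=\|\mathbf{z}+\mathbf{n}\|$, and combining $\|\mathbf{z}+\mathbf{n}\|>\tfrac{1-d}{d}\|\mathbf{n}\|$ (from $\|\mathbf{n}\|<d\|\mathbf{z}\|$) with $\|\mathbf{A}\mathbf{n}\|\le\sigma_{\max}\|\mathbf{n}\|$ shows $\|\mathbf{z}+\mathbf{n}\|$ strictly exceeds the infimal value $\tfrac{(1-d)}{d\sigma_{\max}}\|\mathbf{A}\mathbf{n}\|$ of $C\epsilon$; choosing an admissible $\epsilon$ below this bound gives $\|\bar{\mathbf{x}}-\hat{\mathbf{x}}\|\ge C\epsilon$, contradicting RRC.

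The singular-value estimates and the subadditivity manipulation are routine. The delicate part, and the main obstacle, is the constant bookkeeping in the converse: one must split the total admissible noise optimally between the planted vector $\mathbf{v}$ and the feasibility slack $\|\mathbf{A}\hat{\mathbf{x}}-\mathbf{y}\|$ (the balanced choice $\mathbf{v}=\tfrac12\mathbf{A}\mathbf{n}$ is what halves the effective $\epsilon$), while tracking carefully which inequalities are strict, since $C$ is governed by an infimum that is approached but not attained. This balancing is exactly what produces the asymmetry between the sufficiency constant $2(1+d)/(d\sigma_{\min})$ and the necessity constant $2(1-d)/(d\sigma_{\max})$, and ensuring that the correct singular value and the factor of two land in each direction is where most of the care is required.
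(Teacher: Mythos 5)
Your proposal is correct and takes essentially the same route as the paper's proof: the identical splitting of the error into null-space and row-space components, the same subadditivity collapse of $J(\hat{\mathbf{x}})\le J(\bar{\mathbf{x}})$ to $J(\mathbf{h}_{T^c})\le J(\mathbf{h}_T)$ followed by the contrapositive of (\ref{eq16}) and the singular-value bounds for sufficiency, and the identical witness construction for necessity (signals supported on $T$ and $T^c$ built from $\mathbf{u}=\mathbf{z}+\mathbf{n}$, with noise $\mathbf{v}=\pm\tfrac{1}{2}\mathbf{A}\mathbf{n}$ so that $2\epsilon\approx\|\mathbf{A}\mathbf{n}\|$). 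Your choice of $\epsilon$ strictly above $\tfrac{1}{2}\|\mathbf{A}\mathbf{n}\|$ is in fact slightly more careful than the paper, which sets $\epsilon=\|\mathbf{v}\|$ and asserts feasibility with equality despite the strict inequality in the constraint of (\ref{minimization2}).
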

\begin{proof}
See Appendix \ref{p_suf2}.
\end{proof}
\begin{remark}\label{rem1}
As will be clear later in \ref{s3c}, the coefficient
\begin{align}
\bar{d}:=\sup\{d:\textrm{~condition \eqref{eq16} holds}\}
\end{align}
has the geometric interpretation as the distance (measured in the $\sin$ of the principal angle) between $\nu$ and a cone \eqref{cone1} corresponding to $J$; or equivalently, the distance between $\nu$ and the non-ERC set $\Omega_J^c$. Such an interpretation and some asymptotic bounds on $\bar{d}$ will be further explored in Section~\ref{s3c}. Note that $\bar{d}>0$ is equivalent to RRC. As we will show in Counter-example~\ref{ex1} ahead, it is possible for some $(\mb{A},k,J)$ to satisfy ERC, but $\nu$ is ``on the edge'' and $\bar{d}=0$, so that RRC fails.
\end{remark}

An immediate corollary of Theorem~\ref{suf2} is the following, the proof of which is omitted:
\begin{corollary}
Consider the minimization problem in (\ref{minimization2}). The RRC holds \emph{if and only if} there exists a $d>0$, such that for each $\mathbf{z}\in \mathcal{N}(\mathbf{A})\setminus \{\mathbf{0}\}$, $\mathbf{n}\in\mathbb{R}^n$, $T\subseteq\{1,...,n\}$ satisfying $\|\mathbf{n}\|<d\|\mathbf{z}\|$, and $|T|\le k$, we have the following:
\begin{equation}\label{eq166}
J(\mathbf{z}_T+\mathbf{n}_T)<J(\mathbf{z}_{T^c}+\mathbf{n}_{T^c}).
\end{equation}
\end{corollary}

\begin{remark}
RRC easily implies ERC, as can be seen in their definitions (Letting $\mathbf{v}=\mathbf{0}$ in the definition of RRC would result in the definition of the ERC), as well as in Theorem~\ref{suf2} (Letting $\mathbf{n}=\mathbf{0}$).
\end{remark}

From Theorem \ref{suf2} it is clear that the property of robust recovery of a particular matrix is also completely determined by its null space. Moreover, it implies that the subset of $G_{l}(\mathbb{R}^n)$ that guarantees RRC is the following:
\begin{align}\label{robdef}
\Omega^r_J:=&\{\nu\in G_{l}(\mathbb{R}^n)~|~ \exists d>0,\textrm{s.t.}~J(\mathbf{z}_T+\mathbf{n}_T)<J(\mathbf{z}_{T^c}+\mathbf{n}_{T^c}),\forall \mathbf{z}\in \nu\setminus \{\mathbf{0}\}, \mathbf{n}:\|\mathbf{n}\|<d\|\mathbf{z}\|,T:|T|\le k\}.
\end{align}

It is not immediately clear from Lemma~\ref{nspcond} and Theorem~\ref{suf2} the connection between ERC and RRC. However there is a nice relation between these two conditions once taking a perspective from the point set topology:
\begin{theorem}\label{th2}
With the standard topology on $G_{l}(\mathbb{R}^n)$, the following relation holds.
\begin{equation}
\Omega^r_J=\interior(\Omega_J).
\end{equation}
\end{theorem}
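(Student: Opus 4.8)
The plan is to establish the two set inclusions $\Omega^r_J\subseteq\interior(\Omega_J)$ and $\interior(\Omega_J)\subseteq\Omega^r_J$ separately, in each case translating between the Grassmann metric $\dist(\nu,\nu')=\|\mathbf{P}_\nu-\mathbf{P}_{\nu'}\|$ and additive perturbations of individual vectors. The bridge between the two viewpoints is a pair of elementary estimates that I would isolate as a geometric lemma: (i) if $\dist(\nu,\nu')$ is small, then every nonzero $\mathbf{z}'\in\nu'$ is a small \emph{relative} perturbation of a nonzero vector of $\nu$, namely its orthogonal projection onto $\nu$; and (ii) conversely, a small relative perturbation of a nonzero vector of $\nu$ lies in some subspace $\nu'$ that is close to $\nu$. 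These are then fed into the characterizations \eqref{exactdef} of $\Omega_J$ (ERC via Lemma \ref{nspcond}) and \eqref{robdef} of $\Omega^r_J$ (RRC via Corollary \ref{corsuf2}).

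For $\Omega^r_J\subseteq\interior(\Omega_J)$, fix $\nu\in\Omega^r_J$ with robustness margin $d>0$. For any $\nu'$ with $\dist(\nu,\nu')<\delta$ and any $\mathbf{z}'\in\nu'\setminus\{\mathbf{0}\}$, set $\mathbf{z}:=\mathbf{P}_\nu\mathbf{z}'$ and $\mathbf{n}:=\mathbf{z}'-\mathbf{z}$. Since $\mathbf{z}'=\mathbf{P}_{\nu'}\mathbf{z}'$, we have $\mathbf{n}=(\mathbf{P}_{\nu'}-\mathbf{P}_\nu)\mathbf{z}'$, so $\|\mathbf{n}\|\le\dist(\nu,\nu')\|\mathbf{z}'\|$ and $\|\mathbf{z}\|\ge(1-\dist(\nu,\nu'))\|\mathbf{z}'\|$; consequently $\|\mathbf{n}\|\le\frac{\dist(\nu,\nu')}{1-\dist(\nu,\nu')}\|\mathbf{z}\|$ and $\mathbf{z}\neq\mathbf{0}$ whenever $\dist(\nu,\nu')<1$. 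Choosing $\delta:=d/(1+d)$ forces $\|\mathbf{n}\|<d\|\mathbf{z}\|$, so the defining inequality of $\Omega^r_J$ applies and yields $J(\mathbf{z}'_T)=J(\mathbf{z}_T+\mathbf{n}_T)<J(\mathbf{z}_{T^c}+\mathbf{n}_{T^c})=J(\mathbf{z}'_{T^c})$ for every $|T|\le k$. Hence the entire ball of radius $\delta$ about $\nu$ lies in $\Omega_J$, i.e. $\nu\in\interior(\Omega_J)$.

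For $\interior(\Omega_J)\subseteq\Omega^r_J$, fix $\nu\in\interior(\Omega_J)$ and choose $\delta>0$ with the $\delta$-ball about $\nu$ contained in $\Omega_J$. Given $\mathbf{z}\in\nu\setminus\{\mathbf{0}\}$ and $\mathbf{n}$ with $\|\mathbf{n}\|<d\|\mathbf{z}\|$, I would realize $\mathbf{w}:=\mathbf{z}+\mathbf{n}$ as a member of a nearby subspace and then invoke ERC there. Concretely, extend $\mathbf{z}/\|\mathbf{z}\|$ to an orthonormal basis $\{\mathbf{z}/\|\mathbf{z}\|,\mathbf{e}_2,\dots,\mathbf{e}_l\}$ of $\nu$, let $S:=\spn(\mathbf{e}_2,\dots,\mathbf{e}_l)$, and set $\nu':=\spn(\mathbf{w},S)$. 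Because $S\subseteq\nu'$, replacing $\mathbf{w}$ by its component orthogonal to $S$ leaves $\nu'$ unchanged, so $\nu$ and $\nu'$ share the factor $S$ and differ only in one line inside $S^\perp$; the single nonzero principal angle is the angle $\theta$ between $\mathbf{z}$ and the $S^\perp$-component of $\mathbf{w}$, whence $\dist(\nu,\nu')=\sin\theta\le\frac{\|\mathbf{n}\|}{\|\mathbf{z}\|-\|\mathbf{n}\|}<\frac{d}{1-d}$. Taking $d$ small enough that $\frac{d}{1-d}<\delta$ places $\nu'$ in the $\delta$-ball, hence in $\Omega_J$; since $\mathbf{w}\in\nu'\setminus\{\mathbf{0}\}$, ERC for $\nu'$ gives exactly $J(\mathbf{w}_T)<J(\mathbf{w}_{T^c})$, the inequality required in \eqref{robdef}. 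As this $d$ is uniform in $\mathbf{z}$ and $\mathbf{n}$, it follows that $\nu\in\Omega^r_J$.

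The hard part will be the quantitative geometry behind step (ii): verifying that the basis-swap subspace $\nu'$ genuinely lies within distance $\delta$ of $\nu$. The clean route is the reduction noted above — projecting the perturbation onto $S^\perp$ so that $\nu$ and $\nu'$ become orthogonal direct sums over the common factor $S$, after which $\dist(\nu,\nu')$ collapses to the sine of a single angle between two lines and is controlled by $\|\mathbf{n}\|/\|\mathbf{z}\|$. I would also record at the outset that the metric and manifold topologies on $G_l(\mathbb{R}^n)$ coincide (as already remarked in the excerpt), so that ``$\interior$'' is unambiguous and the balls used above are honest neighborhoods. Notably, no homogeneity or scaling property of $F$ is needed anywhere, since both directions manipulate vectors and subspaces directly rather than normalizing $J$.
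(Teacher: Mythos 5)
Your proof is correct and follows essentially the same route as the paper's: your basis-swap subspace $\nu'=\spn(\mathbf{w},S)$ with its single-principal-angle estimate is exactly the construction of Lemma~\ref{lmth2}, and your projection $\mathbf{z}=\mathbf{P}_{\nu}\mathbf{z}'$ with the radius $d/(1+d)$ mirrors the converse half of the proof of Theorem~\ref{th22}. The only cosmetic differences are that you argue directly from the set definitions (\ref{exactdef}) and (\ref{robdef}) instead of routing through the quantitative RRC constants of Theorems~\ref{suf2} and~\ref{th22}, and your angle bound $\|\mathbf{n}\|/(\|\mathbf{z}\|-\|\mathbf{n}\|)$ is slightly weaker than the paper's $\|\mathbf{n}\|/\|\mathbf{z}\|$, which is harmless for the purely qualitative statement of Theorem~\ref{th2}.
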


\begin{proof}
See Appendix \ref{p_th2}.
\end{proof}
Two questions then arise: are the conditions ERC and RRC equivalent for generic cost functions? If not, how much do they differ from each other?
We shall first address the former question in the remainder of Section~\ref{pa}, while the second question will be discussed in Section~\ref{s3b}. In the special case of $\ell_p$-minimization,
these two conditions are indeed equivalent \cite{lqnsp},
as discussed in the introduction. In view of Theorem \ref{th2},
we can show this result by simply proving that $\Omega_{\ell_p}$ is an open set.

\begin{corollary}\label{th5}
If $0<p\le 1$, then $\Omega_{\ell_p}$ is open, hence $\Omega_{\ell_p}^r=\Omega_{\ell_p}$.
\end{corollary}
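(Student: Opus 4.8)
The plan is to collapse the entire statement to a single topological fact. By Theorem~\ref{th2} we already have $\Omega_{\ell_p}^r=\interior(\Omega_{\ell_p})$, so the asserted equality $\Omega_{\ell_p}^r=\Omega_{\ell_p}$ follows the instant we know that $\Omega_{\ell_p}$ is open (an open set equals its own interior). Thus the whole corollary reduces to showing \emph{$\Omega_{\ell_p}$ is open in $G_l(\mathbb{R}^n)$}. Note that $0<p\le1$ enters only to guarantee $\ell_p^p\in\mathcal{M}$, which is what makes the earlier machinery (Lemma~\ref{nspcond} and Theorem~\ref{th2}) applicable; the openness argument itself will use nothing about $p$ beyond $p>0$.

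First I would encode membership in $\Omega_{\ell_p}$ through one scalar function on the Grassmannian. Writing $J(\mathbf{x})=\|\mathbf{x}\|_p^p$, set
\[
f(\mathbf{z}):=\max_{|T|\le k}\bigl(\|\mathbf{z}_T\|_p^p-\|\mathbf{z}_{T^c}\|_p^p\bigr),
\]
which is continuous on $\mathbb{R}^n$ as a finite maximum of continuous functions (using $p>0$), and define
\[
h(\nu):=\max_{\mathbf{z}\in\nu,\ \|\mathbf{z}\|_2=1} f(\mathbf{z}),
\]
the maximum being attained because the unit sphere of the finite-dimensional space $\nu$ is compact. Since $\|\cdot\|_p^p$ is homogeneous of degree $p$, the sign of $f$ is scale-invariant, so the defining NSP inequalities of $\Omega_{\ell_p}$ in (\ref{exactdef}) hold for all $\mathbf{z}\in\nu\setminus\{\mathbf{0}\}$ if and only if $f<0$ on the whole unit sphere of $\nu$, i.e. iff $h(\nu)<0$ (here compactness upgrades the pointwise strict inequalities to a strict uniform gap). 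Hence $\Omega_{\ell_p}=h^{-1}\bigl((-\infty,0)\bigr)$, and openness will follow from upper semicontinuity of $h$, which is exactly what makes a sublevel set $\{h<0\}$ open.

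Next I would prove that $h$ is upper semicontinuous. Fix $\nu$ and a sequence $\nu_j\to\nu$, i.e. $\|\mathbf{P}_{\nu_j}-\mathbf{P}_{\nu}\|\to0$ in the metric (\ref{def_dis}). Choose unit maximizers $\mathbf{z}_j\in\nu_j$ with $f(\mathbf{z}_j)=h(\nu_j)$; since the $\mathbf{z}_j$ lie in the compact unit sphere of $\mathbb{R}^n$, pass to a subsequence with $\mathbf{z}_j\to\mathbf{z}^*$, $\|\mathbf{z}^*\|_2=1$. The crucial point is that $\mathbf{z}^*\in\nu$: from $\mathbf{z}_j\in\nu_j$ we have $\mathbf{P}_{\nu_j}\mathbf{z}_j=\mathbf{z}_j$, and letting $j\to\infty$ with $\mathbf{P}_{\nu_j}\to\mathbf{P}_{\nu}$ gives $\mathbf{P}_{\nu}\mathbf{z}^*=\mathbf{z}^*$, so $\mathbf{z}^*\in\nu$. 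Then $\limsup_j h(\nu_j)=\lim_j f(\mathbf{z}_j)=f(\mathbf{z}^*)\le h(\nu)$ by continuity of $f$ and the definition of $h$. This is upper semicontinuity, hence $\Omega_{\ell_p}$ is open, and combined with Theorem~\ref{th2} the corollary is proved. (The reverse inequality, giving full continuity of $h$, can be had by projecting a maximizer of $h(\nu)$ onto $\nu_j$ via $\mathbf{P}_{\nu_j}$ and renormalizing, but it is not needed here.)

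The main obstacle is the bridge between the \emph{topology} of the Grassmannian, phrased through the operator norm of the difference of projections, and \emph{convergence of vectors} living inside the moving subspaces. Everything hinges on the implication that if $\|\mathbf{P}_{\nu_j}-\mathbf{P}_{\nu}\|\to0$ and $\mathbf{z}_j\in\nu_j$ with $\mathbf{z}_j\to\mathbf{z}^*$, then $\mathbf{z}^*\in\nu$; the identity $\mathbf{P}_{\nu_j}\mathbf{z}_j=\mathbf{z}_j$ is the lever that certifies a limit of vectors drawn from varying null spaces as belonging to the limit null space. A secondary point to handle carefully is the homogeneity reduction, where degree-$p$ homogeneity together with compactness of the unit sphere of $\nu$ is needed to turn the strict pointwise NSP inequalities over the noncompact set $\nu\setminus\{\mathbf{0}\}$ into the single strict condition $h(\nu)<0$.
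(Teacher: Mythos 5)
Your proof is correct, and it reaches openness by a route that parallels the paper's in strategy but differs in execution. The paper's proof (Appendix C) also reduces everything to openness via Theorem~\ref{th2}, but it works through the null space constant: in each chart $(U_I,\phi_I)$ it parametrizes $\nu$ by a matrix $\mathbf{X}$, invokes a general lemma (Lemma~\ref{cont}) stating that a supremum of a jointly continuous function over a \emph{compact} parameter space is continuous, and---using $p$-homogeneity to shrink the parameter space to $S^{l-1}$---concludes that $\theta_{\ell_p}$ is continuous on $G_l(\mathbb{R}^n)$, whence $\Omega_{\ell_p}=\theta_{\ell_p}^{-1}\bigl((-\infty,1)\bigr)$ is open by Lemma~\ref{le1}. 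You instead stay intrinsic to the projection metric (\ref{def_dis}): you replace the ratio-form constant by the difference-form functional $h(\nu)=\max_{\mathbf{z}\in\nu,\ \|\mathbf{z}\|_2=1}f(\mathbf{z})$ and prove only upper semicontinuity, which is all a sublevel set $\{h<0\}$ needs, via the sequential argument whose pivot is passing the identity $\mathbf{P}_{\nu_j}\mathbf{z}_j=\mathbf{z}_j$ to the limit. This buys you a more elementary and self-contained argument: no charts, no auxiliary lemma on parametrized suprema, no detour through Lemma~\ref{le1}, and no extended-real-line bookkeeping (the paper's ratio $J(\mathbf{z}_T)/J(\mathbf{z}_{T^c})$ can be $+\infty$ when $\mathbf{z}_{T^c}=\mathbf{0}$, forcing $\overline{\mathbb{R}}$-valued functions, an issue your difference form avoids). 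What the paper's heavier route buys is full continuity of $\theta_{\ell_p}$, which it reuses later: in Lemma~\ref{cl}, closedness of $\{\nu\mid\theta_{\ell_q}\le 1\}$ feeds into Theorem~\ref{th1}, and your usc alone would not yield that closedness---though your parenthetical remark about projecting a maximizer of $h(\nu)$ onto $\nu_j$ and renormalizing would indeed supply the missing lower semicontinuity. One cosmetic point: strictly, you should first pass to a subsequence of $\nu_j$ realizing $\limsup_j h(\nu_j)$ before extracting the convergent maximizers $\mathbf{z}_j\to\mathbf{z}^*$; this is standard bookkeeping, not a gap.
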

\begin{remark}
The statement of $\Omega_{\ell_p}^r=\Omega_{\ell_p}$ itself is essentially ``non-topological'', since it does not involve any topological concepts such as open sets; hence it is interesting that there is a simple topological proof of this result. A comparison of different proof methods can be found in Section~\ref{s5a}.
\end{remark}

Next, we shall show an example in which RRC is strictly stronger than ERC, i.e., $\Omega^{r}_J\varsubsetneqq\Omega_J$.
\begin{counter-example}\label{ex1}
The function
\begin{equation}\label{fex1}
F(t):=t+1-\textrm{e}^{-t}
\end{equation}
defined on $[0,+\infty)$ is a spareness measure. Suppose that $x,y>0,~z=x+y$, $k=1$, and that the null space of the measurement matrix is the following one dimensional linear subspace of $\mathbb{R}^3$
\begin{equation}
\mathcal{N}:=\mathbb{R}(x,y,z)^\top,
\end{equation} Conclusion: in this setting ERC is satisfied, but not RRC.
\end{counter-example}

The cost function in (\ref{fex1}) has two salient properties: strict subadditivity (i.e., $F(x)+F(y)>F(x+y)$ for $x,y>0$) and the existence of a derivative at the origin. In appendix \ref{appex} we shall prove the assertions in the Counter-example using these properties.

\subsection{Equivalence Regained: $\Omega_J\setminus\Omega_J^r$ is zero measure and meagre}\label{s3b}
While strict equivalence of ERC and RRC is lost when passing from $\ell_p$ cost functions to generic sparseness measures, as demonstrated in Counter-example \ref{ex1},
we will show in this subsection that the difference is negligible in some formal senses,
at least for non-decreasing sparseness measures. First we take a closer look at Counter-example \ref{ex1}. Using the subadditivity property and the Taylor expansion of $F$ at the origin,
one can explicitly write out:
\begin{equation}\label{om1}
\Omega_J=\left\{[x_1,x_2,x_3]:2\max_{i=1,2,3}|x_i|\le\sum_{i=1,2,3}|x_i|\right\},
\end{equation}
and
\begin{equation}\label{om2}
\Omega_J^{r}=\left\{[x_1,x_2,x_3]:2\max_{i=1,2,3}|x_i|<\sum_{i=1,2,3}|x_i|\right\}.
\end{equation}
We recall that $\mu$ denotes the Haar measure on $G_{l}(\mathbb{R}^n)$. From (\ref{om1}) and (\ref{om2}) it is intuitively clear in this simple case that $\mu(\Omega_J)=\mu(\Omega_J^{r})$,
i.e. the set of null spaces satisfying ERC and the set of null spaces satisfying RRC differ at most by a set of measure zero.
Recall that the Haar measure agrees with the distribution of the null space of an Gaussian random matrix,
as described in Section~\ref{pc}. This means that if $\mathbf{A}$ is a Gaussian random matrix, then the probability of ERC and RRC are the same,
even though the former is implied by the latter.

The general case tends to be much more complicated. Indeed, there exists an Euclidean set $A$ such that $\mu(\interior(A))<\mu(A)$, so the relation $\Omega^r_J=\interior(\Omega_J)$ alone by no means imply that $\mu(\Omega^r_J)=\mu(\Omega_J)$.
In fact, it is not true in general, as we shall see in Counter-example~\ref{counter2}.

However, the set $\Omega_J\setminus \Omega_J^r$ is still guaranteed to be ``small'' if we assume in addition that $F$ is non-decreasing. The smallness may be described in two distinct senses, namely the measure and Baire category. A measure zero set is of course negligible since its corresponding probability is zero. On the other hand, Baire category has nothing to do with the probability; but it is a purely topological concept, so it's worth pointing out the smallness in this sense given the topic of this paper. A set is said to be of \emph{first category} (or \emph{meagre}) if it is a countable union of nowhere dense sets, which are defined as sets whose closures have empty interiors.
\begin{theorem}\label{probeq}
Suppose $F\in\mathcal{M}$ is a non-decreasing function, then $\overline{\Omega}_J\setminus\interior(\Omega_J)$ is zero measure and of the first category.\footnote{Here the notation ``$\setminus$'' denotes the set minus and $\overline{\Omega}_J$ denotes the closure of $\Omega_J$.}
\end{theorem}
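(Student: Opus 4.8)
The plan is to prove the measure-zero assertion first and then obtain the first-category statement essentially for free. The key preliminary observation is that $\partial\Omega_J:=\overline{\Omega}_J\setminus\interior(\Omega_J)$ is a \emph{closed} set, being the difference of the closed set $\overline{\Omega}_J$ and the open set $\interior(\Omega_J)$. Since the normalized Haar measure $\mu$ has full support on the compact homogeneous space $G_l(\mathbb{R}^n)$ (every nonempty open set has positive measure), any $\mu$-null set has empty interior; applied to the closed set $\partial\Omega_J$, this shows that once $\mu(\partial\Omega_J)=0$ is established, $\partial\Omega_J$ is automatically nowhere dense, hence of the first category. Thus the whole burden is to show $\mu(\partial\Omega_J)=0$.

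To this end I would localize to the charts of Section \ref{pc}. The atlas $\{(U_I,\phi_I)\}$ is finite, and by Definition \ref{def2} (together with the fact, noted in Section \ref{pc}, that zero Haar measure agrees with this notion), it suffices to prove that $\phi_I(\partial\Omega_J\cap U_I)$ has Lebesgue measure zero in $\mathbb{M}(n-l,l)\cong\mathbb{R}^{ml}$ for each $I$. In the chart, $\nu$ is represented by $\mathbf{M}=\phi_I(\nu)$, and every $\mathbf{z}\in\nu$ has the form $\mathbf{z}_I=\mathbf{w}$, $\mathbf{z}_{I^c}=\mathbf{M}\mathbf{w}$ for $\mathbf{w}\in\mathbb{R}^l$. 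Because $F$ is non-decreasing, for any fixed $\mathbf{z}$ the set $T$ maximizing $J(\mathbf{z}_T)$ consists of the $k$ coordinates of largest magnitude, so the NSP of Lemma \ref{nspcond} is equivalent to $\psi(\mathbf{z})<0$ for all $\mathbf{z}\in\nu\setminus\{\mathbf{0}\}$, where
\begin{equation}
\psi(\mathbf{z}):=\max_{|T|\le k}\bigl(J(\mathbf{z}_T)-J(\mathbf{z}_{T^c})\bigr)
\end{equation}
is continuous and even on $\mathbb{R}^n$. Using $\nu\in\Omega_J\iff\psi<0$ on $\nu\setminus\{\mathbf{0}\}$ and Theorem \ref{th2}, one identifies $\partial\Omega_J$ with the critical subspaces: those on which $\psi<0$ holds pointwise but only marginally, together with the limiting subspaces in $\overline{\Omega}_J\setminus\Omega_J$ along which some direction attains $\psi=0$.

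The main step, and the principal obstacle, is to show this critical locus is Lebesgue-null in each chart. I would foliate $\mathbb{R}^{ml}\setminus\{\mathbf{0}\}$ into rays $r\mapsto r\mathbf{N}$ ($r>0$, $\mathbf{N}\in S^{ml-1}$), noting that the scaling $\mathbf{M}\mapsto r\mathbf{M}$ corresponds to applying $\mathrm{diag}(\mathbf{I}_l,r\mathbf{I}_{n-l})$ (in the $I,I^c$ blocks) to $\nu$, so each ray is a one-parameter orbit of subspaces $\nu(r)$ along which the $I$-indexed coordinates are frozen while the $I^c$-indexed ones are dilated. Exploiting the monotonicity of $F$ and the compactness of the unit sphere of directions in $\nu$, I would argue that for almost every $\mathbf{N}$ the membership function $r\mapsto\mathbf{1}_{\Omega_J}(\nu(r))$ changes value only on a relatively closed, nowhere-dense, Lebesgue-null set of $r$, so that each such ray meets $\partial\Omega_J$ in a one-dimensional null set. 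A polar-coordinates Fubini argument (the map $(r,\mathbf{N})\mapsto r\mathbf{N}$ has Jacobian $r^{ml-1}$, smooth and nonvanishing for $r>0$) then gives $\mu(\partial\Omega_J\cap U_I)=0$, and summing over the finitely many charts yields $\mu(\partial\Omega_J)=0$.

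The hard part is precisely this per-ray thinness estimate: because $F$ is only assumed non-decreasing and subadditive (not differentiable, and, as the footnote to Definition \ref{def1} stresses, $F(x)/x$ need not even be monotone), one cannot invoke the implicit function theorem to cut out the critical parameters as a smooth hypersurface. Instead I expect to control the transition values of $r$ through semicontinuity and a uniform compactness argument: the scalar $\beta(r,\mathbf{u},s):=\psi\bigl(s\,\mathbf{b}(r,\mathbf{u})\bigr)$, with $\mathbf{b}(r,\mathbf{u})=(\mathbf{u};\,r\,\mathbf{N}\mathbf{u})$ and $T$ chosen by magnitude, depends monotonically on the dilated $I^c$-block, which should confine the critical $r$ to a null set once the supremum over the compact direction sphere $\{\mathbf{u}\in S^{l-1}\}$ and over the free scale $s>0$ is controlled. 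Establishing that this supremum is well enough behaved to exclude a positive-measure set of critical $r$ on almost every ray is the crux; the reduction to charts and the deduction of first category from measure zero are, by contrast, routine.
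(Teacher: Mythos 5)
Your reduction of the category claim is correct and even cleaner than the paper's route: $\overline{\Omega}_J\setminus\interior(\Omega_J)$ is closed, and since the Haar measure has full support on $G_l(\mathbb{R}^n)$, a closed null set is nowhere dense, so first category really does follow from measure zero here. But the measure-zero claim is the entire content of the theorem, and there your proposal has a genuine gap, which you yourself flag as the ``crux'': the per-ray thinness estimate is never proved, and the monotone structure you hope to extract from the radial scaling $\mathbf{M}\mapsto r\mathbf{M}$ does not exist. Dilating the whole $I^c$-block by $r$ increases $F(|z_i|)$ for \emph{every} $i\in I^c$, hence inflates both $J(\mathbf{z}_{T\cap I^c})$ and $J(\mathbf{z}_{T^c\cap I^c})$ simultaneously; unless $T\supseteq I^c$ (which forces $k\ge m$), the sign of $J(\mathbf{z}_T)-J(\mathbf{z}_{T^c})$ is not monotone in $r$, so nothing confines the transition values of $r$ to a null set, and the polar-coordinates Fubini argument has no engine. (There is also a conflation: a point of $\partial\Omega_J$ lying on a ray need not be a transition point of $r\mapsto\mathbf{1}_{\Omega_J}(\nu(r))$ along that ray, so even a per-ray transition estimate would not directly bound the ray-sections of $\partial\Omega_J$.) The monotone directions that actually work treat $T$ and $T^c$ asymmetrically --- increase the magnitudes of the $T$-coordinates while decreasing those of the $T^c$-coordinates of a single bad direction --- and this is exactly what the paper implements: it extracts a limiting unit vector $\mathbf{x}$ in the subspace from a sequence witnessing failure of NSP, passes to an adapted chart whose basis matrix $\mathbf{B}$ has $\mathbf{x}$ as first column, and shows that the open set $V$ of chart matrices whose first column exceeds $|x_{i+l}|$ on $T$-positions and is below it on $T^c$-positions avoids $\overline{\Omega}_T$, giving porosity (hence measure zero and first category in one stroke) at every boundary point.

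A secondary but real flaw: you invoke continuity of $\psi(\mathbf{z})=\max_{|T|\le k}\bigl(J(\mathbf{z}_T)-J(\mathbf{z}_{T^c})\bigr)$ to identify $\partial\Omega_J$ with a critical locus, but a non-decreasing $F\in\mathcal{M}$ need not be continuous --- $F(x)=\mathbf{1}_{x>0}$ is subadditive, non-decreasing, vanishes exactly at $0$, and yields $J=\|\cdot\|_0$, for which $\psi$ is badly discontinuous. The paper's porosity argument deliberately uses only the monotonicity of $F$ (compare the constructed vector coordinatewise against the witnessing sequence $\bar{\mathbf{z}}^l$) and never needs continuity, which is why it covers the full hypothesis of the theorem. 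To repair your approach you would need both to drop the continuity of $\psi$ and to replace the radial foliation by a family of curves along which $T$- and $T^c$-coordinates move in opposite senses --- at which point you are essentially reconstructing the paper's local construction.
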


The technical proof of this general result will be given in Appendix~\ref{p_probeq}. In the following we discuss the intuition behind the monotonicity assumption in the theorem through a specific low dimensional example, which does not require any background in measure theory.

We first make some comments on the notations. In the remainder of this subsection we always consider the set $\Omega_J$ associated with a particular $J$, so we shall just write the set as $\Omega$ for brevity. Define the set $\Omega_T$ for each $T:|T|\le k$ as follows:
\begin{align}\label{ot}
\Omega_T:=\{\nu\in G_{l}(\mathbb{R}^n)~|~J(\mathbf{z}_T)<J(\mathbf{z}_{T^c}),\forall \mathbf{z}\in \nu\setminus \{\mathbf{0}\}\},
\end{align}
hence $\Omega=\bigcap_{T:|T|=k}\Omega_T$. Note that this notation is not to be confused with $\Omega_J$ or $\Omega_{\ell_p}$.

Next we shall make some preparatory observations. Notice that
\begin{align}
\overline{\Omega}\setminus\interior(\Omega)
&=\overline{\bigcap_{T:|T|=k}\Omega_T} \setminus \interior(\bigcap_{T:|T|=k}\Omega_T)\nonumber\\
&\subseteq \bigcap_{T:|T|=k}\overline{\Omega}_T\setminus\bigcap_{T:|T|=k}\interior(\Omega_T)\label{explain1}\\
&\subseteq \bigcup_{T:|T|=k}(\overline{\Omega}_T \setminus\interior(\Omega_T)),
\end{align}
where (\ref{explain1}) is because $\overline{\bigcap_{T:|T|=k}\Omega_T}\subseteq \bigcap_{T:|T|=k}\overline{\Omega}_T$ and $\interior(\bigcap_{T:|T|=k}\Omega_T)=\bigcap_{T:|T|=k}\interior(\Omega_T)$. Also, define
\begin{align}\label{defs}
S&=\{\nu\in G_{l}(\mathbb{R}^n)~|~ \textrm{$\forall \mathbf{x}\in \nu$ has at most $l-1$ zero entries}\}\nonumber\\
&=
\bigcap_{
\begin{subarray}{c}
                I\subseteq\{1,\dots,n\}\\
                 |I|=l
               \end{subarray}
}U_I,
\end{align}
where $U_I$ was defined in \eqref{e13}. Clearly $U_I^c$ is of measure zero and of the first category for each $I$, so is $S^c$, a finite union of them. Therefore, we only need to show that $[\overline{\Omega}_T\setminus\interior(\Omega_T)]\cap S$ is of measure zero and of the first category for each $T:|T|\le k$. Since $\phi_{I}$ preserves measure zero sets and the topology, we in turn only need to show that the Euclidean set
\begin{align}\label{27}
\phi_{I}([\overline{\Omega}_T\setminus\interior(\Omega_T)]\cap S)
\end{align}
is of measure zero and of the first category for some fixed $I$ and for every $T:|T|\le k$.

Now we are ready to prove that $\overline{\Omega}_T\setminus\interior(\Omega_T)$ is of measure zero in the special case of $n=3,k=1,m=1,T=\{3\}$, and this proof will demonstrate some basic ideas behind the proof of general case in Appendix~\ref{p_probeq}. It is enough to show that $\phi_I(\overline{\Omega}_T\setminus\interior(\Omega_T)\cap U_I)$ is of measure zero (as a subset of $\mathbb{R}^2$) for $I=\{1,2\}$. For an arbitrary $\nu\in G_2(\mathbb{R}^3)$, define $(a,b):=\phi_I(\nu\in\overline{\Omega}_T\setminus\interior(\Omega_T)\cap U_I)$. Then $\nu$ is the subspace spanned by the columns of the matrix $\left(
                                                 \begin{array}{cc}
                                                   1 & 0 \\
                                                   0 & 1 \\
                                                   a & b \\
                                                 \end{array}
                                               \right)
$, so $\nu\in U_I\setminus\Omega_T$ if and only if
\begin{align}\label{lowdim}
F(x)+F(y)< F(ax+by),~\exists~(x,y)\in \mathbb{R}^2.
\end{align}
If $F$ is a non-decreasing subadditive sparseness measure, from the subadditivity it is easily seen that (\ref{lowdim}) holds if and only if $|a|>1$ or $|b|>1$, and the result easily follows. However we will a prove for the case where $F$ is not necessarily subadditive (while possessing other properties of sparseness measures and being non-increasing), because the idea of this proof will hint on the idea of the general result in theorem \ref{probeq}. By symmetry, we first note that it suffices to consider the region where $a,b\ge0$, in which case $\nu\in U_I\setminus\Omega_T$ if and only if
\begin{align}\label{lowdim1}
F(x)+F(y)< F(ax+by),~\exists~x,y\ge 0.
\end{align}
Let's call the set of all $(a,b)$ satisfying (\ref{lowdim1}) the region A, and its complement in $[0,\infty)^2$ the region B. Then the task is just to show that the boundary between A and B has measure zero (by boundary we mean a point belonging to the closures of both region A and B). This is not always true when $A$ is an arbitrary subset of $[0,\infty)^2$. But since $F$ is non-decreasing, from (\ref{lowdim1}) we deduce the following important property:
\begin{center}
(P) If $(a,b)$ is in region A, then for any $a_+\ge a$ and $b_+\ge b$, the point $(a_+,b_+)$ is also in region~A.
\end{center}
Also, notice that the points $(1,0)$ and $(0,1)$ are on the boundary of $A$ and $B$, and from (P) it is easy to see that the boundary is a subset of $[0,1]^2$. Therefore the boundary of $A$ and $B$ looks like the curve depicted in Figure~\ref{fig1} (but we don't actually need a notion of ``curve'' for this proof.) To measure the area of the boundary, divide $[0,1]^2$ into $m$ rows and $n$ columns uniformly, so that $[0,1]^2$ is covered by small $1/m$ by $1/n$ (closed) rectangles. According to (P), there are only three possibilities concerning the vertices of a rectangle:
\begin{flushleft}
1) Both its upper right and lower left vertices belong to $A$;\\
2) It upper right vertex belongs to $A$ and lower left vertex belongs to $B$;\\
3) Both its upper right and lower left vertices belong to $B$.
\end{flushleft}
Clearly, the union of rectangles of type 1), 2) is a closed set containing $A$, so it also contains $\overline{A}$. Similarly, the union of rectangles of type 2), 3) contains $\overline{B}$. Therefore the boundary set, $\overline{A}\cap\overline{B}$ is contained in the intersection of these two unions, whose measure is total area of type 2) rectangles. However, property (P) implies that the number of type 2) rectangles is at most $m+n$ (one way of seeing this is to note that for any two adjacent columns, the rows for which the rectangles are colored have at most one overlap). Therefore total area of type 2) rectangles is at most $(m+n)/mn$ which converges to zero as $m,n\to\infty$. Thus the measure of $\overline{A}\cap\overline{B}$ must be zero.

Readers familiar with fractal geometry may also realize that (P) implies that $\overline{A}\cap\overline{B}$ is actually a porous set, meaning that there exists $0<\alpha<1$ and $r_0>0$ such that for any $0<r<r_0$ and $(a,b)\in \overline{A}\cap\overline{B}$, there is some $(a',b')\in [0,\infty)^2$ such that the ball centered at $(a',b')$ with radius $\alpha r$ is a subset of the ball centered at $(a,b)$ with radius $r$. In our example we can choose, say, $a'=a+r/100, b'=b+r/100$, and $\alpha=1/200$. A porous Euclidean set is necessarily of measure zero and of the first category. In higher dimensions, the idea of proof is again based on porosity, although the construction is more complicated than in this low dimensional example.
\begin{figure}
  \centering
  \includegraphics[width=3in]{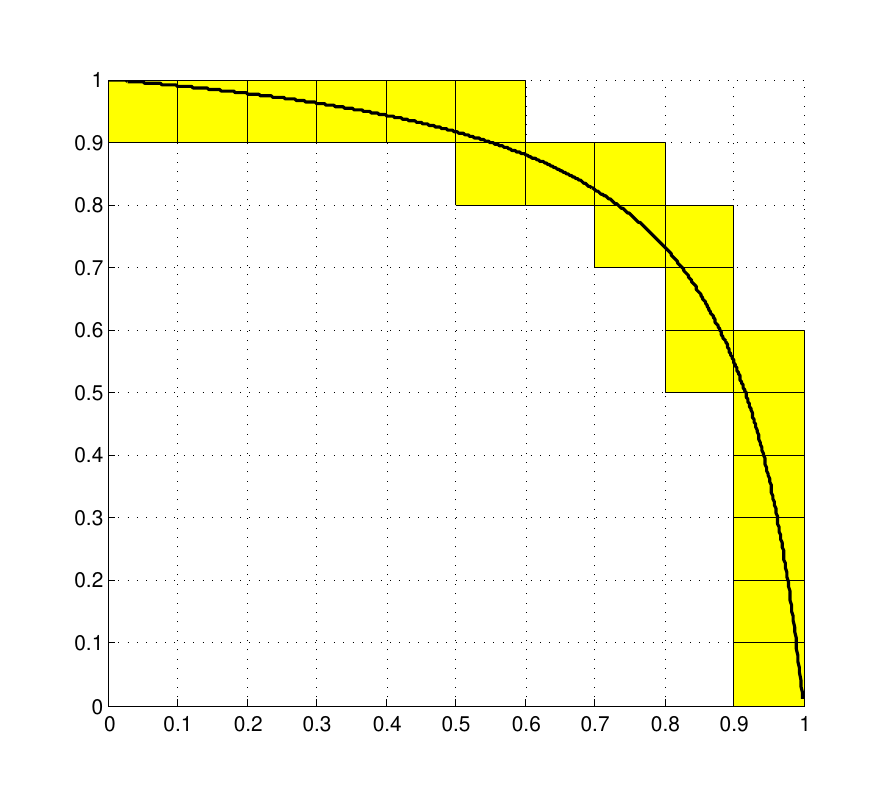}\\
  \caption{The set $[0,1]^2$ is uniformly dived into $10$ columns and $10$ rows. Region A is the region above the curve, and region B the one below it. The type 2) rectangles in the discussion correspond to the colored squares in the figure, the number of which does not exceed $10+10=20$.}\label{fig1}
\end{figure}

Almost all commonly used cost functions that promote sparsity (e.g. the cost functions for $\ell_p$-minimization, ZAP, SCAD, and MCP) satisfy the requirement of $F$ being non-decreasing, so the non-increasing assumption in the theorem is very mild and reasonable. Indeed, it makes sense that a larger nonzero entry should be penalized more in order to promote sparsity.
On the other hand, the non-decreasing requirement is also essential for the validity of Theorem \ref{probeq}. To see this, we construct an example where the ERC set is almost the entire Grassmannian whereas the RRC set is empty.
\begin{counter-example}\label{counter2}
Define
\begin{align}\label{eq_F}
F(x)=\left\{\begin{array}{cc}
      0 & x=0; \\
      0.1 & \textrm{$x>0$ and $x$ is rational};\\
      1 & \textrm{$x>0$ and $x$ is irrational},
    \end{array}
    \right.
\end{align}
and set the dimensions and sparsity to be $m=2,n=3$, and $k=1$. It can be verified that $F$ satisfies the definition of sparseness measure in Definition~\ref{def1}. Then $\mu(\Omega_J)=1$, but $\Omega_J^r=\emptyset$.
\end{counter-example}
\begin{proof}
For arbitrary $x_1,x_2\in \mathbb{R}\setminus \{0\}$, denote by $x_1\simeq x_2$ if the equivalence relation $x_1/x_2\in\mathbb{Q}$ holds\footnote{$\mathbb{Q}$ denotes the set of rational numbers}. Recall the set $S\subseteq G_1(\mathbb{R}^3)$ as defined in (\ref{defs}) is of full measure, and for any $\nu\in S$ and $\mathbf{z}\in \nu\setminus \{\bf 0\}$ we have $z_i\neq 0$, $i=1,2,3$. Then the three coordinates of $\mathbf{z}$ can be partitioned into equivalent classes according to $\simeq$, and the type of partition is independent of the choice of $\mathbf{z}$. Moreover, whether $\nu$ is in $\Omega_J$ or not is completely determined by the type of partition, according to the construction of $F$. For example, we say $\nu$ is of type $(1,1,2)$ if the first two coordinates of $\mathbf{z}$ are from a same equivalence class and the third coordinate is from another equivalence class. From the null space property we can check that the type $(1,2,3)$ is in $\Omega_J$, since for each $\mathbf{z}\in \nu\setminus\{0\}$, there is at least one $i\in T^c$ such that $z_i$ is irrational. However, any type $(1,1,2)$ null space $\nu$ is not in $\Omega_J$, because there exists a $\mathbf{z}\in\nu\setminus\{\bf 0\}$ such that $z_1,z_2$ are rational and $z_3$ is irrational, in which case null space property fails when choosing $T=\{1\}$. Since the null spaces of the type $(1,2,3)$ is of measure $1$, we have that $\mu(\Omega_J)=1$.
On the other hand, since the set of one dimensional subspaces corresponding to the type $(1,1,2)$ is dense in $G_1(\mathbb{R}^3)$ but does not intersect $\Omega_J$, the interior of $\Omega_J$ must be vacuous.
\end{proof}

An immediately corollary of Theorem~\ref{probeq} is that the probability of ERC and RRC are the same for a Gaussian random matrix. More generally, suppose $P$ is the probability measure corresponding to the distribution of the null space of $\mathbf{A}$, and $P$ is absolutely continuous with respect to $\mu$,\footnote{The measure $\mu_1$ is said to be absolutely continuous with respect to the measure $\mu_2$ if $\mu_2(E)=0$ implies $\mu_1(E)=0$, for arbitrary measurable set $E$.} then $P(\Omega_J\setminus\Omega_J^r)=0$.
In particular, it is not counter-intuitive that this should be true if the entries of $\mathbf{A}$ are i.i.d.~generated from a certain continuous distribution (e.g.~sub-Gaussian or subexponential \cite{raskutti2010restricted}), which is a common practice used in generating the observation matrix. Nevertheless, the above speculation requires a formal justification. We formulate this result as a corollary, the proof of which is deferred to Appendix \ref{app_F}.
\begin{corollary}\label{co2}
Suppose $F\in\mathcal{M}$ is a non-decreasing function, and the distribution of the matrix $\mathbf{A}$ is absolutely continuous with respect to the Lebesgue measure on $\mathbb{M}(m,n)$. Then the probability of ERC and RRC are the same. This holds true in particular when $\mathbf{A}$ has i.i.d. entries drawn from a continuous distribution.
\end{corollary}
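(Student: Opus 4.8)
The plan is to transfer the ``smallness'' statement of Theorem~\ref{probeq}, which lives on the Grassmann manifold $G_l(\mathbb{R}^n)$ and is phrased in terms of the Haar measure $\mu$, into a statement about a Lebesgue-null set of matrices $\mathbf{A}$. First I would reduce the claim to a single null set. For full-rank $\mathbf{A}$, ERC holds iff $\mathcal{N}(\mathbf{A})\in\Omega_J$ and RRC holds iff $\mathcal{N}(\mathbf{A})\in\Omega_J^r$; since $\Omega_J^r\subseteq\Omega_J$, the two events differ precisely on $\{\mathbf{A}:\mathcal{N}(\mathbf{A})\in\Omega_J\setminus\Omega_J^r\}$, up to the Lebesgue-null set of rank-deficient matrices. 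By Theorem~\ref{th2} we have $\Omega_J^r=\interior(\Omega_J)$, so $\Omega_J\setminus\Omega_J^r=\Omega_J\setminus\interior(\Omega_J)\subseteq\overline{\Omega}_J\setminus\interior(\Omega_J)=:N$, and Theorem~\ref{probeq} guarantees $\mu(N)=0$. It therefore suffices to show that $\{\mathbf{A}:\mathcal{N}(\mathbf{A})\in N\}$ is Lebesgue-null in $\mathbb{M}(m,n)$; the absolute continuity hypothesis then forces this event to have probability zero, which is the assertion.

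The key lemma to establish is that the null space map $\Phi\colon\mathbf{A}\mapsto\mathcal{N}(\mathbf{A})$ pulls back $\mu$-null sets to Lebesgue-null sets. Since the rank-deficient matrices already form a Lebesgue-null subset of $\mathbb{M}(m,n)$, I may restrict $\Phi$ to the open set of full-rank matrices. Working inside a single chart $(U_I,\phi_I)$, I would write $\mathbf{A}=[\mathbf{A}_{I^c}\mid\mathbf{A}_I]$ and solve $\mathbf{A}\mathbf{X}=\mathbf{0}$ for the representative $\mathbf{X}$ with $\mathbf{X}_I=\mathbf{I}_l$, obtaining the explicit local expression
\begin{equation}
\phi_I\bigl(\Phi(\mathbf{A})\bigr)=-\mathbf{A}_{I^c}^{-1}\mathbf{A}_I,
\end{equation}
valid on the open set where $\mathbf{A}_{I^c}$ is invertible. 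This is a rational, hence smooth, map; and for each fixed $\mathbf{A}_{I^c}$ the partial map $\mathbf{A}_I\mapsto-\mathbf{A}_{I^c}^{-1}\mathbf{A}_I$ is a linear isomorphism of $\mathbb{M}(m,l)$, so $\phi_I\circ\Phi$ is a submersion.

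Given this submersion structure, I would conclude by a Fubini argument. Because $\mu(N)=0$, Definition~\ref{def2} gives that $\phi_I(N\cap U_I)$ has Lebesgue measure zero in $\mathbb{M}(m,l)\cong\mathbb{R}^{ml}$ for every $I$. For each fixed $\mathbf{A}_{I^c}$, the slice $\{\mathbf{A}_I:-\mathbf{A}_{I^c}^{-1}\mathbf{A}_I\in\phi_I(N\cap U_I)\}$ is the image of a null set under the invertible linear map $-\mathbf{A}_{I^c}^{-1}$, hence null; integrating over $\mathbf{A}_{I^c}$ via Fubini shows $\Phi^{-1}(N)$ is Lebesgue-null in this chart, and a countable cover by the charts $\{U_I\}$ finishes the claim. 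Absolute continuity of the law of $\mathbf{A}$ then yields $P(\Phi^{-1}(N))=0$, i.e., ERC and RRC agree almost surely and thus have equal probability; the i.i.d.\ continuous case follows at once, since a product of continuous scalar densities is absolutely continuous with respect to Lebesgue measure on $\mathbb{R}^{mn}$. The main obstacle is the transfer lemma, namely that $\Phi$ is a submersion and hence pulls back $\mu$-null sets to Lebesgue-null sets, as everything else is bookkeeping once the local product structure $\phi_I\circ\Phi(\mathbf{A})=-\mathbf{A}_{I^c}^{-1}\mathbf{A}_I$ is in hand.
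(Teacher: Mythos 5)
Your proposal is correct, and it accomplishes the crucial transfer step by a genuinely different mechanism than the paper. Both arguments share the same reduction: ERC and RRC differ only on $\{\mathbf{A}:\mathcal{N}(\mathbf{A})\in\Omega_J\setminus\Omega_J^r\}$, and Theorems \ref{th2} and \ref{probeq} make the Grassmannian bad set $\mu$-null; the real content is converting a $\mu$-null set of subspaces into a Lebesgue-null set of matrices. The paper does this \emph{forward}: it passes to orthogonal complements (the row space of $\mathbf{A}$), using that orthocomplementation is a measure-preserving isomorphism $G_l(\mathbb{R}^n)\to G_m(\mathbb{R}^n)$, and then exhibits the bad matrices, viewed in $\mathbb{M}(n,m)$ via transposition, as the image $f\bigl(\phi_I(H\cap U_I)\times\mathbb{M}(m,m)\bigr)$ of a null product set under the smooth map $f(\mathbf{M},\mathbf{V})=\binom{\mathbf{I}}{\mathbf{M}}\mathbf{V}$ between spaces of equal dimension $nm$, so the conclusion rests on the fact that smooth (locally Lipschitz) maps in equal dimension carry null sets to null sets. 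You instead go \emph{backward} through the null-space map itself, computing its chart expression $\phi_I(\Phi(\mathbf{A}))=-\mathbf{A}_{I^c}^{-1}\mathbf{A}_I$ and slicing by Fubini; this avoids the orthocomplement detour entirely and isolates a cleaner reusable principle (a submersion pulls back null sets to null sets), at the modest cost of having to solve for the null space explicitly, which the paper's fiber parametrization $\phi_I(U_I)\times\mathbb{M}(m,m)$ of the quotient map $\pi$ never requires. Two cosmetic points: the slice $\{\mathbf{A}_I:-\mathbf{A}_{I^c}^{-1}\mathbf{A}_I\in\phi_I(N\cap U_I)\}$ is the image of $\phi_I(N\cap U_I)$ under left multiplication by $-\mathbf{A}_{I^c}$, not by $-\mathbf{A}_{I^c}^{-1}$ (harmless, since both are invertible), and the chart cover is in fact finite, with $\binom{n}{l}$ sets; for Fubini one should also note that $N=\overline{\Omega}_J\setminus\interior(\Omega_J)$ is closed, so $\phi_I(N\cap U_I)$ and its preimage under your rational map are Borel --- a routine point at the same level of rigor the paper itself adopts.
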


\begin{remark}
Apart from the one described in Corollary \ref{co2}, another popular method for the generation of $\mathbf{A}$ is by randomly selecting $m$ rows in the $n\times n$ Fourier transform matrix \cite{tao1,tao2}. However in this scheme the probability of ERC and RRC may not agree, since the probability distribution of the null space is discrete rather than continuous on $G_l(\mathbb{R}^n)$. In fact, it is not difficult to construct a random Fourier matrix ensemble and a sparseness measure for which the probability of ERC is strictly larger than that of RRC.
\end{remark}

\subsection{The Quantitative Version}\label{s3c}
Although the characterization in Theorem \ref{th2} is simple and accurate, it fails to convey a quantitative information about robustness: given a subspace in $\Omega_J^r=\interior(\Omega_J)$, we do not know how large the constant $C$ in the definition of RRC is. To obtain such a quantitative information, consider the ``$d$-interior'' of $\Omega_J$, defined as
\begin{align}\label{e32}
d\mbox{-}\interior(\Omega_J):=\{\nu\in G_l(\mathbb{R}^n)|\nu'\in \Omega_J,~\forall~\nu':\dist(\nu,\nu')< d \}.
\end{align}
We remark that by definition, $\interior(\Omega_J)=\bigcup_{d>0} d\mbox{-}\interior(\Omega_J)$. Now a ``quantitative'' version of Theorem \ref{th2} is as follows:
\begin{theorem}\label{th22}
Suppose $\mb{A}$ is of full rank, then $\mathcal{N}(\mathbf{A})\in d\mbox{-}\interior(\Omega_J(n,k,l))$ is equivalent to \eqref{eq16}.
That is, the condition $\mathcal{N}(\mathbf{A})\in d\mbox{-}\interior(\Omega_J(n,k,l))$ implies that RRC is satisfied for $({\bf A},k,J)$ with the robustness constant $C=\frac{2(1+d)}{d\sigma_{\min}}$; and conversely, if $({\bf A},k,J)$ satisfies RRC with $C=\frac{2(1-d)}{d\sigma_{\max}}$ for some $d>0$, then $\mathcal{N}(\mathbf{A})\in d\mbox{-}\interior(\Omega_J(n,k,l))$.
\end{theorem}
\begin{proof}
See Appendix \ref{p_th2}.
\end{proof}
\begin{remark}
A nice feature of the bounds on $C$ in Theorem~\ref{th22} is that $d$ only depends on $\mathcal{N}(\mb{A})$ whereas the information of $\mb{A}$ only enters the bounds through the extremal values $\sigma_{\max}$ and $\sigma_{\min}$. If the measurement matrix can be designed, then the designer can always perform a QR transform on $\mb{A}^{\top}$ to obtain a new measurement matrix with $\sigma_{\max}=\sigma_{\min}$ while the null space is unchanged. Clearly, in such a case the upper and lower bounds on $C$ in Theorem~\ref{th22} are very tight for small $d$.
\end{remark}

In principle, the supremum of $d$ such that $\mathcal{N}({\bf A})\in d\mbox{-}\interior(\Omega_J)$ is completely determined by $({\bf A},k,J)$. However, exactly computing $d$ for a given $\bf A$ seems to be out of reach since $T$ may take $\left(
                                   \begin{array}{c}
                                     n \\
                                     k \\
                                   \end{array}
                                 \right)
$ number of values. Two practical approaches of estimating $d$ will be discussed in Section~\ref{sec5}.

\section{Comparison of Different Sparseness Measures}\label{secrules}
In this section we provide some methods to compare the performance between two sparseness measures in terms of ERC or RRC. Since $\Omega_J$, $\interior(\Omega_J)$, and $d\mbox{-}\interior(\Omega_J)$ are shown to correspond to the measurement matrices satisfying ERC, RRC, or with a particular robustness constant, it's easy to compare the performances of two sparseness measures if an inclusion relation such as $\Omega_{J_1}\subseteq\Omega_{J_2}$ is available. However, sometimes its not true that $\Omega_{J_1}\subseteq\Omega_{J_2}$, but it may still be possible to show $\Omega_{J_1}\subseteq\overline{\Omega}_{J_2}$. The second relation is not terribly different than the first one, since we have shown that $\overline{\Omega}_{J_2}\setminus\Omega_{J_2}$ is negligible when $F_2$ is non-decreasing. Therefore, both the topological characterization of RRC and the probabilistic (measure-theoretic) viewpoint become particularly useful when passing from the $\ell_p$ cost functions to general sparseness measures.

The following lemma comes from the corresponding result for ERC in \cite{Gribonval} and our interior point characterization of RRC:
\begin{lemma}\label{le2}
Suppose $F,G\in\mathcal{M}$. If $F,G$ are non-decreasing and $F/G$ is non-increasing on $\mathbb{R}^+$, then we have $\Omega_{J_G}\subseteq\Omega_{J_F}$ and $\Omega^{r}_{J_G}\subseteq\Omega^{r}_{J_F}$.
\end{lemma}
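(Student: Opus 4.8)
The plan is to establish the two inclusions separately, and to observe that the robust inclusion is an essentially free consequence of the exact-recovery inclusion once we invoke the interior characterization of the RRC set. I would first prove $\Omega_{J_G}\subseteq\Omega_{J_F}$, which is the substantive step and coincides with the comparison result of Gribonval et al.\ \cite[Lemma 7]{Gribonval}; the relation $\Omega^{r}_{J_G}\subseteq\Omega^{r}_{J_F}$ then follows by taking interiors.

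For the first inclusion, fix $\nu\in\Omega_{J_G}$ and a nonzero $\mathbf{z}\in\nu$. The key reduction is that, because $F$ and $G$ are both non-decreasing, the condition (\ref{exactdef}) over all $T$ with $|T|\le k$ is governed by a single worst-case support: letting $T^*$ denote a set of indices carrying the $k$ largest magnitudes $|z_i|$, one has $J_F(\mathbf{z}_T)\le J_F(\mathbf{z}_{T^*})$ and $J_F(\mathbf{z}_{T^c})\ge J_F(\mathbf{z}_{(T^*)^c})$ for every admissible $T$, and likewise for $G$. Hence it suffices to verify $J_F(\mathbf{z}_{T^*})<J_F(\mathbf{z}_{(T^*)^c})$, and I may use the inequality $J_G(\mathbf{z}_{T^*})<J_G(\mathbf{z}_{(T^*)^c})$ that membership in $\Omega_{J_G}$ supplies at that same support. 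Note that $\nu\in\Omega_{J_G}$ forces every nonzero $\mathbf{z}\in\nu$ to have more than $k$ nonzero entries (otherwise NSP fails for $G$ by Lemma \ref{nspcond}), so the $k$-th largest magnitude $\tau:=|z_{(k)}|$ is strictly positive.

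The heart of the argument is then a threshold/ratio estimate exploiting that $F/G$ is non-increasing. Set $\rho:=F(\tau)/G(\tau)>0$. For $i\in T^*$ we have $|z_i|\ge\tau$, so $F(|z_i|)/G(|z_i|)\le\rho$, i.e.\ $F(|z_i|)\le\rho\,G(|z_i|)$, and summing gives $J_F(\mathbf{z}_{T^*})\le\rho\,J_G(\mathbf{z}_{T^*})$. For $j\in(T^*)^c$ we have $|z_j|\le\tau$, so (zero entries contributing nothing to either side) $F(|z_j|)\ge\rho\,G(|z_j|)$, whence $J_F(\mathbf{z}_{(T^*)^c})\ge\rho\,J_G(\mathbf{z}_{(T^*)^c})$. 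Combining these with $\rho>0$ and the $G$-inequality,
\begin{equation}
J_F(\mathbf{z}_{T^*})\le\rho\,J_G(\mathbf{z}_{T^*})<\rho\,J_G(\mathbf{z}_{(T^*)^c})\le J_F(\mathbf{z}_{(T^*)^c}),
\end{equation}
which is exactly the strict inequality required. I expect the main obstacle to lie precisely here: the pointwise implication is genuinely false for an arbitrary $T$ (one in which large and small coordinates are mixed between $T$ and $T^{c}$), so the proof must first route everything through the worst-case support $T^*$, and it is this step that consumes the non-decreasing hypothesis on $F$ and $G$, together with the positivity $\rho>0$ needed to preserve strictness.

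Finally, the robust inclusion is immediate from the interior characterization. By Theorem \ref{th2}, $\Omega^{r}_{J_G}=\interior(\Omega_{J_G})$ and $\Omega^{r}_{J_F}=\interior(\Omega_{J_F})$. Since $\Omega_{J_G}\subseteq\Omega_{J_F}$ has just been proved and the interior operator is monotone with respect to set inclusion, $\interior(\Omega_{J_G})\subseteq\interior(\Omega_{J_F})$, that is $\Omega^{r}_{J_G}\subseteq\Omega^{r}_{J_F}$. This last step requires no further work and illustrates the payoff of identifying the robust set with the topological interior of the exact-recovery set.
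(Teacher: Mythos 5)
Your proof is correct, and its overall skeleton coincides with the paper's: the robust inclusion is handled identically, via the identity $\Omega^{r}_{J}=\interior(\Omega_{J})$ from Theorem \ref{th2} together with monotonicity of the interior operator under set inclusion. The difference lies in the first inclusion: the paper disposes of $\Omega_{J_G}\subseteq\Omega_{J_F}$ in one line by citing \cite[Lemma 7]{Gribonval}, whereas you prove it from scratch. Your argument --- reduce to the worst-case support $T^*$ of the $k$ largest magnitudes using that $F$ is non-decreasing (noting $J_F(\mathbf{z}_{T^c})=J_F(\mathbf{z})-J_F(\mathbf{z}_T)$ to handle the complement side), observe that $\nu\in\Omega_{J_G}$ forces every nonzero $\mathbf{z}\in\nu$ to have more than $k$ nonzero entries so that the threshold $\tau=|z_{(k)}|$ is strictly positive, and then compare coordinatewise against $\rho=F(\tau)/G(\tau)>0$ on each side of the threshold --- is sound at every step, and it is essentially the standard proof of the cited comparison result, so the net effect is a self-contained version of what the paper delegates to the reference; this buys the reader independence from \cite{Gribonval} at the cost of a page of argument. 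Two small remarks: your proof never actually uses that $G$ is non-decreasing --- the $G$-inequality at $T^*$ comes straight from the definition of $\Omega_{J_G}$ since $|T^*|=k$, and the ratio step needs only that $F/G$ is non-increasing together with $G>0$ on $\mathbb{R}^+$, which follows from $G\in\mathcal{M}$ --- so you have in fact established the lemma under slightly weaker hypotheses; consequently your closing claim that the reduction to $T^*$ ``consumes the non-decreasing hypothesis on $F$ and $G$'' should be trimmed to $F$ alone.
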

\begin{proof}
The fact that $\Omega_{J_G}\subseteq\Omega_{J_F}$ comes from \cite[Lemma 7]{Gribonval}. It then follows that $\Omega^{r}_{J_G}\subseteq\Omega^{r}_{J_F}$ from Theorem \ref{th2}.
\end{proof}

The set inclusions formulas in Lemma~\ref{le2} means that the sparseness measure $F$ is better than $G$, in the sense that whenever the cost function $J_G$ guarantees ERC/RRC, so does the $J_F$. By letting $G(x):=x^q$ in this lemma we can obtain the following result:
\begin{corollary}\label{co1}
Suppose $F\in\mathcal{M},~p\in(0,1]$. If $F$ is non-decreasing and $F(x)/x^p$ is non-increasing on $\mathbb{R}^+$, then we have $\Omega_{\ell_p}\subseteq\Omega_{J_F}$ and $\Omega^{r}_{\ell_p}\subseteq\Omega^{r}_{J_F}$.
\end{corollary}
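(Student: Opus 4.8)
The plan is to obtain Corollary~\ref{co1} as the special case $G(x):=x^p$ of Lemma~\ref{le2}. The key observation is that the cost function induced by this choice of $G$ is exactly the $\ell_p$ cost function: by the definition in~(\ref{cost}), $J_G(\mathbf{x})=\sum_{k=1}^n|x(k)|^p=\|\mathbf{x}\|_p^p$, so $J_G$ coincides with the $\ell_p$ cost used throughout, and therefore $\Omega_{J_G}=\Omega_{\ell_p}$ and $\Omega^r_{J_G}=\Omega^r_{\ell_p}$. Consequently it suffices to check that the pair $(F,G)$ satisfies the hypotheses of Lemma~\ref{le2}, after which the desired inclusions $\Omega_{\ell_p}=\Omega_{J_G}\subseteq\Omega_{J_F}$ and $\Omega^r_{\ell_p}=\Omega^r_{J_G}\subseteq\Omega^r_{J_F}$ follow immediately.

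To verify the hypotheses, I would check the four requirements one at a time. That $F\in\mathcal{M}$ is non-decreasing is given, and that $F/G=F(x)/x^p$ is non-increasing on $\mathbb{R}^+$ is precisely the standing assumption of the corollary. That $G(x)=x^p$ is non-decreasing on $[0,\infty)$ is clear for $p>0$. The only point that requires a short argument is that $G\in\mathcal{M}$, i.e.\ that $x\mapsto x^p$ is itself a sparseness measure in the sense of Definition~\ref{def1}: the condition $G(x)=0\iff x=0$ is immediate, while the subadditivity of $|{\cdot}|^p$, namely $|x+y|^p\le|x|^p+|y|^p$ for all $0<p\le1$, is the standard inequality underlying the $\ell_p$ quasi-norm.

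I expect no genuine obstacle, since the substance of the corollary is carried entirely by Lemma~\ref{le2}; that lemma in turn rests on \cite[Lemma~7]{Gribonval} for the ERC inclusion and on Theorem~\ref{th2} for transferring the inclusion to the RRC sets. The only mild care needed is the elementary verification that $x^p$ belongs to $\mathcal{M}$ for $0<p\le1$, which is what makes the specialization $G(x)=x^p$ legitimate in the first place.
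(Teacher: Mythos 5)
Your proposal is correct and matches the paper's own argument exactly: the paper obtains Corollary~\ref{co1} by the one-line specialization $G(x):=x^p$ in Lemma~\ref{le2}, just as you do. Your extra verification that $x^p\in\mathcal{M}$ for $0<p\le1$ (via the standard subadditivity inequality $|x+y|^p\le|x|^p+|y|^p$) is a detail the paper leaves implicit, and it is carried out correctly.
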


Corollary \ref{co1} gives a condition such that $J_F$ is better than $\ell_p$ in the sense of ERC and RRC. Conversely, we shall show that the asymptotic of $F$ around $0^+$ and $+\infty$ gives a sufficient condition that $\ell_p$ is better than $J_F$ in terms of probability.

The following result implies that, in some sense, it's not good to design an $F$ which is differentiable (or Holder continuous) at zero or infinity, as far as the worst case performance is concerned:
\begin{theorem}\label{th1}
Suppose $F\in\mathcal{M},~p\in(0,1]$. If $\lim_{x\to 0^+}F(x)/x^p$ or $\lim_{x\to\infty}F(x)/x^p$ exist and is positive, then $\Omega_{J_F}\subseteq\overline{\Omega}_{\ell_p}$, and $\mu(\Omega_{J_F})\le \mu(\Omega_{\ell_p})$.
\end{theorem}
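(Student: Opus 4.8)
The plan is to factor the claim through the null space constant $\theta_{\ell_p}$, writing $\Omega_{\ell_p}=\{\nu:\theta_{\ell_p}(\nu)<1\}$ (Lemma \ref{le1}) and proving the two containments $\Omega_{J_F}\subseteq\{\theta_{\ell_p}\le 1\}$ and $\{\theta_{\ell_p}\le 1\}\subseteq\overline{\Omega}_{\ell_p}$; the measure inequality then drops out of Theorem \ref{probeq}.

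First I would establish the first containment by a scaling argument, which is where the hypothesis on $F$ enters. Fix $\nu\in\Omega_{J_F}$, $\mathbf{z}\in\nu\setminus\{\mathbf 0\}$, and $T$ with $|T|\le k$. Since $\nu$ is a subspace, $t\mathbf{z}\in\nu\setminus\{\mathbf 0\}$ for every $t>0$, so membership in $\Omega_{J_F}$ gives $J_F(t\mathbf{z}_T)<J_F(t\mathbf{z}_{T^c})$ for all $t>0$. If $\lim_{x\to0^+}F(x)/x^p=c>0$, then for each fixed coordinate $z_i\neq 0$ one has $F(t|z_i|)/t^p=|z_i|^p\,F(t|z_i|)/(t|z_i|)^p\to c|z_i|^p$, and summing the finitely many terms yields $t^{-p}J_F(t\mathbf{z}_T)\to c\|\mathbf{z}_T\|_p^p$ and likewise for $T^c$. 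Dividing and letting $t\to0^+$ turns the strict inequalities into $\|\mathbf{z}_T\|_p^p\le\|\mathbf{z}_{T^c}\|_p^p$. Note that $\mathbf{z}_{T^c}\neq\mathbf 0$ for every $\nu\in\Omega_{J_F}$, since a vector supported on $T$ would force $J_F(\mathbf{z}_{T^c})=0$ and violate the defining inequality, so no denominator vanishes. The case $\lim_{x\to\infty}F(x)/x^p=c>0$ is identical with $t\to\infty$. Taking the supremum over $\mathbf{z}$ and $T$ gives $\theta_{\ell_p}(\nu)\le 1$, i.e.\ $\Omega_{J_F}\subseteq\{\theta_{\ell_p}\le 1\}$.

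Second, I would prove $\{\theta_{\ell_p}\le1\}\subseteq\overline{\Omega}_{\ell_p}$. Using the cone description (\ref{3940}), $\Omega_{\ell_p}=\{\nu:\nu\cap\mathcal K_{\ell_p}=\emptyset\}$ with $\mathcal K_{\ell_p}$ the closed cone of vectors of ratio $\ge 1$, while $\{\theta_{\ell_p}\le1\}=\{\nu:\nu\cap\mathcal K^{\circ}=\emptyset\}$ where $\mathcal K^{\circ}$ is the open cone of vectors of ratio $>1$; because one can always make a vector peakier, $\mathcal K_{\ell_p}=\overline{\mathcal K^{\circ}}$, so the two sets differ only through the thin surface $\{$ratio$=1\}$. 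Restricting to the full-measure chart domain $S$ of (\ref{defs}), where no vector of $\nu$ is $k$-sparse and $\theta_{\ell_p}$ is continuous by a compactness argument, the task becomes: a subspace that only ever touches the ratio-$=1$ surface can be perturbed off it into the strict region. I would carry this out by a small perturbation of $\nu$ that pushes every contact point of $\nu$ with the surface strictly to the $<1$ side, exploiting the homogeneity of the $\ell_p$ cones (which lets a single radial, scaling-type deformation act coherently on all contact directions at once).

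I expect this perturbation step to be the main obstacle: the contact set of $\nu$ with the ratio-$=1$ surface can be large, and one must rule out ``pockets'' -- points where $\theta_{\ell_p}=1$ yet a whole neighbourhood has $\theta_{\ell_p}\ge1$ -- since such a point would lie outside $\overline{\Omega}_{\ell_p}$ and break the containment. The homogeneity of $\ell_p$ and the explicit geometry of $\mathcal K_{\ell_p}$ should preclude local minima of $\theta_{\ell_p}$ at the value $1$; alternatively one may reuse the porosity/monotonicity mechanism of Theorem \ref{probeq} (valid since $x^p$ is non-decreasing) to show directly that the boundary is approached from inside. Granting this, the set statement $\Omega_{J_F}\subseteq\{\theta_{\ell_p}\le1\}\subseteq\overline{\Omega}_{\ell_p}$ follows. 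For the measure statement, $\ell_p$ is a non-decreasing sparseness measure, so Theorem \ref{probeq} makes $\overline{\Omega}_{\ell_p}\setminus\interior(\Omega_{\ell_p})$ null; since $\Omega_{\ell_p}$ is open (Corollary \ref{th5}), $\interior(\Omega_{\ell_p})=\Omega_{\ell_p}$ and hence $\mu(\overline{\Omega}_{\ell_p})=\mu(\Omega_{\ell_p})$, giving $\mu(\Omega_{J_F})\le\mu(\overline{\Omega}_{\ell_p})=\mu(\Omega_{\ell_p})$.
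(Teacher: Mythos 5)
Your architecture coincides with the paper's proof in Appendix~\ref{ap2}: your first paragraph is the paper's opening (unnamed) lemma, which shows $\theta_{\ell_p}(\nu)\le\theta_{J_F}(\nu)$ pointwise by evaluating $\lim_{t\downarrow 0} J_F(t\mathbf{z}_T)/J_F(t\mathbf{z}_{T^c})$ along the ray $t\mathbf{z}\in\nu$ (resp.\ $t\to\infty$), and your final paragraph is the paper's closing lemma, $\mu(\overline{\Omega_{\ell_p}})=\mu(\Omega_{\ell_p})$, obtained from Theorem~\ref{probeq} (with $F(x)=x^p$ non-decreasing) together with the openness of $\Omega_{\ell_p}$ (Corollary~\ref{th5}). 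The step you leave conditional --- $\{\nu\mid\theta_{\ell_p}\le 1\}\subseteq\overline{\Omega}_{\ell_p}$ --- is exactly the paper's Lemma~\ref{cl}: there, $\overline{\Omega_{\ell_p}}\subseteq\{\theta_{\ell_p}\le 1\}$ follows from continuity of $\theta_{\ell_p}$ (Lemma~\ref{le3}), while the converse inclusion is dispatched in one sentence, asserting that $\{\theta_{\ell_p}<1\}$ is dense at every point of $\{\theta_{\ell_p}\le 1\}$. So the obstacle you single out (ruling out ``pockets'' where $\theta_{\ell_p}$ has a local minimum at the value $1$) is precisely what the paper declares obvious; your proposal is complete modulo the same perturbation fact the paper takes for granted, and your first suggested mechanism --- a direct perturbation using homogeneity to push all contact directions strictly below ratio one --- is the right way to discharge it.

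Two cautions about your fallback options. First, the porosity/monotonicity mechanism of Theorem~\ref{probeq} cannot substitute for this step: that proof starts from $\nu\in S\setminus\interior(\Omega_T)$ and exhibits nearby points \emph{outside} $\overline{\Omega}_T$, i.e.\ it approximates the boundary from the complement, whereas here you need approximation from inside $\Omega_{\ell_p}$; thinness of $\overline{\Omega}\setminus\interior(\Omega)$ says nothing about whether $\{\theta_{\ell_p}\le 1\}$ contains points far from $\Omega_{\ell_p}$. (That this is not a purely formal point is shown by degenerate parameter choices, e.g.\ $n=2$, $m=k=1$, where $\Omega_{\ell_p}=\emptyset$ yet $\theta_{\ell_p}(\mathbb{R}(1,1)^{\top})=1$, so the density claim genuinely uses nondegeneracy of the setting.) Second, your restriction to the full-measure chart set $S$ of (\ref{defs}) would salvage only the measure inequality $\mu(\Omega_{J_F})\le\mu(\Omega_{\ell_p})$, not the stated set inclusion $\Omega_{J_F}\subseteq\overline{\Omega}_{\ell_p}$, which must hold at every point of $G_l(\mathbb{R}^n)$; note also that $\theta_{\ell_p}$ is continuous on all of $G_l(\mathbb{R}^n)$ by Lemma~\ref{le3}, so no restriction to $S$ is needed for that part.
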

\begin{proof}
See Appendix \ref{ap2}.
\end{proof}

We Remark that $\mu(\Omega_{J_F})\le \mu(\Omega_{\ell_p})$ in Theorem \ref{th1} cannot be replaced by the stronger set inclusion relation $\Omega_{J_F}\subseteq\Omega_{\ell_p}$, which holds for $\ell_p$ cost functions but fails for general sparseness measures. Thus the measure-theoretic viewpoint allows us to restore a comparison criteria when extending $\ell_p$-minimization to the $F$-minimization.

As an illustration, we demonstrate how to derive the relation between ZAP \cite{zap} and $\ell_1$-minimization from above results. Consider the typical form of sparseness measure used in the ZAP algorithm (which is essentially the same as the minimax concave penalty (MCP) \cite{MCP} familiar to statisticians):
\begin{align}\label{MCPP}
F(x)=
\left\{
\begin{array}{cc}
   \alpha x-\alpha^2x^2 & x<1/\alpha;\\
  1 & \textrm{otherwise},\end{array}
\right.\end{align}
where the tuning parameter $\alpha$ is usually chosen as the inverse of the standard deviation of the non-zero entries in $\mathbf{\bar{x}}$. Our following result says that, while ZAP performs far better than $\ell_1$-minimization in the average case, as shown in the numerical experiments \cite{zap}, the worst case performance (requiring all sparse vectors can be constructed) of the two cost functions are the same:
\begin{corollary}\label{co_zap}
If $F$ is non-decreasing and $F(x)/x$ is non-increasing on $\mathbb{R}^+$, then
\begin{equation}
\mu(\Omega_J)\ge\mu(\Omega_{\ell_1}).
\end{equation}
Moreover, the equally is achieved if in addition $\lim_{x\to 0^+}F(x)/x<\infty$, which is true for ZAP or MCP.
\end{corollary}
\begin{proof}
Using Corollary \ref{co1} and Theorem \ref{th1} with $p=1$ one can obtain both the lower and upper bound on $\mu(\Omega_J)$ respectively.
\end{proof}

The result of Corollary \ref{co_zap} is not in contradiction with the proverbial fact that concave penalties induce smaller risk, since we are different benchmarks of performance. When the parameter $\alpha$ can be tuned according to the statistics of the variables, concave penalties usually have better average performance (risk); but this is irrelevant to our worst case analysis.

An immediate consequence of Corollary~\ref{co1} is that
\begin{align}
d\mbox{-}\interior(\Omega_{\ell_1})\subseteq d\mbox{-}\interior(\Omega_J)
\end{align}
for any $d>0$ if $F$ is non-decreasing and $F(x)/x$ is non-increasing on $\mathbb{R}^+$. Since the proof of Theorem~\ref{thm7} is based on lower bounding $\mu(d\mbox{-}\Omega_{\ell_1})$, we immediately obtain:
\begin{corollary}\label{cor8}
The tradeoffs between the sampling rate and robustness described in Theorem~\ref{thm7}, Corollary~\ref{as} and Corollary~\ref{as1} are also achievable for $F$-minimization if $F$ is non-decreasing and $F(x)/x$ is non-increasing on $\mathbb{R}^+$.
\end{corollary}

We end this section by summarizing the relationship between the various requirements on $F$ appeared in this section:
\begin{proposition}\label{prop1}
Assuming that $0\le p\le1$, $F:[0,+\infty)\to[0,+\infty)$, and $F(0)=0$, we have\\
(1) $F$ is concave $\Longrightarrow$ $F(t)/t$ is non-increasing;\\
(2) $F(t)/t^p$ is non-increasing $\Longrightarrow$ $F(t)/t$ is non-increasing;\\
(3) $F(t)/t$ is non-increasing $\Longrightarrow$ $F$ is subadditive.
\end{proposition}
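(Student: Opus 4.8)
The plan is to derive all three implications from elementary comparisons of the ``slope from the origin'' quantity $F(t)/t$; none of them requires more than a one-line inequality, so I would organize the proof as three short, independent arguments. For part (1), I would use the chord interpretation of concavity. Fixing $0<s<t$, write $s$ as the convex combination $s=\frac{s}{t}\,t+\bigl(1-\frac{s}{t}\bigr)\,0$. Concavity of $F$ together with $F(0)=0$ then gives $F(s)\ge \frac{s}{t}F(t)+\bigl(1-\frac{s}{t}\bigr)F(0)=\frac{s}{t}F(t)$, i.e. $F(s)/s\ge F(t)/t$. Since $0<s<t$ were arbitrary, $F(t)/t$ is non-increasing.

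For part (2), I would factor $\frac{F(t)}{t}=\frac{F(t)}{t^p}\cdot t^{p-1}$. The first factor is non-increasing by hypothesis and is non-negative; the second factor $t^{p-1}$ is non-increasing on $(0,\infty)$ precisely because $p\le 1$ forces $p-1\le 0$, and it is positive. A product of two non-negative non-increasing functions is non-increasing: for $s<t$ one has $f(s)g(s)\ge f(t)g(s)\ge f(t)g(t)$, using non-negativity of $g(s)$ in the first step and of $f(t)$ in the second. This yields that $F(t)/t$ is non-increasing.

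For part (3), I would set $t=x+y$ for $x,y>0$ (the boundary case where one variable vanishes is immediate from $F(0)=0$). Applying the non-increasing property of $F(\cdot)/(\cdot)$ to $x\le x+y$ and to $y\le x+y$ gives $F(x)\ge\frac{x}{x+y}F(x+y)$ and $F(y)\ge\frac{y}{x+y}F(x+y)$. Summing these two inequalities produces $F(x)+F(y)\ge\frac{x+y}{x+y}F(x+y)=F(x+y)$, which is exactly the subadditivity claim.

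The arguments are all routine and I do not expect any genuine obstacle; the only place demanding a little care is step (2), where one must not forget that the monotonicity of the product relies on both factors being non-negative, and that the hypothesis $p\le 1$ is exactly what makes $t^{p-1}$ non-increasing (the implication would fail for $p>1$).
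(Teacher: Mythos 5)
Your proof is correct and follows essentially the same route as the paper's: the identical chord argument via $F(s)\ge\frac{s}{t}F(t)$ for (1), the same factorization $\frac{F(t)}{t}=\frac{F(t)}{t^p}\cdot t^{p-1}$ for (2), and the same slope-from-the-origin comparison for (3). In fact your write-up of (3) states the inequalities in the correct direction, whereas the paper's displayed chain contains a sign typo (its middle $\ge$ should be $\le$, since $F(t)/t$ non-increasing gives $\frac{F(t_1+t_2)}{t_1+t_2}\le\frac{F(t_i)}{t_i}$), so your version is the cleaner one.
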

\begin{proof}
See Appendix \ref{prop1_proof}.
\end{proof}

\section{Applications: Tradeoffs between Robustness and Measurement Rate}\label{sec5}
We now apply the results in previous sections to prove achievable tradeoffs between robustness and measurement rate in various settings.
\subsection{Escape through the Mesh}\label{sgordon}
By Theorem~\ref{th22}, the robustness constant $C$ can be characterized by the largest $d$ such that $\mb{A}\in d\mbox{-}\interior(\Omega_J)$, therefore we are interested in the probability $\mathbb{P}[\mathcal{N}(\mb{A})\in d\mbox{-}\interior(\Omega_J)]$ associated with a random matrix. This can be well estimated in the case of a rotationally invariant matrix, for which $\mathcal{N}(\mb{A})$ is uniformly distributed on the Grassmannian. The key idea of our analysis is Gordon's \emph{escape through the mesh} theorem \cite{gordon}, which was employed in the study of exact reconstruction of sparse signals via $\ell_1$-minimization by Rudelson and Vershynin \cite{rudelson}. With some additional observations, we can use this approach to bound the robustness constant $C$. Define the sets of vectors
\begin{align}
\mathcal{D}_J(n,k):=\{\mathbf{z}\in \mathbb{R}^n\setminus \{\mathbf{0}\}|
J(\mathbf{z}_T)\ge J(\mathbf{z}_{T^c}),~\exists
T\subseteq\{1,...,n\}:|T|\le k\},
\end{align}
\begin{align}
\mathcal{D}_{J,d}(n,k):&=\{\mathbf{z}\in \mathbb{R}^n\setminus \{\mathbf{0}\}|
J(\mathbf{z}_T+\mathbf{n}_T)\ge J(\mathbf{z}_{T^c}+\mathbf{n}_{T^c}),
~\exists
\mathbf{n}\in\mathbb{R}^n:\|\mathbf{n}\|<d\|\mathbf{z}\|,
T\subseteq\{1,...,n\}:|T|\le k\}\\
&=\{\mathbf{z}\in \mathbb{R}^n\setminus \{\mathbf{0}\}|
\mathbf{z}+\mathbf{n}\in\mathcal{D}_d(n,k),
~\exists
\mathbf{n}\in\mathbb{R}^n:\|\mathbf{n}\|<d\|\mathbf{z}\|,
T\subseteq\{1,...,n\}:|T|\le k\}.\label{33}
\end{align}
Thus $\mathcal{D}_{J,d}(n,k)$ can be seen as a ``robust'' or ``extended'' version of $\mathcal{D}_J(n,k)$. Again, we shall omit the subscript $J$ when there is no confusion from the context. Define the cones
\begin{align}\label{cone1}
\mathcal{C}(n,k)&:=\{\mathbf{x}\in\mathbb{R}^n~|~\exists t\in \mathbb{R}~{\rm s.t.}~t\mathbf{x}\in \mathcal{D}(n,k)\}
\end{align}
\begin{align}
\mathcal{C}_d(n,k):=\{\mathbf{x}\in\mathbb{R}^n~|~\exists t\in \mathbb{R}~{\rm s.t.}~t\mathbf{x}\in \mathcal{D}_d(n,k)\};\label{def4}
\end{align}
 Also define the following subsets of unit sphere in $\mathbb{R}^n$:
\begin{equation}\label{e43}
\mathcal{K}(n,k):=\mathcal{C}(n,k)\cap S^{n-1},
\end{equation}
\begin{equation}
\mathcal{K}_d(n,k):=\mathcal{C}_d(n,k)\cap S^{n-1}.
\end{equation}
Then by definitions it is easy to see that
\begin{align}
\Omega_J&=\{\nu\in G_l(\mathbb{R}^n)|\mathcal{K}(n,k)\cap\nu=\emptyset\},\label{3940}
\end{align}
and Lemma~\ref{lmth2} in Appendix~\ref{p_th2} implies that
\begin{align}
d\mbox{-}\interior(\Omega_J)&=\{\nu\in G_l(\mathbb{R}^n)|\mathcal{K}_d(n,k)\cap\nu=\emptyset\}\label{e41}.
\end{align}
For any vector $\mathbf{g}\in \mathbb{R}^n$ and $\epsilon>0$, there exists $\mathbf{x}'\in \mathcal{K}_d(n,k)$ so that
\begin{align}\label{ineq1}
\sup_{\mathbf{x} \in \mathcal{K}_d(n,k)}\mathbf{g}^\top\mathbf{x}
\le \mathbf{g}^\top\mathbf{x}'+\epsilon
\end{align}
By (\ref{def4}) and (\ref{33}), there exists $t\neq 0$ and $\mathbf{n}':\|\mathbf{n}'\|<d\|t\mathbf{x}'\|$ such that $t\mathbf{x}'+\mathbf{n}'\in \mathcal{D}(n,k)$. Therefore $\mathbf{x}'+\mathbf{n}\in \mathcal{C}(n,k)$ where $\mathbf{n}:=t^{-1}\mathbf{n}'$. Let $\mathbf{y}$ be the projection of $\mathbf{x}'$ onto the one dimensional subspace spanned by $\mathbf{x}'+\mathbf{n}$. Then $\mathbf{y}\in \mathcal{C}(n,k)$ because $\mathcal{C}(n,k)$ is a cone. Also $\|\mathbf{y}-\mathbf{x}'\|\le \|{\bf (x'+n)-x'}\|\le d\|{\bf x'}\|=d$, $\|\mathbf{y}\|\le 1$ by properties of the projection. Thus
\begin{align}
\mathbf{g}^\top\mathbf{x}'&=\mathbf{g}^\top(\mathbf{x}'-\mathbf{y})+\mathbf{g}^\top\mathbf{y}\\
&\le \|\mathbf{g}\|\|\mathbf{x}'-\mathbf{y}\|+|\mathbf{g}^\top\mathbf{y}/\|\mathbf{y}\||\cdot\|\mathbf{y}\|\\
&\le d\|\mathbf{g}\|+|\mathbf{g}^\top\mathbf{y}/\|\mathbf{y}\||\\
&\le d\|\mathbf{g}\|+\sup_{\mathbf{x} \in \mathcal{K}(n,k)}\mathbf{g}^\top\mathbf{x}\label{ineq2}
\end{align}
The last inequality used the fact that $\pm\mathbf{y}/\|\mathbf{y}\|\in\mathcal{C}(n,k)$, since $\mathcal{C}(n,k)$ is centrally symmetric. Now (\ref{ineq1}), (\ref{ineq2}) and the arbitrariness of $\epsilon$ give
\begin{equation}\label{ineq3}
\sup_{\mathbf{x} \in \mathcal{K}_d(n,k)}\mathbf{g}^\top\mathbf{x}
\le d\|\mathbf{g}\|+\sup_{\mathbf{x} \in \mathcal{K}(n,k)}\mathbf{g}^\top\mathbf{x}
\end{equation}
This result will be useful soon in connecting the two sets $\mathcal{K}_d(n,k)$ and $\mathcal{K}(n,k)$.
\begin{definition}
The \emph{Gaussian width} of a subset of $\mathcal{K}\subseteq S^{n-1}$ is defined as
\begin{equation}
w(\mathcal{K})=\mathbb{E}\sup_{x \in \mathcal{K}}\mathbf{g}^\top\mathbf{x}
\end{equation}
where $\mathbf{g}\sim\mathcal{N}(\mb{0},\mb{I}_n)$.
\end{definition}

From (\ref{ineq3}), the Gaussian width of the extended set $\mathcal{K}_d(n,k)$ is upper bounded by
\begin{align}\label{48}
w(\mathcal{K}_d(n,k))
\le& w(\mathcal{K}(n,k))+d\mathbb{E}\|\mathbf{n}\|
\\
\le& w(\mathcal{K}(n,k))+d\sqrt{\mathbb{E}\|\mathbf{n}\|^2}
\\
\le& w(\mathcal{K}(n,k))+d\sqrt{n}.
\end{align}
A small Gaussian width implies that a random linear subspace of $\mathbb{R}^n$ is not likely to intersect with it:
\begin{theorem}[Escape Through the Mesh (Gordon) \cite{gordon}]\label{thm5}
Let $\mathcal{K}$ be a subset of the unit Euclidean sphere $S^{n-1}$ in $R^n$. Let $\nu$ be a random $(n-m)$-dimensional subspace of $R^n$, distributed uniformly in the Grassmannian with respect to the Haar measure. Assume that
\begin{equation}\label{wcond}
w(\mathcal{K})<\sqrt{m}.
\end{equation}
Then $\nu\cap \mathcal{K}=\emptyset$ with probability at least
\begin{equation}
1-2.5\exp\left(-\frac{(m/\sqrt{m+1}-w(\mathcal{K}))^2}{18}\right).
\end{equation}
\end{theorem}
\begin{remark}
As noted in \cite{rudelson}, the original coefficient $3.5$ in \cite{gordon} can be replaced with $2.5$ shown above. Nevertheless, its exact value does not matter for our purposes.
\end{remark}

From (\ref{3940}), \eqref{e41}, (\ref{48}) and Theorem~\ref{thm5}, one immediately obtains the following estimate of the probability of $\Omega_J$ and $d\mbox{-}\interior(\Omega_J)$ in the case of Gaussian measurement matrix:
\begin{theorem}\label{k}
If $w(\mathcal{K}_J(n,k))<\sqrt{m}-d\sqrt{n}$, then
\begin{equation}
\mu(d\mbox{-}\interior(\Omega_J))\ge 1-2.5\exp\left(-\frac{(m/\sqrt{m+1}-w(\mathcal{K}_J(n,k))-d\sqrt{n})^2}{18}\right).
\end{equation}
\end{theorem}
Note that in Theorem \ref{k}, the results rely on the Gaussian width $\mathcal{K}_J(n,k)$, which is essentially determined by $J$. In the remainder of this subsection we shall analyse the case where $J$ is the $\ell_1$ norm, which then applies to all $F$ satisfying the assumptions in Corollary~\ref{co1} with $p=1$. As remarked earlier, the asymptotic analysis of $\mathcal{K}_{\ell_1}(n,k)$ was carried out by \cite{rudelson} in the study of exact recovery property. Lemma 4.4 and 4.5 in \cite{rudelson} combined yield the following upper bound on the Gaussian width of $\mathcal{K}_{\ell_1}(n,k)$:
\begin{equation}
w(\mathcal{K}_{\ell_1}(n,k))\le 2\sqrt{k(3+2\log(n/k))}\cdot\zeta(n,k)
\end{equation}
where
\begin{equation}
\zeta(n,k)=\exp\left(\frac{\log(1+2\log(en/k))}{4\log(en/k)}+\frac{1}{24k^2\log(en/k)}\right).
\end{equation}
we shall consider the asymptotic case where $k,n,m$ scales linearly, i.e.,
\begin{align}\label{grow1}
n=\lfloor\beta k\rfloor
\end{align}
and
\begin{align}\label{grow2}
m=\lceil\gamma k\rceil
\end{align}
for some constants $\beta>\gamma\ge1$.
Then $w(\mathcal{K}_{\ell_1}(n,k))$ satisfies the condition $w(\mathcal{K}_{\ell_1})<\sqrt{m}-d\sqrt{n}$ in Theorem \ref{k} for large $k,n,m$ if the scaling parameters $\beta,\gamma$ and the number $d$ satisfy
\begin{equation}
2\sqrt{(3+2\log(\beta))}\cdot\exp\left(\frac{\log(1+2\log(e\beta))}{4\log(e\beta)}\right)<\sqrt{\gamma}-d\sqrt{\beta}.
\end{equation}
Define
\begin{equation}\label{delta}
\delta(\beta,\gamma)=\frac{1}{\sqrt{\beta}}\left(\sqrt{\gamma}-
2\sqrt{(3+2\log(\beta))}\cdot\exp\left(\frac{\log(1+2\log(e\beta))}{4\log(e\beta)}\right)\right).
\end{equation}
Notice that $\delta(\beta,\gamma)>0$ when $\gamma>4(3+2\log(\beta))\cdot\exp\left(\frac{\log(1+2\log(e\beta))}{2\log(e\beta)}\right)$. If this is the case, the escape through the mesh theorem implies that $\mu(d\mbox{-}\interior(\Omega_J))$ tends to one as $k\to \infty$, if $d<\delta(\beta,\gamma)$. Therefore we have
\begin{theorem}\label{thm7}
Suppose $n=\lfloor\beta k\rfloor$, $m=\lceil\gamma k\rceil$ for some constants $\beta>\gamma\ge1$, and $\mb{A}$ is rotationally invariant. Assume that $F$ satisfy the condition in Corollary~\ref{co1} with $p=1$. If $\delta(\beta,\gamma)$ defined in (\ref{delta}) is positive and $d<\delta(\beta,\gamma)$, then for $k$ large enough,
the $F$-minimization satisfies RRC with the robustness constant $C=\frac{2(1+d)}{d\sigma_{\min}}$
with probability exceeding
\begin{align}
1-2.5\exp\left(-\frac{n(\delta(\beta,\gamma)-d)^2}{18}\right).
\end{align}
\end{theorem}

The bound on $C$ in Theorem~\ref{thm7} is random since $\sigma_{\min}$ depends on the random matrix $\bf A$. We can particularize Theorem~\ref{thm7} to some rotationally invariant matrix ensembles to obtain convergence results.
Random matrix theory (see for example \cite{DC} and the references therein) reveals that if the entries of $\mathbf{A}$ are i.i.d.~Gaussian with zero mean and variance $1/n$, then $\sigma_{\min}(\mathbf{A}^\top)$ converges to $1-\sqrt{\gamma/\beta}$ almost surely as $k\to\infty$. Thus by Theorem \ref{th22}, we have:
\begin{corollary}[standard Gaussian ensemble]\label{as}
Suppose $n=\lfloor\beta k\rfloor$, $m=\lceil\gamma k\rceil$ for some constants $\beta>\gamma\ge1$, and the entries of the measurement matrix are i.i.d. Gaussian with zero mean and variance $1/n$. Assume that $F$ satisfy the condition in Corollary~\ref{co1} with $p=1$. If $\delta(\beta,\gamma)$ defined in (\ref{delta}) is positive, then with probability converging to one as $k\to\infty$, the $F$-minimization satisfies RRC with the robustness constant $C=\frac{2(1+\delta(\beta,\gamma))}{\delta(\beta,\gamma)(1-\sqrt{\gamma/\beta})}$, where $\delta(\beta,\gamma)$ is defined in (\ref{delta}).
\end{corollary}

The proof of Corollary \ref{as} follows directly from the preceding discussion. Notice that $\beta$ characterizes the sparsity, which is determined by the nature of the signal; and $\frac{\gamma}{\beta}$ is the measurement rate, which is may be controlled by the designer. If we view $\beta$ as a fixed parameter, then Corollary \ref{as} can be interpreted as the tradeoff between measurement rate $\frac{\gamma}{\beta}$ and robustness $C$. Moreover there is phase transition point for $\gamma$ above which $\ell_1$-minimization becomes robust, and hence also guarantees exact recovery.


Another rotationally invariant measurement matrix can be obtained by selecting $m$ rows from a matrix in the circular real ensemble (CRE($n$)) \cite{dyson1962threefold}, which has the uniform distribution on the set of orthogonal matrices of dimension $n$. An advantage of such a measurement matrix construction over the Gaussian ensemble is that $\sigma_{\max}=\sigma_{\min}=1$ which does not introduce extra slackness into the upper and lower bounds on $C$ in Theorem~\ref{th22}.

\begin{corollary}[circular real ensemble]\label{as1}
Suppose $n=\lfloor\beta k\rfloor$, $m=\lceil\gamma k\rceil$ for some constants $\beta>\gamma\ge1$, and $\mb{A}$ is composed of the first $m$ rows of a matrix in CRE($n$). Assume that $F$ satisfy the condition in Corollary~\ref{co1} with $p=1$. If $\delta(\beta,\gamma)$ defined in (\ref{delta}) is positive, then with probability converging to one as $k\to\infty$, the $F$-minimization satisfies RRC with the robustness constant $C=\frac{2(1+\delta(\beta,\gamma))}{\delta(\beta,\gamma)}$, where $\delta(\beta,\gamma)$ is defined in (\ref{delta}).
\end{corollary}

\begin{remark}\label{rem8}
The derivations in this subsection relies on the Gaussian width associated with the $\ell_1$ penalty function. It is possible to extend the approach to other cost functions as long as an estimate of the associated Gaussian width is available.
For example, in \cite[Section 3]{stojnic}, an upper bound on $\mathcal{K}_{\ell_p}$ was derived, although numerical optimizations needs to be solved in order to compute that bound. A recent work \cite{amelunxen2014living} also evaluated a related quantity called statistical dimension for various convex cones.
\end{remark}

\subsection{Beyond Rotationally Matrix Ensembles}\label{sbeyond}
The approach in \ref{sgordon} based on Gordon's escape through the mesh theorem relies on the uniformity of the distribution of $\mathcal{N}(\mb{A})$. Without uniformity, it may still be possible to upper bound $C$ using our quantitative characterization of RRC (Theorem~\ref{th22}) and a connection between NSP and a \emph{restricted eigenvalue condition}. In this subsection we illustrate this alternative approach in the case of sub-Gaussian random matrices, while in principle this method is applicable to any random matrix satisfying the \emph{restricted isometry property}.

We begin by defining a version of \emph{restricted eigenvalue condition}, which is a generalization of the definitions in \cite{bickel2009simultaneous}\cite{raskutti2010restricted}.
\begin{definition}
We say ${\sf RE}_J(k,c)$ is satisfied if
\begin{align}
\|{\bf Ax}\|\ge c \textrm{ for all }{\bf x}\in\mathcal{K}(n,k)
\end{align}
where $\mathcal{K}(n,k)$ is as defined in \eqref{e43}.
\end{definition}
\begin{lemma}\label{lem5}
If ${\sf RE}_J(k,c)$ is satisfied and $c-d\sigma_{\max}\ge0$, then $\mathcal{N}({\bf A})\in d\mbox{-}\interior(\Omega_J)$.
\end{lemma}
\begin{proof}
For any ${\bf x}\in \mathcal{K}(n,k)$ and ${\bf v}$ satisfying $\|{\bf v}\|<d$, we have
\begin{align}
\|{\bf A(x+v)}\|&\ge\|\bf Ax\|-\|Av\|
\\
&>c-d\sigma_{\max}
\\
&\ge0.
\end{align}
Hence ${\bf x+v}\notin \mathcal{N}(\bf A)$. This shows that $\sin(\angle({\bf x}_1,{\bf x}_2))\ge d$ for any ${\bf x}_1\in\mathcal{D}_J(n,k)$ and ${\bf x}_2\in \mathcal{N}(\bf A)$, and hence $\mathcal{N}({\bf A})\in d\mbox{-}\interior(\Omega_J)$ by the first part of Theorem~\ref{th22}.
\end{proof}

\begin{definition}[Restricted isometry propery] Given an $m\times n$ matrix $\bf A$, define
\begin{align}
\phi_{\min}(k):=\min_{\mb{x}\in\mathbb{R}^n\setminus\{\mb{0}\},|\supp(\mb{x})|\le k}\frac{\|{\bf Ax}\|^2}{\|{\bf x}\|^2}
\end{align}
and
\begin{align}
\phi_{\max}(k):=\max_{\mb{x}\in\mathbb{R}^n\setminus\{\mb{0}\},|\supp(\mb{x})|\le k}\frac{\|{\bf Ax}\|^2}{\|{\bf x}\|^2}.
\end{align}
If
\begin{align}
\phi_{\min}(k)\le 1 \le \phi_{\max}(k)
\end{align}
then $\bf A$ is said to satisfy ${\sf RIP}(k,\delta)$ for all $\delta\le\max\{1-\phi_{\min}(k),\phi_{\max}(k)-1\}$.
\end{definition}

When $J$ is the $\ell_p$ norm, the restricted eigenvalue condition is related to RIP via the following, which can be seen as the $\ell_p$ version of [Lemma~3]\cite{bickel2009simultaneous}:
\begin{lemma}\label{lem6}
If
\begin{align}
c:=\frac{\sqrt{\phi_{\min}(b+k)}-\sqrt{\phi_{\max}(b)}\left(\frac{k}{b}\right)^{\frac{1}{p}-\frac{1}{2}}}
{1+\left(\frac{k}{b}\right)^{\frac{1}{p}-\frac{1}{2}}}>0
\end{align}
for some $k,b$ such that $1\le k\le \frac{n}{2}$, $b>k$, $k+b\le n$ and $0<p\le1$, then
${\sf RE}_{\ell_p}(k,c)$ is satisfied.
\end{lemma}
\begin{proof}
See Appendix~\ref{lem6proof}.
\end{proof}

Sub-Gaussian random variables are commonly defined as follows. For equivalent definitions, see \cite[Lemma~5.5]{vershynin2010introduction}.
\begin{definition}\label{def8}
A random variable $X$ is said to be \emph{sub-Gaussian with variance proxy $\sigma^2$} if for all $t\in\mathbb{R}$,
\begin{align}
\mathbb{E}\exp(tX)\le\exp(\sigma^2t^2).
\end{align}
\end{definition}

Note that Definition~\ref{def8} implies that $\mathbb{E}X=\left.\frac{{\rm d}}{{\rm d}t}\mathbb{E}\exp(tX)\right|_{t=0}=0$.

\begin{definition}[Isotropic random vectors]\label{def9}
A random vector $\mb{X}\in\mathbb{R}^n$ is called \emph{isotropic} if $\mathbb{E}\mb{X}\mb{X}^{\top}=t\mb{I}$ for some $t>0$.
\end{definition}

Note that Definition~\ref{def9} only imposes a second moment condition on $\mb{X}$ and does not imply that $\frac{\mb{X}}{\|{\mb{X}}\|}$ is uniformly distributed on the unit sphere. According to Definition~\ref{def9}, a random matrix with i.i.d.~entries has independent isotropic rows. Thus results on RIP for isotropic sub-Gaussian matrices apply. The following result is a specialization of \cite[Theorem~5.65]{vershynin2010introduction} to i.i.d.~sub-Gaussian matrices:
\begin{lemma}[RIP for sub-Gaussian matrix]\cite{vershynin2010introduction}\label{lem7}
Let $\bf B$ be a random matrix with i.i.d.~entries having a fixed sub-Gaussian distribution, and $\mb{A}=\frac{1}{\sqrt{m}}\mb{B}$. Then for every $1\le k\le n$, $0<\delta<1$ and
\begin{align}
m\ge D\delta^{-2}k\log\frac{en}{k},
\end{align}
the matrix $\bf A$ satisfies ${\sf RIP}(k,\delta)$ with probability at least $1-2\exp(-E\delta^2m)$, where the constants $D,E>0$ depend only on the variance proxy of the sub-Gaussian distribution.
\end{lemma}
\begin{remark}
The original definition of sub-Gaussian random variable in \cite{vershynin2010introduction} is equivalent to our definition up to a universal constant, see \cite[Lemma~5.5]{vershynin2010introduction}. If necessary, an upper bound on the constant $D$ can be obtained by tracking down the proof of \cite[Theorem~5.65]{vershynin2010introduction} and the arguments therein.
\end{remark}

\begin{theorem}
Suppose $F\in\mathcal{M}$, $F$ is non-decreasing and $F(x)/x^p$ is non-increasing on $\mathbb{R}^+$ for some $p\in(0,1]$. The random matrix $\bf B$ has i.i.d.~sub-Gaussian entries with unit variance and $\mb{A}:=\frac{1}{\sqrt{m}}\mb{B}$. Assume the linear growth regime \eqref{grow1} and \eqref{grow2} with $\beta>2$. If
\begin{align}
c:=\max_{\frac{1}{\beta-1}\le \rho\le 1}
\frac{\sqrt{1-\sqrt{\frac{D(1+\rho)}{\gamma\rho}\log\frac{e\beta\rho}{1+\rho}}}
-\sqrt{1+\sqrt{\frac{D\log(e\rho\beta)}{\gamma\rho}}}\rho^{\frac{1}{p}-\frac{1}{2}}}
{1+\rho^{\frac{1}{p}-\frac{1}{2}}}>0
\end{align}
where $D$ is the constant in Lemma~\ref{lem7} depending only on the variance proxy of the sub-Gaussian distribution,
then with probability converging to one as $k\to\infty$, $J$-minimization satisfies RRC with the robustness constant
\begin{align}
C=\frac{2(1+c+\sqrt{\beta/\gamma})}{c(\sqrt{\beta/\gamma}-1)}.
\end{align}
\end{theorem}
\begin{proof}
The claim follows directly from Lemma~\ref{lem5}, Lemma~\ref{lem6}, Lemma~\ref{lem7}, Corollary~\ref{co1}, Theorem~\ref{th22} and the fact that
\begin{align}
\sigma_{\max}\to \sqrt{\frac{\beta}{\gamma}}+1,
\\
\sigma_{\min}\to \sqrt{\frac{\beta}{\gamma}}-1,
\end{align}
as $k\to\infty$ almost surely due to Bai-Yin’s law \cite{bai1993limit}.
\end{proof}

Clearly, the line of arguments above can be applied to other matrix ensembles by replacing Lemma~\ref{lem7} with the RIP result for the corresponding ensemble; for example, see \cite{adamczak2011restricted} for the sub-exponential ensemble and \cite{rauhut2010compressive} for the Fourier ensemble.

\section{Comparison with Other Works}\label{comp}
\subsection{The ERC/RRC Equivalence for $\ell_p$-minimization}\label{s5a}
To the best of our knowledge, the \emph{exact} characterization of robustness of $\ell_p$-minimization first appeared in \cite{Foucart}, where the definition of robustness is the same as in our paper. In \cite{Foucart} a variant of the null space property, called NSP',
was proposed as a sufficient condition for the robustness of $\ell_p$ minimization. The NSP' is obviously stronger than NSP, but the reverse situation is not immediately clear.
Later Aldroubi et al.~adopted the same approach in \cite{lqnsp}, and proved that NSP and NSP' are in fact equivalent (see also \cite{lqharmonic}). The proof method in \cite{lqnsp} requires a lemma from matrix analysis \cite[Lemma 2.1]{lqnsp}. We remark that this lemma, from a slightly more general viewpoint, can be seen as a classical application of the open mapping theorem in functional analysis \cite[Chapter 4, Corollary 3.2]{stein_func}. Thus it is established that NSP, NSP', ERC and RRC are all equivalent for $\ell_p$-minimization.

While the NSP' approach is nice for the $\ell_p$ case, it is hard to be extended to the general $F$-minimization problem. This is because NSP' consists of a homogeneous inequality, which appears to work well only for homogeneous cost functions such as the $\ell_p$ norm. In contrast, the heart of our approach is the interior point characterization of RRC (Theorem \ref{th2}) for the general $F$-minimization problem. Then our proof of the ERC/RRC equivalence for $\ell_p$-minimization, although involves some basic facts about topological spaces, follows almost immediately as a corollary. Note this application is particularly interesting since the statement of ERC/RRC equivalence does not involve topology at all. Nevertheless, we emphasize that the significance of Theorem \ref{th2} is to provide a simple, accurate, and general characterization of the robustness of $F$-minimization; and the proof of ERC/RRC equivalence for $\ell_p$ is one of its applications in a special setting.

\subsection{The Notion of Sparseness Measure}\label{s5b}
The sparseness measure defines the class of cost functions of our interest, and is therefore of great importance. In general we want to consider a class wide enough to cover most applications,
but also small enough to possess important recovery properties. Intuitively, the cost function should penalize non-zero coefficients, and not penalize the zero coefficients. However there are additional reasonable requirements, the precise definitions of which differ in the literature. For clarifications we compare these different requirements on $F$ as follows (Recall that $\mathcal{M}$ denotes the set of sparseness measures defined in Definition \ref{def1}):

$\bullet$ $F\in\mathcal{M}$. This is the class of functions mainly considered in our paper as well as \cite{lqnsp}. This seems to be most general class of functions that can be studied by the null space property.

$\bullet$ $F\in\mathcal{M}$ and $F$ is non-decreasing. This requirement appears in Theorem \ref{probeq}. As shown in the counter example in the remark following the theorem, the assumption that $F$ being non-decreasing cannot be dropped.

$\bullet$ $F\in\mathcal{M}$, $F$ is non-decreasing, and $F(t)/t$ is non-increasing\footnote{The assumption of $F(t)/t$ being non-increasing guarantees that $F$ is subadditive, as shown in Proposition \ref{prop1}.}. This requirement is considered in \cite{Gribonval,troppdis}, and it guarantees that the cost function $J_F$ is better than $\ell_1$ norm in the sense of ERC.
There is also another nice property relating to the composition of two functions in this class \cite[Lemma 7]{Gribonval}. Finally, $\ell_1$ norm is the only convex cost function whose corresponding $F$ satisfies this definition of sparseness measure \cite[Proposition 2.1]{troppdis}.

\subsection{About the Robustness Constant}
The robustness constant derived in Corollary~\ref{as} is upper bounded asymptotically in the linear scaling regime.
On the other hand, the robustness constant obtained in \cite{lqnsp} is (in the notation of our paper) $n^{1/p-1/2}(\frac{4}{1-\theta_{\ell_p}})^{1/p}\sqrt{\frac{2}{\sigma^2_{\min}(\mathbf{A}^{\top})}}$, which blows up in the linear scaling setting in Corollary \ref{as} as $n\to\infty$.

In the special case of $\ell_1$-minimization, our problem setting and the notion of robustness is also the same as the classical paper \cite{candes2006stable} by Cand\`{e}s. In Theorem 1 of that paper, it is shown that (in the notation of our paper) if the RIP constants $\delta_{3k}+3\delta_{4k}<2$ then RRC is satisfied while robustness constant $C$ may depend on $\delta_{4k}$.
Later, the same author provided a similar but improved result on robustness in \cite{candes2008restricted}, where the assumption depends on $\delta_{2k}$ instead of $\delta_{4k}$.
However the known estimates of RIP constant usually contains implicit constants that are hard to compute.
Moreover, according to a comparative study of \cite{blanchard2011compressed}, performance estimates for exact recovery based on RIP is often not as sharp as analysis based on Gordon's theorem in the proportional growth setting. This implies that Gordon's theorem also provides a better estimate for the robustness constant, since the threshold for exact recovery coincides with the threshold for $C<\infty$ in the $\ell_1$ case.

While the setup of Corollary~\ref{as} is well suited and common for signal processing and communication applications, there are other notions of robustness in other settings. For example, in statistical learning one is often interested in the minimax rates for recoveries of sparse vectors. In \cite{raskutti2011minimax} the $\ell_2$ minimax rates for high-dimensional linear regression over $\ell_q$-balls were derived, and the analysis therein used a similar restricted eigenvalue assumption \cite{raskutti2010restricted}. However the problem studied in \cite{raskutti2011minimax} is different from ours in several notable ways. For instance, \cite{raskutti2011minimax} concerns the \emph{optimal} estimator for a signal belonging to an \emph{$\ell_q$-ball}, whereas our paper considers reconstructing a strictly sparse signal (i.e.~in a certain \emph{$\ell_0$-ball}) via \emph{$\ell_p$-minimization}. For strictly sparse signals and optimal estimators under Gaussian noises, the noise sensitivity $C<\infty$ whenever the number measurement is larger than the sparsity; see \cite{wu2012optimal}.

\section{Conclusion}\label{conclusion}
$F$-minimization refers to a broad family of non-convex optimizations for sparse recovery which has outperformed conventional $\ell_1$ minimization experimentally. However because of some technical difficulties, the robustness of $F$-minimization was not fully understood before, even though its exact recovery property has been studied by using the null space property. The novel approach of this paper is to view the collection of null spaces as a topological manifold, called the Grassmann manifold,
and provide an exact characterization of the relationship between robust recovery condition (RRC) and exact recovery condition (ERC):
the set of null spaces of measurement matrix $\mathbf{A}$ satisfying RRC is the interior of the one satisfying ERC. 
Building on this characterization,
the previous result of the equivalence of exact recovery and robust recovery in the $\ell_p$-minimization follows as an easy consequence. Although the RRC set is in general a proper subset of the ERC set, the difference is only a set of measure zero and of the first category, provided that $F$ satisfies the mild condition of being non-decreasing.
The practical significance of this result is that ERC and RRC will occur with equal probability when the measurement matrix is randomly generated according to a continuous distribution. On the quantitative side, a desired level of robustness can be guaranted if the null space of $\mathbf{A}$ is drawn from the ``$d$-interior'' of $\Omega_J$ for a certain $d$.
Specifically, the null spaces in $d\mbox{-}\interior(\Omega_J)$ satisfies RRC with $C=\frac{2+2d}{d\sigma_{\min}(\mathbf{A}^{\top})}$; and null spaces outside of $d\mbox{-}\interior(\Omega_J)$ cannot satisfy RRC with $C=\frac{2-2d}{d\sigma_{\max}(\mathbf{A}^{\top})}$.

Although our main contribution of clearing up the relation between ERC and RRC is of conceptual nature, its ramifications provide several guidelines for the engineering design:
\begin{enumerate}
    \item Achievability results of the tradeoff between sampling rate and robustness in \ref{sgordon} appear to be tight under certain conditions, which may help engineers to evaluate how well the existing algorithms perform. For example, if $\sigma_{\max}=\sigma_{\min}$ (which can always achieved by taking $\mb{Q}^{\top}$ in the $QR$ factorization of $\mb{A}^{\top}$ as the measurement matrix), and $d$ is small, then the upper and lower bounds on $C$ in Theorem~\ref{th22} are tight. Moreover, it can be shown that the bound in \eqref{48} is tight for small $\frac{\gamma}{\beta}$, and recent experimental and analytical results (see \cite{amelunxen2014living} and the references therein) indicate that estimates on $\mu(\Omega_{\ell_1})$ via Gordon's theorem appear to be tight asymptotically. These suggest that our estimates on $\mu(d\mbox{-}\Omega_J)$, and hence the estimate of $C$ in Corollary~\ref{as1} are asymptotically tight under the above conditions.
    \item In order to have decent worst case performances, $F$ must converge to zero sufficiently fast at the origin, according to the comparison rule Theorem~\ref{th1}. From another perspective, it suggests that there is not much gain to use $F$-minimization instead of $\ell_p$-minimization where $p$ is the exponent of H\"{o}lder continuity of $F$ at the origin, either in the sense of exact recovery or robustness. This is not true if the measurement matrix has a discrete distribution, but we expect that the discrepancy will generally disappear for large $n$ expect in some artificially created bizarre examples.
\end{enumerate}

Further improvements may include finding more general conditions on $F$ than non-decreasing in order that Theorem~\ref{probeq} still holds. Studies of the robustness under perturbation in the measurement matrix may also be of interest. Also, the approach in \ref{sgordon} can be extended to other sparseness measures, provided that an estimate of the Gaussian width of $\mathbf{K}_J$ is available. In fact, upper bounds of the Gaussian width of $\mathbf{K}_{\ell_p}$ is already available in previous research \cite{stojnic}. Another important but challenging problem is whether the asymptotic performances in Corollary~\ref{as} and Corollary~\ref{as1} predicted by Gordon's theorem are also achievable for other random measurement matrices, such as a matrix with i.i.d.~non-Gaussian entries or the sampled Fourier matrix. According to recent experimental and analytical results such as \cite{donoho2009observed}\cite{bayati2012universality}, various phase transition points in compressed sensing are in some sense insensitive of the measurement matrix ensemble.

\appendices
\section{Proof of Theorem \ref{suf2}}\label{p_suf2}
For the direct part, assume that (\ref{eq16}) is true. Suppose $\mathbf{\hat{x}}$ is a feasible vector with $J(\mathbf{\hat{x}})\le J(\mathbf{\bar{x}})$, and we want to show that $\mathbf{\hat{x}}$ is close to $\mathbf{\bar{x}}$. From the constraint of the optimization we have
\begin{equation}
\|\mathbf{A}(\mathbf{\hat{x}}-\mathbf{\bar{x}})\|\le
\|\mathbf{A}\mathbf{\hat{x}}-\mathbf{y}\|
+\|\mathbf{A}\mathbf{\bar{x}}-\mathbf{y}\|
\le 2\epsilon.
\end{equation}
Define $\mathbf{u}:=\mathbf{\bar{x}}-\mathbf{\hat{x}}$; we find that
\begin{align}
J({\bf u}_T)
&\ge J(\bar{\bf x}_T)-J(\hat{\bf x}_T)  \label{step1}\\
&=  J(\bar{\bf x})-J(\hat{\bf x}_T)    \label{step2}\\
&\ge J(\hat{\bf x})-J(\hat{\bf x}_T)   \label{step3}\\
&=  J(\hat{\bf x}_{T^c})\nonumber\\
&=  J({\bf u}_{T^c})\nonumber
\end{align}
Where (\ref{step1}) is from subadditivity of $F$, (\ref{step2}) is because $\bar{\bf x}$ is supported on $T$, and (\ref{step3}) is from the assumption of $\hat{\bf x}$.
Decompose $\mathbf{u}=\mathbf{z}+\mathbf{n}$, such that $\mathbf{z}\in\mathcal{N}(\mathbf{A})$, $\mathbf{n}\in\mathcal{N}(\mathbf{A})^{\bot}$. The above inequality is in contradiction with (\ref{eq16}), hence from the assumption we must have:
\begin{equation}
\|\mathbf{n}\|\ge d\|\mathbf{z}\|,
\end{equation}
which by triangular inequality implies that $\|{\bf n}\|\ge d(\bf \|u\|-\|n\|)$, or $\|{\bf n}\|\ge\frac{d}{1+d}\|\bf u\|$. Therefore
\begin{align}
2\epsilon
&\ge\|\mathbf{A}(\mathbf{\hat{x}}-\mathbf{\bar{x}})\|\nonumber\\
&=\|\mathbf{A}\mathbf{n}\|\nonumber\\
&\ge \sigma_{\min}\|\mathbf{n}\|\nonumber\\
&\ge \sigma_{\min}\frac{d}{1+d}\|\mathbf{u}\|\nonumber\\
&=\sigma_{\min}\frac{d}{1+d}\|\mathbf{\hat{x}}-\mathbf{\bar{x}}\|
,\nonumber
\end{align}
where $\sigma_{\min}$ is the smallest singular value of $\mathbf{A}^{\top}$. Thus RRC holds with $C=\frac{2(1+d)}{d\sigma_{\min}}$.

Conversely, assuming that
\begin{align}\label{cond4}
&\exists d>0, \mathbf{z}\in \mathcal{N}(\mathbf{A})\setminus\{\mathbf{0}\}, \mathbf{n}:\|\mathbf{n}\|<d\|\mathbf{z}\|, T:|T|\le k \nonumber\\
&\textrm{s.t.~} J(\mathbf{z}_T+\mathbf{n}_T)\ge J(\mathbf{z}_{T^c}+\mathbf{n}_{T^c}),
\end{align}
we will show that RRC with $C=\frac{2(1-d)}{d\sigma_{\max}}$ is impossible. To do this, we will construct $\hat{\mathbf{x}}$, $\bar{\mathbf{x}}$ with $J(\bar{\mathbf{x}})\ge J(\hat{\mathbf{x}})$, and $\mathbf{v},\epsilon$ with $\|\mathbf{v}\|=\epsilon$, $\|\mathbf{A}\hat{\mathbf{x}}-(\mathbf{A}\bar{\mathbf{x}}+\mathbf{v})\|=\epsilon$;
but
\begin{equation}\label{eq3}
\|\hat{\mathbf{x}}-\bar{\mathbf{x}}\|> \frac{2(1-d)\epsilon}{d\|\mathbf{A}\|},
\end{equation}
where $\|\mathbf{A}\|=\sigma_{\max}$ denotes the operator norm of matrix $\mathbf{A}$.

Now suppose $d,\mathbf{n},\mathbf{z}$ are as in (\ref{cond4}). Define\footnote{For $\mathbf{x}\in \mathbb{R}^{|T|}$, we denote by $\mathbf{x}^T\in \mathbb{R}^n$ the $n$-vector supported on $T$ satisfying $(\mathbf{x}^T)_T=\mathbf{x}$.} $\bf u:=z+n$, $\hat{\mathbf{x}}:=(\mathbf{u}_T)^T$, $\bar{\mathbf{x}}:=-(\mathbf{u}_{T^c})^{T^c}$, $\mathbf{v}:=\mathbf{A}(\hat{\mathbf{x}}-\bar{\mathbf{x}})/2$, $\epsilon:=\|\mathbf{v}\|$. Then feasibility is satisfied since $\|\mathbf{A}\hat{\mathbf{x}}-(\mathbf{A}\bar{\mathbf{x}}+\mathbf{v})\|=\epsilon$. Also
\begin{align}
2\epsilon&=\|\mathbf{A}(\hat{\mathbf{x}}-\bar{\mathbf{x}})\|\nonumber\\
&=\|\mathbf{A}\mathbf{n}\|\nonumber\\
&\le \|\mathbf{A}\|\|\mathbf{n}\|\nonumber\\
&< \|\mathbf{A}\|\frac{d}{1-d}\|\mathbf{u}\|,\nonumber
\end{align}
where the last step is because $\|{\bf u}\|+\|{\bf n}\|\ge\|{\bf z}\|>\frac{1}{d}\|{\bf n}\|$, which implies that $\|{\bf u}\|>(\frac{1}{d}-1)\|{\bf n}\|$. Thus the relation (\ref{eq3}) holds, as desired.

\section{Proof of Theorem \ref{th2} and Theorem \ref{th22}}\label{p_th2}
The proof of the theorems will be based on the following result:
\begin{lemma}\label{lmth2}
Suppose $\nu\in G_{l}(\mathbb{R}^n)$. For all $\mathbf{z}\in \nu\setminus \{\mathbf{0}\},\|\mathbf{n}\|<d\|\mathbf{z}\|$, there exists $\nu'\in G_{l}(\mathbb{R}^n)$ such that $\mathbf{z}+\mathbf{n}\in \nu'$ and $\dist(\nu,\nu')<d$.
\end{lemma}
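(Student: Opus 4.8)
The plan is to build the witness subspace $\nu'$ by perturbing $\nu$ in a single direction and then to control the resulting change of the orthogonal projection. First I would dispose of a trivial regime: since any two projections obey $\|\mathbf{P}_\nu-\mathbf{P}_{\nu'}\|\le 1$, the statement is vacuous unless $d<1$, so I assume $0<d<1$. In this range $\|\mathbf{n}\|<d\|\mathbf{z}\|<\|\mathbf{z}\|$, which will guarantee that the various denominators below are positive.

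The construction itself is the core idea. I would decompose $\mathbf{n}=\mathbf{n}_\parallel+\mathbf{n}_\perp$ with $\mathbf{n}_\parallel\in\nu$ and $\mathbf{n}_\perp\perp\nu$, and set $v:=\mathbf{z}+\mathbf{n}_\parallel=\mathbf{P}_\nu(\mathbf{z}+\mathbf{n})$, which is nonzero because $\|\mathbf{n}_\parallel\|\le\|\mathbf{n}\|<\|\mathbf{z}\|$. Writing $\nu=\spn\{v\}\oplus W$ with $W:=\nu\cap v^\perp$, so that $\dim W=l-1$, I define $\nu':=\spn\{\mathbf{z}+\mathbf{n}\}\oplus W$. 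By construction $\mathbf{z}+\mathbf{n}\in\nu'$, and $\dim\nu'=l$, because $\mathbf{z}+\mathbf{n}=v+\mathbf{n}_\perp$ has nonzero $\nu$-component $v$ orthogonal to $W$. Intuitively, $\nu'$ is obtained from $\nu$ by swapping the single direction $v$ for the direction $\mathbf{z}+\mathbf{n}$ while leaving the complementary $(l-1)$-plane $W$ untouched.

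The simplifying observation that makes the distance computable is that both direct sums are orthogonal: $v\perp W$ by definition of $W$, and $\mathbf{n}_\perp\perp\nu\supseteq W$, so $\mathbf{z}+\mathbf{n}=v+\mathbf{n}_\perp\perp W$ as well. Consequently $\mathbf{P}_\nu=\mathbf{P}_W+\mathbf{P}_{\spn\{v\}}$ and $\mathbf{P}_{\nu'}=\mathbf{P}_W+\mathbf{P}_{\spn\{\mathbf{z}+\mathbf{n}\}}$, so $\mathbf{P}_\nu-\mathbf{P}_{\nu'}$ is a difference of two rank-one orthogonal projections onto lines lying in the plane $\spn\{v,\mathbf{n}_\perp\}$. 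A standard two-dimensional computation shows that the spectral norm of such a difference is $\sin\psi$, where $\psi$ is the angle between the two lines; since $v\perp\mathbf{n}_\perp$, this yields $\dist(\nu,\nu')=\|\mathbf{n}_\perp\|/\|\mathbf{z}+\mathbf{n}\|$, i.e. the sine of the angle between $\mathbf{z}+\mathbf{n}$ and its projection $v$ onto $\nu$. (One can note in passing that this equals the distance from $\nu$ to \emph{any} subspace containing $\mathbf{z}+\mathbf{n}$, so the construction is in fact optimal.)

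The hard part will be the final quantitative estimate: showing this sine is strictly below $d$, rather than the naive $d/(1-d)$ one gets from crude triangle-inequality bounds. Using $\|\mathbf{n}_\perp\|\le\|\mathbf{n}\|$ together with the worst-case bound $\|v\|\ge\|\mathbf{z}\|-\|\mathbf{n}_\parallel\|$ (tight when $\mathbf{n}_\parallel$ is anti-parallel to $\mathbf{z}$), and normalizing $\|\mathbf{z}\|=1$ with $c:=\|\mathbf{n}_\parallel\|$ and $s:=\|\mathbf{n}_\perp\|$, the target reduces to $\sin^2\psi\le s^2/((1-c)^2+s^2)<d^2$ subject to $c^2+s^2<d^2$. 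I would rewrite the desired inequality as $s^2(1-d^2)<d^2(1-c)^2$ and close it through the identity $d^2(1-c)^2-(d^2-c^2)(1-d^2)=(c-d^2)^2\ge0$, which gives $s^2<d^2-c^2\le d^2(1-c)^2/(1-d^2)$. The emergence of the perfect square $(c-d^2)^2$ is precisely what forces the threshold to be exactly $d$ and not larger, so this elementary but sharp estimate is the delicate step and explains why the hypothesis $\|\mathbf{n}\|<d\|\mathbf{z}\|$ is the right one.
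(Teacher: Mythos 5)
Your proposal is correct, but its quantitative core takes a genuinely different route from the paper's. Both arguments share the same skeleton---replace one line of $\nu$ by $\spn\{\mathbf{z}+\mathbf{n}\}$ while freezing an $(l-1)$-dimensional subspace of $\nu$---but the paper freezes $\nu_0:=\nu\cap\mathbf{z}^{\perp}$ rather than your $W=\nu\cap v^{\perp}$. With the paper's choice the sum $\spn\{\mathbf{z}+\mathbf{n}\}\oplus\nu_0$ is in general not orthogonal, so the paper computes via principal angles: the first $l-1$ angles vanish and $\theta_l=\angle(\mathbf{z},\,\mathbf{z}+\mathbf{P}_{\nu_0^{\perp}}\mathbf{n})$, whence $\dist(\nu,\nu')=\sin\theta_l\le\|\mathbf{P}_{\nu_0^{\perp}}\mathbf{n}\|/\|\mathbf{z}\|\le\|\mathbf{n}\|/\|\mathbf{z}\|<d$ by one line of elementary geometry; since the fixed subspace is orthogonal to $\mathbf{z}$ itself, the denominator $\|\mathbf{z}\|$ appears for free and no sharp scalar inequality is ever needed. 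Your choice of axis buys orthogonality of both direct sums, reducing the distance to the exact rank-one computation $\|\mathbf{n}_{\perp}\|/\|\mathbf{z}+\mathbf{n}\|$ (which, as you note, is optimal among all $\nu'$ containing $\mathbf{z}+\mathbf{n}$, so your construction even shows the lemma cannot be improved by a cleverer choice of $\nu'$); the price is that the threshold $d$ is only recovered through your sharp estimate $s^2(1-d^2)<d^2(1-c)^2$ via the identity $d^2(1-c)^2-(d^2-c^2)(1-d^2)=(c-d^2)^2\ge 0$, and this sharpness is genuinely necessary for your construction: the intermediate bound $\sin\psi\le s/(1-c)$, obtained by measuring the distance from $v$ to the line through $\mathbf{z}+\mathbf{n}$, can be as large as $d/\sqrt{1-d^2}>d$ (attained near $c=d^2$), so keeping the exact denominator $\sqrt{\|v\|^2+s^2}\ge\sqrt{(1-c)^2+s^2}$ is essential. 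In short: the paper trades optimality of $\nu'$ for a one-line estimate, while you trade a delicate estimate for an exact and optimal distance formula built from elementary orthogonal decompositions, avoiding principal angles altogether. One small slip in your framing: at $d=1$ the statement is not vacuous, since $\|\mathbf{P}_{\nu}-\mathbf{P}_{\nu'}\|\le 1$ is not strict; fortunately your argument survives verbatim, because the undivided form of your target inequality, $s^2(1-d^2)<d^2(1-c)^2$, reads $0<(1-c)^2$ at $d=1$ and holds since $c\le\|\mathbf{n}\|<\|\mathbf{z}\|=1$---only the rearranged chain $s^2<d^2-c^2\le d^2(1-c)^2/(1-d^2)$, which divides by $1-d^2$, should be avoided in that boundary case.
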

\begin{proof}
If $d>1$, then any $\nu'\in G_l(\mathbb{R}^n)$ will satisfy $\dist(\nu,\nu')\le1<d$ as desired. Now suppose $d\le 1$, so that $\bf z+n\neq 0$. Let $\nu_0\subseteq \nu$ be the subspace such that $\dim(\nu_0)=l-1$ and ${\bf z}\bot \nu_0$. Define $\nu'=\spn({\bf z+n})\oplus \nu_0\in G_l(\mathbb{R}^n)$.~\footnote{Here $\oplus$ denotes the direct sum of linear subspaces.} Let $\theta_i, i=1,\dots, l$ be the principal angles between $\nu$ and $\nu'$ in the ascending order. Then by the construction we have
\begin{align}
\theta_i&=0,\quad \forall i=1,\dots,l-1;\\
\theta_l&=\angle({\bf z},{\bf P}_{\nu_0^{\bot}}(\bf z+n))=\angle({\bf z},{\bf z}+{\bf P}_{\nu_0^{\bot}}{\bf n}).
\end{align}
where ${\bf P}_{\nu_0^{\bot}}$ denotes the projection matrix onto the orthogonal complement of $\nu_0$. Then,
\begin{align}
\dist(\nu,\nu')&=\|\bf P_{\nu}-P_{\nu'}\| \\
&=\sin(\theta_l) \label{l61}\\
&=\sin(\angle({\bf z},{\bf z}+{\bf P}_{\nu_0^{\bot}}{\bf n}))\\
&\le \|{\bf P}_{\nu_0^{\bot}}{\bf n}\|/\|\bf z\|\label{l62}\\
&\le \|{\bf n}\|/\|\bf z\|\\
&\le d,
\end{align}
where (\ref{l61}) is from a basic property of the principal angles, see for example \cite{optheory}, and (\ref{l62}) is from elementary geometry. The lemma is proved.
\end{proof}
\begin{proof}[Proof of Theorem \ref{th22}]
If $\nu\in d\mbox{-}\interior(\Omega_J)$, then by definition we have
\begin{equation}\label{leth21}
\nu'\in \Omega_J,\quad\forall d:\dist(\nu,\nu')<d.
\end{equation}
Now for any $\mathbf{z}\in \nu\setminus \{\mathbf{0}\}$, and $\mathbf{n}$ satisfying $\|\mathbf{n}\|<d\|\mathbf{z}\|$, there exist $\nu'$ such that $\mathbf{z}+\mathbf{n}\in \nu'$ and $\dist(\nu,\nu')<d$ by Lemma \ref{lmth2}. Define $\bf z' = z + n\in \nu'$. Since $\nu'\in\Omega_J$ by (\ref{leth21}), we have $J({\bf z}'_T ) < J({\bf z}'_{T^c})$ for all $T$ such that $|T|\le k$, which is exactly (\ref{eq16}).

Conversely, suppose that RRC is satisfied for some $\bf A$ with $C=\frac{2(1-d)}{d\sigma_{\max}}$ for some $d>0$. Let $\nu:=\mathcal{N}(\bf A)$. For an arbitrary $\nu'$ such that $\dist(\nu,\nu')< d$ and $\mb{z}'\in\nu'$, define the projection
\begin{align}
\mb{w}:=\mb{P}_{\nu}\mb{z},
\end{align}
then from the definition of the metric \eqref{def_dis}, the angle $\theta:=\angle(\mb{w},\mb{z}')$ satisfies
\begin{align}
\sin\theta< d.
\end{align}
Now define
\begin{align}
\mb{z}:=\frac{1}{\cos^2\theta}\mb{w}.
\end{align}
It is clear that $\mb{P}_{\nu'}\mb{z}=\mb{z}'$, so
\begin{align}
\frac{\|\mb{z}-\mb{z}'\|}{\|\mb{z}\|}=\sin\theta< d.
\end{align}
Since $\mathbf{z}\in \mathcal{N}(\bf A)\setminus\{\bf 0\}$ and $({\bf A},k,J)$ satisfies RRC with $C=\frac{2(1-d)}{d\sigma_{\max}}$, setting $\bf n=z'-z$ in Theorem \ref{suf2} shows that $J(\mathbf{z}'_T)<J(\mathbf{z}'_{T^c})$ for every $|T|\le k$. Hence $\nu'\subseteq \Omega_J$ from the arbitrariness of $\mathbf{z}'$, and $\nu\subseteq d\mbox{-}\interior(\Omega_J)$ by \eqref{e32}, as desired.
\end{proof}

\begin{proof}[Proof of Theorem \ref{th2}]
Since $\interior(\Omega_J)=\bigcup_{d>0}d\mbox{-}\interior(\Omega_J)$, Theorem \ref{th2} follows directly from Theorem~\ref{th22}.
\end{proof}

\section{Proof of Corollary \ref{th5}}\label{p_cont}
We first note the following basic fact about generic continuous functions. (It is stated in a slightly stronger and more complete manner than needed for proving Corollary \ref{th5}).
\begin{lemma}\label{cont}
Suppose $\mathcal{X},\mathcal{M}$ are metric spaces, and $\overline{\mathbb{R}}=\mathbb{R}\cup\{+\infty,-\infty\}$ be the extended real line. If $f:~\mathcal{X}\times \mathcal{M}\to\overline{\mathbb{R}}$ is continuous, then $g:~\mathcal{X}\to\overline{\mathbb{R}},~x\mapsto\sup_{y\in \mathcal{M}}f(x,y)$ is lower semicontinuous on $\mathcal{X}$. Further, if $\mathcal{M}$ is compact, then $g$ is also continuous.
\end{lemma}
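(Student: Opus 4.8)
The plan is to prove the two halves—lower semicontinuity in general, and upper semicontinuity under the compactness hypothesis—separately, and then combine them, since continuity of $g$ at a point is equivalent to its being simultaneously lower and upper semicontinuous there (the topology of $\overline{\mathbb{R}}$ is generated by the open rays $(\lambda,+\infty]$ and $[-\infty,\lambda)$). Throughout I would use the neighborhood characterization: $g$ is lower semicontinuous at $x_0$ if for every $\lambda<g(x_0)$ there is a neighborhood $U\ni x_0$ with $g>\lambda$ on $U$, and upper semicontinuous at $x_0$ if for every $\lambda>g(x_0)$ there is such a $U$ with $g<\lambda$. The only subtlety introduced by the extended range is that one occasionally needs to insert a finite intermediate level to secure a strict inequality; I would handle this explicitly so that the cases $g(x_0)=\pm\infty$ are covered.

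For lower semicontinuity, fix $x_0\in\mathcal{X}$ and any $\lambda<g(x_0)=\sup_{y}f(x_0,y)$. By definition of the supremum there is some $y_0\in\mathcal{M}$ with $f(x_0,y_0)>\lambda$. Since $(\lambda,+\infty]$ is open in $\overline{\mathbb{R}}$ and $f$ is continuous at $(x_0,y_0)$, there is a basic product neighborhood $U\times V$ of $(x_0,y_0)$ on which $f>\lambda$; discarding $V$, we get $f(x,y_0)>\lambda$ for all $x\in U$. Then $g(x)\ge f(x,y_0)>\lambda$ on $U$, which proves lower semicontinuity. I note that this half uses only continuity of $f$ in its first argument and requires no compactness of $\mathcal{M}$.

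For upper semicontinuity I would assume $\mathcal{M}$ compact, fix $x_0$, and take $\lambda>g(x_0)$; if $g(x_0)=+\infty$ there is nothing to prove, so I choose a finite $\mu$ with $g(x_0)<\mu<\lambda$. For each $y\in\mathcal{M}$ we have $f(x_0,y)\le g(x_0)<\mu$, and since $[-\infty,\mu)$ is open, joint continuity of $f$ at $(x_0,y)$ yields a product neighborhood $U_y\times V_y$ of $(x_0,y)$ on which $f<\mu$. The family $\{V_y\}_{y\in\mathcal{M}}$ is an open cover of the compact space $\mathcal{M}$, so finitely many $V_{y_1},\dots,V_{y_N}$ cover it. Setting $U:=\bigcap_{i=1}^{N}U_{y_i}$, a neighborhood of $x_0$, I claim $g\le\mu$ on $U$: for $x\in U$ and any $y\in\mathcal{M}$ there is an index $i$ with $y\in V_{y_i}$, and since also $x\in U_{y_i}$ we get $f(x,y)<\mu$; taking the supremum over $y$ gives $g(x)\le\mu<\lambda$. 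This yields upper semicontinuity, and combined with the first part gives continuity of $g$.

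The main obstacle is the upper semicontinuity step. Unlike lower semicontinuity, it genuinely requires the \emph{joint} continuity of $f$ together with a finite-subcover (tube-lemma type) argument over $\mathcal{M}$, and it is precisely here that compactness is indispensable: without it, the pointwise-in-$y$ neighborhoods $U_y$ cannot be amalgamated into a single neighborhood of $x_0$. The extended-real bookkeeping—selecting the finite intermediate level $\mu$ and working with the ray open sets—is routine but must be carried out to cover the $\pm\infty$ cases.
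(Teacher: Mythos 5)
Your proof is correct, but it takes a genuinely different route from the paper's. For lower semicontinuity the paper simply cites the standard fact that a supremum of continuous functions is lower semicontinuous, whereas you prove it directly from the neighborhood characterization; both are fine. The real divergence is in the upper semicontinuity half: the paper argues by contradiction using sequences—it picks a maximizer $y_0$ of $f(x_0,\cdot)$ (using compactness), extracts sequences $x_n\to x_0$, $y_n$ with $f(x_n,y_n)>g(x_0)+\epsilon$, passes to a convergent subsequence $y_{n_k}\to y^*$, and derives $g(x_0)\ge f(x_0,y^*)=\lim_k f(x_{n_k},y_{n_k})\ge g(x_0)+\epsilon$. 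You instead run the open-cover (tube-lemma) argument: product neighborhoods $U_y\times V_y$ on which $f<\mu$, a finite subcover of $\mathcal{M}$, and the intersection $U=\bigcap_i U_{y_i}$. Your version buys strictly more generality: it never uses sequences, so it works for arbitrary topological spaces, while the paper's sequential argument needs the metric (or at least first-countability) hypothesis—a limitation the paper itself acknowledges in the remark following its proof. Your approach also needs no maximizer of $f(x_0,\cdot)$, and your explicit insertion of the finite intermediate level $\mu$ handles the extended-real endpoint cases $g(x_0)=\pm\infty$ more carefully than the paper's additive $g(x_0)+\epsilon$ bookkeeping, which is delicate when $g(x_0)=-\infty$. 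What the paper's route buys in exchange is brevity under the stated metric-space hypothesis and the reuse of a textbook fact for the first half.
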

\begin{proof}
The lower semi-continuity of $g$ follows from the fact that $g$ is defined as the supremum of a collection of continuous functions \cite[P38 (c)]{rudin}. 
To show that $g$ is also upper semicontinuous when $\mathcal{M}$ is compact, we will prove that $g$ is upper semicontinuous at an arbitrary $x_0\in \mathcal{X}$: let $y_0$ be a point in $\mathcal{M}$ such that $g(x_0)=f(x_0,y_0)$ (Here we used the compactness of $\mathcal{M}$). Suppose otherwise, that $g$ is not supper semicontinuous at $x_0$, then there exists $\epsilon> 0$ such that:
\begin{equation}
\limsup_{x\to x_0}g(x) > g(x_0) + \epsilon.
\end{equation}
This implies that we can find sequences $x_n,y_n(n\ge 1)$ such that $\lim_{n\to \infty}x_n=x_0$ and the following holds:
\begin{equation}
f(x_n,y_n) > g(x_0) + \epsilon.
\end{equation}
Since $\mathcal{M}$ is compact, we can find a subsequence $y_{n_k},(k\ge 1)$ converging to some
point $y^*\in \mathcal{M}$. Hence
\begin{align}
g(x_0)&=f(x_0,y_0)\nonumber\\
&\ge f(x_0,y^*)\nonumber\\
&=\lim_{k\to\infty}f(x_{n_k},y_{n_k})\nonumber\\
&\ge g(x_0)+\epsilon,\nonumber
\end{align}
which is an apparent contradiction.
\end{proof}
\begin{remark}
In the above proof, the assumption that $\mathcal{X},\mathcal{M}$ are metrical spaces rather than topological spaces is useful only in showing the existence of the sequences $x_n,y_n,(n\ge1)$. Therefore, the result actually holds when $\mathcal{X},\mathcal{M}$ are topological spaces satisfying the first countable theorem \cite{munk}.
\end{remark}

It then follows the following result about the null space constant $\theta_J$, now conceived as a map from $G_l(\mathbb{R}^n)$ to the real numbers:
\begin{lemma}\label{le3}
If $F$ is continuous, then $\theta_{J}:G_{l}(\mathbb{R}^n)\to [0,+\infty)$ is a lower semicontinuous function. Further, $\theta_{\ell_p}:G_{l}(\mathbb{R}^n)\to [0,+\infty)$ is a continuous function.
\end{lemma}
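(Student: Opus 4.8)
The plan is to realize $\theta_J$, locally around each subspace, as a supremum of a jointly continuous $\overline{\mathbb{R}}$-valued function over a \emph{fixed} parameter space, and then to invoke Lemma~\ref{cont}. Since both lower semicontinuity and continuity are local properties, it suffices to verify the claim on a neighbourhood of an arbitrary $\nu_0\in G_l(\mathbb{R}^n)$. (Here it is convenient to allow $\theta_J$ to take the value $+\infty$, which occurs exactly when $\nu$ contains a $k$-sparse vector, and to read (semi)continuity in the extended reals, consistently with Lemma~\ref{cont}.) The only obstacle to a direct application of Lemma~\ref{cont} is that the domain of the supremum in (\ref{nsc}), namely $\nu\setminus\{\mathbf{0}\}$, varies with $\nu$; the device for removing this dependence is a continuous local orthonormal frame.

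Concretely, I would first fix an orthonormal basis $\mathbf{b}_1,\dots,\mathbf{b}_l$ of $\nu_0$ and, for $\nu$ in a small neighbourhood $U$ of $\nu_0$, apply Gram--Schmidt to the projected vectors $\mathbf{P}_\nu\mathbf{b}_1,\dots,\mathbf{P}_\nu\mathbf{b}_l$. Because $\nu\mapsto\mathbf{P}_\nu$ is continuous for the metric (\ref{def_dis}), these projected vectors stay close to $\mathbf{b}_1,\dots,\mathbf{b}_l$ and hence remain linearly independent for $\nu$ near $\nu_0$; Gram--Schmidt then produces a continuous map $\nu\mapsto\mathbf{B}(\nu)\in\mathbb{M}(n,l)$ whose columns form an orthonormal basis of $\nu$. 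Writing each $\mathbf{z}\in\nu\setminus\{\mathbf{0}\}$ as $\mathbf{z}=\mathbf{B}(\nu)\mathbf{c}$ with $\mathbf{c}\in\mathbb{R}^l\setminus\{\mathbf{0}\}$, I obtain, for $\nu\in U$,
\[
\theta_J(\nu)=\sup_{\mathbf{c}\in\mathbb{R}^l\setminus\{\mathbf{0}\}} f(\nu,\mathbf{c}),\qquad f(\nu,\mathbf{c}):=\max_{|T|\le k}\frac{J\big((\mathbf{B}(\nu)\mathbf{c})_T\big)}{J\big((\mathbf{B}(\nu)\mathbf{c})_{T^c}\big)}.
\]

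The next step is to check that $f:U\times(\mathbb{R}^l\setminus\{\mathbf{0}\})\to\overline{\mathbb{R}}$ is continuous. Since $F$ is continuous, $J$ is continuous, and $(\nu,\mathbf{c})\mapsto\mathbf{B}(\nu)\mathbf{c}$ is continuous, so the numerator and denominator of each ratio are continuous. The denominator $J\big((\mathbf{B}(\nu)\mathbf{c})_{T^c}\big)$ vanishes precisely when $\mathbf{z}=\mathbf{B}(\nu)\mathbf{c}$ is supported on $T$ (using $F(x)=0\iff x=0$ from Definition~\ref{def1}); since $\mathbf{c}\neq\mathbf{0}$ forces $\mathbf{z}\neq\mathbf{0}$, the numerator is then strictly positive, the ratio equals $+\infty$, and the map remains continuous into $\overline{\mathbb{R}}$ at such points (numerator and denominator never vanish simultaneously on $\mathbf{c}\neq\mathbf{0}$). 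A maximum over the finitely many admissible $T$ preserves continuity. Applying Lemma~\ref{cont} with $\mathcal{X}=U$ and the (non-compact) parameter space $\mathcal{M}=\mathbb{R}^l\setminus\{\mathbf{0}\}$ shows $\theta_J|_U$ is lower semicontinuous; as $\nu_0$ was arbitrary, $\theta_J$ is lower semicontinuous on $G_l(\mathbb{R}^n)$.

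For the sharper $\ell_p$ statement I would exploit homogeneity: the ratio $\|(t\mathbf{z})_T\|_p^p/\|(t\mathbf{z})_{T^c}\|_p^p$ is invariant under $\mathbf{z}\mapsto t\mathbf{z}$, so the supremum may be restricted to unit vectors, i.e.\ to $\mathbf{c}\in S^{l-1}$ (recall $\mathbf{B}(\nu)$ is orthonormal, so $\|\mathbf{B}(\nu)\mathbf{c}\|=\|\mathbf{c}\|$). Repeating the construction with $\mathcal{M}=S^{l-1}$, which is now \emph{compact}, the compact case of Lemma~\ref{cont} upgrades lower semicontinuity to full continuity of $\theta_{\ell_p}|_U$, hence of $\theta_{\ell_p}$. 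The main point requiring care is the local-frame construction together with the verification that $f$ is genuinely $\overline{\mathbb{R}}$-continuous where the denominator vanishes; once these are in place, both conclusions follow immediately from Lemma~\ref{cont}.
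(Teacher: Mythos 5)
Your proposal is correct and follows essentially the same route as the paper: both realize $\theta_J$ locally as a supremum of a jointly continuous $\overline{\mathbb{R}}$-valued function over a fixed parameter space and invoke Lemma~\ref{cont}, using homogeneity to shrink the parameter space to the compact sphere in the $\ell_p$ case. The only (cosmetic) difference is your local trivialization: you build a continuous orthonormal frame by Gram--Schmidt on $\mathbf{P}_\nu\mathbf{b}_1,\dots,\mathbf{P}_\nu\mathbf{b}_l$, whereas the paper simply uses the chart $\phi_I$, whose matrix $\left(\begin{smallmatrix}\mathbf{I}\\ \mathbf{X}\end{smallmatrix}\right)$ already supplies a continuous (non-orthonormal) basis of $\nu$.
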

\begin{proof}
It suffices to show that $\theta_{J}$ is lower semicontinuous or continuous on each $U_I$. Without loss of generality, we may assume that $I=\{1,\dots,l\}$. For generic $F$, let $\mathcal{X}=\mathbb{M}(n-l,l)$, $\mathcal{M}=\mathbb{R}^l\setminus\{\bf 0\}$, and $f: \mathcal{X}\times\mathcal{M}\to\overline{\mathbb{R}},(\mathbf{X},\mathbf{y})
\mapsto\frac{J(\mathbf{z}_T)}{J(\mathbf{z}_{T^c})}$, where $\bf z:=\left(
                                                                     \begin{array}{c}
                                                                       I \\
                                                                       X \\
                                                                     \end{array}
                                                                   \right)y
$. Then $\theta_J(\phi_I^{-1}(\mathbf{X}))=\sup_{y\in\mathcal{M}}f(\bf X,y)$, which by Lemma \ref{cont} implies that the composition map $\theta_J\circ \phi_I^{-1}$ is lower semicontinuous. Since $\phi_I$ is a homeomorphism, we conclude that $\theta_J$ is also lower semicontinuous.

For the case of $\ell_p$-minimization, we can define $\mathcal{M}:=S^{l-1}$, while $\mathcal{X}$ and $f$ are as before. By homogeneity we still obtain $\theta_J(\phi_I^{-1}(\mathbf{X}))=\sup_{y\in\mathcal{M}}f(\bf X,y)$. But since $\mathcal{M}$ is compact in this case, we conclude that $\theta_{\ell_p}$ is continuous.
\end{proof}

The openness of $\Omega_{\ell_p}$ then follows easily, from the very definition of continuous functions: that the pre-images of open sets are open.

\begin{proof}[Proof of Corollary \ref{th5}]
By Lemma \ref{le3}, function $\theta_{\ell_p}$ is continuous with respect to $\nu$. Since $\Omega_{\ell_p}$ is the pre-image of $(-\infty,1)$ under the continuous mapping of $\theta_{\ell_p}$ (Lemma \ref{le1}), we conclude that $\Omega_{\ell_p}$ is open, hence $\Omega^r_{\ell_p}=\interior(\Omega_{\ell_p})=\Omega_{\ell_p}$.
\end{proof}

\section{Proof of assertions in Counter-example \ref{ex1}}\label{appex}
For any $\mathbf{w}\in\mathcal{N}$ we can write $\mathbf{w}=(xt,yt,zt)^\top$ for some $t\in\mathbb{R}$. Since $|zt|>|xt|,|yt|$, by strict subadditivity $F(xt)+F(yt)>F(zt)$ holds. Then for any $T$ such that $|T|=1$ we have:
\begin{equation}
J(\mathbf{w}_T)<J(\mathbf{w}_{T^c}).
\end{equation}
Hence NSP is satisfied, and ERC must hold.
On the other hand, the above inequality fails under arbitrarily small perturbation: for any $0<d<1$, Taylor expansion yields $F((1-d)xt)+F(yt)=2(1-d)xt+2yt+o(t^2)=2zt-2dxt+o(t^2)$ and $F(zt)=2zt+o(t^2)$ (for small $t$), so there exist $t>0$ such that
\begin{equation}\label{eq5}
F((1-d)xt)+F(yt)<F(zt).
\end{equation}
Now in Theorem \ref{suf2}, take $\mathbf{z}=(xt,yt,zt)^\top$, $T=\{3\}$, and $\mathbf{n}=(-dxt,0,0)$. On the one hand we have $\|\mathbf{n}\|/\|\mathbf{z}\|\le d$; on the other hand (\ref{eq16}) doesn't hold because of (\ref{eq5}). Therefore RRC is not fulfilled as a result of Theorem \ref{suf2}.

\section{Proof of Theorem \ref{probeq}}\label{p_probeq}
%
%

\begin{definition}{\cite{stein}}
Suppose $E$ is a measurable set in $\mathbb{R}^L$, the \emph{Lebesgue density} of $E$ at a point $\mathbf{x}\in \mathbb{R}^L$ is defined as $
\lim_{r\to 0}\frac
{\lambda(B(\mathbf{x},r)\cap E)}
{\lambda(B(\mathbf{x},r))}
$ where $\lambda$ denotes the Lebesgue measure. If the density exists and is equal to $1$, $\mathbf{x}$ is said to \textit{have the Lebesgue density of $E$}.
\end{definition}

The following result can be found in standard textbook on measure theory, such as \cite[Chapter 3, Corollary 1.5]{stein}.
\begin{theorem}[Lebesgue density theorem]
If $E$ is a measurable set in $\mathbb{R}^L$, then almost all $\mathbf{x}\in E$ (except for a set of Lebesgue measure zero) has the Lebesgue density of $E$.
\end{theorem}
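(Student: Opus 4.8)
The plan is to derive this as the special case $f:=\mathbf{1}_E$ of the Lebesgue differentiation theorem, proving the latter by the standard Hardy--Littlewood maximal function argument. Since the density at $\mathbf{x}$ depends only on $E$ in an arbitrarily small neighbourhood of $\mathbf{x}$, I may replace $E$ by $E\cap B(\mathbf{0},N)$ and prove the statement on $B(\mathbf{0},N-1)$ for each $N$, then take the countable union; thus without loss of generality $f=\mathbf{1}_E\in L^1(\mathbb{R}^L)$. Writing the averaging operator as $A_r f(\mathbf{x}):=\lambda(B(\mathbf{x},r))^{-1}\int_{B(\mathbf{x},r)}f\,d\lambda$, the density of $E$ at $\mathbf{x}$ is exactly $\lim_{r\to0}A_r f(\mathbf{x})$, so the claim reduces to showing that this limit equals $f(\mathbf{x})=\mathbf{1}_E(\mathbf{x})$ for almost every $\mathbf{x}$; in particular it equals $1$ at almost every point of $E$, which is the assertion of the theorem.

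The central tool is the maximal function $Mf(\mathbf{x}):=\sup_{r>0}A_r|f|(\mathbf{x})$ together with its weak-type $(1,1)$ inequality $\lambda(\{Mf>\alpha\})\le \frac{C_L}{\alpha}\|f\|_{L^1}$. I would establish this through the Vitali covering lemma: taking a compact subset of $\{Mf>\alpha\}$, cover it by balls on each of which the average of $|f|$ exceeds $\alpha$, extract a finite \emph{disjoint} subfamily whose threefold dilates still cover the compactum, and combine disjointness with $\lambda(3B)=3^L\lambda(B)$ to bound the total measure by $\frac{3^L}{\alpha}\|f\|_{L^1}$, giving $C_L=3^L$. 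This covering step is the technical heart of the argument, and the place I expect the most care to be needed; everything else is bookkeeping around it.

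With the maximal inequality in hand, the limit is handled by approximation. Continuous compactly supported functions are dense in $L^1$, so for any $\epsilon>0$ I pick such a $g$ with $\|f-g\|_{L^1}<\epsilon$ and set $h:=f-g$. For the continuous part $A_r g(\mathbf{x})\to g(\mathbf{x})$ at every $\mathbf{x}$ by uniform continuity, so the oscillation comes entirely from $h$: since $|A_r h|\le Mh$, one gets $\limsup_{r\to0}|A_r f-f|\le Mh+|h|$ pointwise. Hence for each $\alpha>0$ the bad set $\{\limsup_{r\to0}|A_r f-f|>\alpha\}$ is contained in $\{Mh>\alpha/2\}\cup\{|h|>\alpha/2\}$, whose measure is at most $\big(\tfrac{2\cdot 3^L}{\alpha}+\tfrac{2}{\alpha}\big)\|h\|_{L^1}\le \tfrac{2(3^L+1)}{\alpha}\epsilon$ by the maximal inequality and Chebyshev's inequality. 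Letting $\epsilon\to0$ forces this set to be null for every $\alpha$, and intersecting over $\alpha=1/n$ shows $\lim_{r\to0}A_r f=f$ almost everywhere. Specializing $f=\mathbf{1}_E$ yields the density theorem.
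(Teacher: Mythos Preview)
Your argument is correct and is in fact the standard route to the Lebesgue density theorem via the Hardy--Littlewood maximal inequality and the Lebesgue differentiation theorem. There is nothing to compare against here: the paper does not supply its own proof of this statement but simply quotes it as a classical fact from \cite[Chapter~3, Corollary~1.5]{stein}, where essentially the same maximal-function argument you outline is carried out.
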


We say a set $E\subseteq\mathbb{R}^L$ is \emph{porous} at a point $\mathbf{x}\in \mathbb{R}^L$, if there exists $r_0>0$, $0<\alpha< 1$ such that for each $0<r<r_0$, there exists some $\mathbf{y} \in \mathbb{R}^L$ such that the ball $B(\mathbf{y},\alpha r)\subseteq B(\mathbf{x},r)\setminus E$. If $E$ is porous at each point in $E$, then $E$ must be of measure zero, which is a direct consequence of Lebesgue density theorem; and $E$ must be of the first category from definition.

As in many other problems from analysis, a trivial observation is that the ``ball of radius $a$'' (where $a=r$ or $\alpha r$) in the definition of porosity can be replaced with, say, a ``hypercube of edge length $a$'' (with $\alpha$ taking a possibly different value), since any hypercube of edge length $a$ is contained in a ball of radius $\lambda_{\max}a$, and contains a ball of radius $\lambda_{\min}a$, where $0<\lambda_{\min}<\lambda_{\max}$ are constants independent of $a$. Another observation based on the same mechanism is that porosity is preserved under invertible linear transforms, since the image of a unit ball under the linear transform must contain a ball whose radius is the least singular value of the linear transform. Further, since a smooth function can be locally approximated by a linear map by Taylor expansions, we see that any $C^{\infty}$ function which has a $C^{\infty}$ inverse also preserves porosity.

\begin{proof}[Proof of Theorem \ref{probeq}]
We shall adopt the notation in (\ref{ot}). For any $\nu\in S-\interior(\Omega_T)$, there exists a sequence $\nu^l\in \Omega^c_T, l=1,2,\dots,\infty$. With the properties:
\begin{flushleft}
1) $\nu^l\to\nu$ as $l\to\infty$. \\
2) For each $l$ there exist $\mathbf{z}^l\in \nu^l-\{0\}$ such that $J(\mathbf{z}^l_T)\ge J(\mathbf{z}^l_{T^c})$.\\
3) The sequence $\mathbf{\bar{z}}^l:=\mathbf{z}/\|\bf z\|$ converges to some $\mathbf{x}\in S^{n-1}$.
\end{flushleft}
Property 1) is from the fact that $\nu$ is in the closure of $\Omega^c_T$. Property 2) is from the definition of $\Omega_T$. As for property 3), we note that the compactness of $S^{n-1}$ implies there exists a convergent subsequence of $\mathbf{\bar{z}}^l$. So if the $\mathbf{\bar{z}}^l$ sequence itself is not convergent, we can redefine $\mathbf{\bar{z}}^l$ to be its convergent subsequence, and then redefine the sequences $\nu^l$, $\mathbf{z}^l$ to be their corresponding subsequences. As a result, Properties 1), 2), 3) can always be satisfied. Notice that
\begin{align}
\|\mathbf{P}_{\nu}-\mathbf{x}\|&\le \|\bar{\mathbf{z}}^l-\mathbf{x}\|+\|\mathbf{P}_{\nu}\mathbf{x}-\bar{\mathbf{z}}^l\|\\
&\le \|\bar{\mathbf{z}}^l-\mathbf{x}\|+\|\mathbf{P}_{\nu}\bar{\mathbf{z}}^l-\bar{\mathbf{z}}^l\|\\
&= \|\bar{\mathbf{z}}^l-\mathbf{x}\|+\|\mathbf{P}_{\nu}\bar{\mathbf{z}}^l-\mathbf{P}_{\nu^l}\bar{\mathbf{z}}^l\|\\
&\le \|\bar{\mathbf{z}}^l-\mathbf{x}\|+\|\mathbf{P}_{\nu}-\mathbf{P}_{\nu^l}\|\\
&\to 0,\quad\textrm{as~}l\to\infty,
\end{align}
therefore $\|\mathbf{P}_{\nu}-\mathbf{x}\|=0$ and $\mathbf{x}\in \nu$.

Suppose $\mathbf{x}=(x_1,\dots,x_n)^\top$. Since $\nu\in S$, at most $l-1$ of the entries of $\mathbf{x}$ can be zero. Hence there exists an $l$-element index set $I_0\subseteq\{1,\dots,n\}$ such that $\{x_i~|~i\in I_0\}$ has at least one non-zero element and $\{x_i~|~i\notin I_0\}$ has no zero element. Without loss of generality, let us assume that $I_0=\{1,\dots,l\}$ and $x_1\neq 0$. Consider the chart $({\bf B}\circ \phi_{I_0},U_{I_0})$, where ${\bf B}\circ\phi_{I_0}$ is the composite of $\phi_{I_0}:U_{I_0}\to\mathbb{M}(n-l,l)$, and the right multiplication of matrix $$\mathbf{B}:=
\left(
  \begin{array}{c|c}
    x_1 & 0 \\
    \hline
    x_2 & ~ \\
    \vdots & {\bf I}_{(l-1)\times(l-1)} \\
    x_l & ~ \\
  \end{array}
\right).
$$
Then for each $\nu\in G_l(\mathbb{R}^n)$, it holds that
\begin{equation}
\pi(\left(
      \begin{array}{c}
        \mathbf{B} \\
        \hline
        \mathbf{B}\circ\phi_{I_0}(\nu) \\
      \end{array}
    \right)
)
=\pi(\left(
       \begin{array}{c}
         \mathbf{I} \\
         \hline
         \phi_{I_0}(\nu) \\
       \end{array}
     \right)
)
=\nu,
\end{equation}
where we recall that $\pi$ is the projection map to the subspace spanned by the column vectors of a matrix. Notice that the first column of matrix $\left(
      \begin{array}{c}
        \mathbf{B} \\
        \hline
        \mathbf{B}\circ\phi_{I_0}(\nu) \\
      \end{array}
    \right)$ is $\mathbf{x}$, because its first $l$ elements of this column agree with $\mathbf{x}$ by definition of $\mathbf{B}$, and then the rest of the $n-l$ elements must also agree with $\mathbf{x}$ since $\mathbf{x}$ is a unique linear combination of columns of $\left(
      \begin{array}{c}
        \mathbf{B} \\
        \hline
        \mathbf{B}\circ\phi_{I_0}(\nu) \\
      \end{array}
    \right)$. Now define
\begin{align}\label{pv}
V:=\{\mathbf{M}\in\mathbb{M}(n-l,l)|~|M_{i1}|>|x_{i+l}|~\textrm{if}~i+l\in T {\rm~and~}
|M_{i1}|<|x_i|~\textrm{otherwise, where $1\le i\le n-l$}\}.
\end{align}
Since by assumption $|x_{i+l}|>0$ for $i=1,\dots,n-l$, the set $V$ is not empty.

Next we shall show that for each $\mathbf{M}\in V$, we have
\begin{align}\label{star}
\pi(\left(
      \begin{array}{c}
        \mathbf{B} \\
        \mathbf{M} \\
      \end{array}
    \right)
)
\notin \Omega_T,
\end{align}
which, by setting $\nu:=(\mathbf{B}\circ\phi_{I_0})^{-1}(\mathbf{M})$, will imply that $M\in \mathbf{B}\circ\phi_{I_0}(U_{I_0}\setminus\Omega_T)$, or $V\subseteq \mathbf{B}\circ\phi_{I_0}(U_{I_0}\setminus\Omega_T)$. Since $V$ is open, this in turn implies that
\begin{align}\label{89}
V\subseteq \interior(\mathbf{B}\circ\phi_{I_0}(U_{I_0}\setminus\Omega_T))
=\mathbf{B}\circ\phi_{I_0}(\interior(U_{I_0}\setminus\Omega_T))
=\mathbf{B}\circ\phi_{I_0}(U_{I_0}\setminus\overline{\Omega}_T).
\end{align}
To see (\ref{star}), consider a vector $\mathbf{c}_{\epsilon_1,\epsilon_2}:=(1+\epsilon_1s_1,0,\dots,0)^{\top}+\epsilon_2(0,s_2,\dots,s_l)^{\top}\in\mathbb{R}^l$, where
\begin{align}
s_1:=\left\{
\begin{array}{cc}
  1 & \textrm{if }i\in T; \\
  -1 & \textrm{otherwise},
\end{array}
\right.
\end{align}
and for $1<i\le l$,
\begin{align}
s_i:=
\left\{
\begin{array}{cc}
  \sign(x_i) & \textrm{if }x_i\neq0, 1\in T; \\
  -\sign(x_i) & \textrm{if }x_i\neq0, i\in T^c; \\
  0 & \textrm{if }x_i=0, i\in T^c; \\
  1 & \textrm{if }x_i=0, i\in T.
\end{array}
  \right.
\end{align}
Then consider the vector $\mathbf{v}:=\left(
               \begin{array}{c}
                 \mathbf{B} \\
                 \hline
                 \mathbf{M} \\
               \end{array}
             \right)
             \mathbf{c}_{\epsilon_1,\epsilon_2}
             \in \pi(\left(
      \begin{array}{c}
        \mathbf{B} \\
        \mathbf{M} \\
      \end{array}
    \right)
)
$. For fixed $\mathbf{M}\in V$, we wish to show that there exist $0<\epsilon_1<<\epsilon_2<<1$ such that the components of $\mathbf{v}$ satisfies
\begin{flushleft}
a) $|v_i|>|x_i|$, if $i\in T$;\\
b) $|v_i|<|x_i|$, if $i\notin T$ and $|x_i|\neq 0$;\\
c) $v_i=0$, if $i\notin T$ and $|x_i|= 0$.\\
\end{flushleft}
For $l<i\le n$, by the construction of (\ref{pv}), properties a) b) c) are obviously true for small $\epsilon_1,\epsilon_2$. As for $1\le i\le l$, if $i\in T$, we compute
\begin{align}
v_1&=(1+\epsilon_1)x_1;\\
v_i&=\left\{\begin{array}{cc}
             (1+\epsilon_1s_1)x_i+\sign(x_i)\epsilon_2 & \textrm{ if  } x_i\neq0; \\
             \epsilon_2 &  \textrm{if } x_i=0;
           \end{array}
           \right.
           \textrm{for } 1<i\le l,
\end{align}
and similarly if $i\notin T$,
\begin{align}
v_1&=(1-\epsilon_1)x_1;\\
v_i&=\left\{\begin{array}{cc}
             (1+\epsilon_1s_1)x_i-\sign(x_i)\epsilon_2 & \textrm{ if  } x_i\neq0; \\
             0 &  \textrm{if } x_i=0;
           \end{array}
           \right.
           \textrm{for } 1<i\le l.
\end{align}
Thus a), b), c) are guaranteed if $0<\epsilon_1<<\epsilon_2<<1$. Since $\lim_{l\to\infty}\mathbf{\bar{z}^l}=\mathbf{x}$, by a), b), c) there exists $l$ such that
\begin{align}
|v_i|\ge|\bar{z}^l_i|, \textrm{ if }i\in T;\\
|v_i|\le|\bar{z}^l_i|, \textrm{ if }i\notin T;\\
\end{align}
Therefore from the monotonicity of $F$, we get
\begin{align}
J(\|\mathbf{z}^l\|\mathbf{v}_T)&=\sum_{i\in T}F(\|\mathbf{z}^l\|v_i)\\
&\ge\sum_{i\in T}F(\|\mathbf{z}^l\|\bar{z}^l_i)\\
&=\sum_{i\in T}F(z^l_i)\\
&\ge\sum_{i\in T^c}F(z^l_i)\\
&=\sum_{i\in T^c}F(\|\mathbf{z}^l\|\bar{z}^l_i)\\
&\ge \sum_{i\in T^c}F(\|\mathbf{z}^l\|v_i)\\
&=J(\|\mathbf{z}^l\|\mathbf{v}_{T^c}).
\end{align}
Since $\|\mathbf{z}^l\|\mathbf{v}_T\in\pi(\left(
                                            \begin{array}{c}
                                              \mathbf{B} \\
                                              \mathbf{M} \\
                                            \end{array}
                                          \right)
)$, (\ref{star}) is proved.

Finally, from (\ref{pv}) we see that for each hypercube centered at $\mathbf{B}\circ \phi_{I_0}(\nu)=(x_{l+1},\dots,x_n)^{\top}$ with edge length less than $\min_{l<i\le n}\{|x_i|\}$, it must contain a hypercube of half of it's edge length in $V$, which, by (\ref{89}), does not intersect with $\mathbf{B}\circ\phi_{I_0}(S\cap(\overline{\Omega}_T\setminus\interior(\Omega_T)))$. This shows that $\mathbf{B}\circ\phi_{I_0}(S\cap(\overline{\Omega}_T\setminus\interior(\Omega_T)))$ is porous at $\mathbf{B}\circ \phi_{I_0}(\nu)$. But the map $\phi_I\circ\phi_{I_0}^{-1}\circ\mathbf{B}^{-1}:\mathbf{B}\circ\phi_{I_0}(S)\to\phi_{I}(S)$ and its inverse are $C^{\infty}$, therefore by the previous discussion, the set $\phi_{I}(S\cap(\overline{\Omega}_T\setminus\interior(\Omega_T))$ is also porous at $\phi_{I}(\nu)$. Recalling that $\nu$ is arbitrarily chosen from $S\cap(\overline{\Omega}_T\setminus\interior(\Omega_T))$, we see $\phi_I(S\cap(\overline{\Omega}_T\setminus\interior(\Omega_T)))$ must be of measure zero and of the first category. Then by the observation made in (\ref{27}), the proof is complete.
\end{proof}

\section{Proof of Corollary \ref{co2}}\label{app_F}
Suppose $\mu(\Omega_J\setminus\Omega_J^r)=0$. Let $H\subseteq G_m(\mathbb{R}^n)$ be the set of orthogonal complements of the $l$-dimensional subspaces in $\Omega_J\setminus\Omega_J^r$. Since $G_m(\mathbb{R}^n)$ is isomorphic to $G_l(\mathbb{R}^n)$ (recall that $l:=n-m$), we have $\mu(H)=0$ as well\footnote{Here we abused the notation by denoting $\mu$ the Haar measure both on $G_l(\mathbb{R}^n)$ and on $G_m(\mathbb{R}^n)$, since the two manifolds are isomorphic.}.
Then for each $U_I$ (here $I\subseteq\{1,\dots,n\}$ and $|I|=m$), $H\cap U_I$ is a measure zero subset of $U_I$. By the property of product measure we have that $\phi_I(H\cap U_I)\times \mathbb{M}(m,m)$ is a measure zero subset of $\mathbb{M}(l,m)\times\mathbb{M}(m,m)$. Without loss of generality, let us assume that $I=\{1,\dots,m\}$. Define a $C^{\infty}$ map:
\begin{align}
f:\mathbb{M}(l,m)\times\mathbb{M}(m,m)&\to \mathbb{R}^{mn};\\
(\mathbf{M},\mathbf{V})&\mapsto \left(
                                  \begin{array}{c}
                                    \mathbf{I} \\
                                    \mathbf{M} \\
                                  \end{array}
                                \right)
                                \mathbf{V}.
\end{align}
Then $\pi^{-1}(U_I\cap H)=f(\phi_I(U_I\cap H),\mathbb{M}(m,m))$ is a measure zero subset of $\pi^{-1}(U_I)$. Finally $\pi^{-1}(H)=\bigcup_{I}\pi^{-1}(H\cap U_I)$ is a measure zero subset of $\pi^{-1}(G_m(\mathbb{R}^n))=\mathbb{M}(n,m)\setminus\{\mathbf{0}\}$. This shows that $\mathcal{N}(\mathbf{A})\in(\Omega_J\setminus\Omega^r_J))$ only if $\mathbf{A}$ falls into a Lebesgue measure zero set on $\mathcal{M}(m,n)$, which is of probability zero if the probability distribution of $\mathbf{A}$ is absolutely continuous (with respect to the Lebesgue measure).

\section{Proof of Theorem \ref{th1}}\label{ap2}
Since the value of $\theta_J$ depends on the null space of the measurement matrix, in the following it is considered as a function of $\nu\in G_{l}(\mathbb{R}^n)$.
\begin{lemma}
Let $F\in\mathcal{M},q\in(0,1]$. If $\lim_{x\downarrow 0}F(x)/x^q$ or $\lim_{x\to\infty}F(x)/x^q$ exists and is positive, then $\theta_{\ell_q}\le\theta_J$ for any $\nu\in G_{l}(\mathbb{R}^n)$.
\end{lemma}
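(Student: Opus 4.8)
The plan is to exploit the scaling freedom inside the null space. Since $\nu$ is a linear subspace, the entire ray $\{t\mathbf{z}:t>0\}$ lies in $\nu$ for every $\mathbf{z}\in\nu$, and is therefore admissible in the supremum defining $\theta_J$. The positivity hypothesis on the limit of $F(x)/x^q$ is exactly what forces $J$ to behave like the $\ell_q$ cost function in the appropriate scaling regime, so that the $F$-ratios converge to the corresponding $\ell_q$-ratios. I would then conclude by a one-sided comparison rather than any limit of suprema.

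First I would dispose of the degenerate case. If $\nu$ contains a nonzero vector supported on some $T$ with $|T|\le k$, then both $J(\mathbf{z}_{T^c})$ and $\|\mathbf{z}_{T^c}\|_q^q$ vanish, so $\theta_J=\theta_{\ell_q}=+\infty$ and the inequality is trivial. Hence I may assume $\mathbf{z}_{T^c}\neq\mathbf{0}$ for every nonzero $\mathbf{z}\in\nu$ and every $T$ with $|T|\le k$; because $F\in\mathcal{M}$ vanishes only at $0$, this guarantees $J(\mathbf{z}_{T^c})>0$ and likewise $\|\mathbf{z}_{T^c}\|_q^q>0$, so all ratios below are finite.

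Next, fix an arbitrary $\mathbf{z}\in\nu\setminus\{\mathbf{0}\}$ and $T$ with $|T|\le k$, and write $c:=\lim_{x\downarrow 0}F(x)/x^q>0$ (the case where the limit at $+\infty$ exists and is positive is handled identically, replacing $t\downarrow 0$ by $t\uparrow\infty$ throughout). For each coordinate $i$ with $z_i\neq 0$ one has $F(t|z_i|)/(c\,t^q|z_i|^q)\to 1$ as $t\downarrow 0$. Writing
\begin{equation}
\frac{J(t\mathbf{z}_T)}{c\,t^q\|\mathbf{z}_T\|_q^q}
=\sum_{i\in T:\,z_i\neq 0}\frac{|z_i|^q}{\|\mathbf{z}_T\|_q^q}\cdot\frac{F(t|z_i|)}{c\,t^q|z_i|^q},
\end{equation}
the right-hand side is a fixed convex combination of quantities each tending to $1$, hence tends to $1$; the same holds with $T$ replaced by $T^c$. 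Dividing the two asymptotics, the common factor $c\,t^q$ cancels and I obtain
\begin{equation}
\lim_{t\downarrow 0}\frac{J(t\mathbf{z}_T)}{J(t\mathbf{z}_{T^c})}
=\frac{\|\mathbf{z}_T\|_q^q}{\|\mathbf{z}_{T^c}\|_q^q}.
\end{equation}

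Finally I would pass to the bound. Since $t\mathbf{z}\in\nu\setminus\{\mathbf{0}\}$ for every $t>0$, the definition of $\theta_J$ gives $\theta_J\ge J(t\mathbf{z}_T)/J(t\mathbf{z}_{T^c})$ for all $t$; letting $t\downarrow 0$ yields $\theta_J\ge \|\mathbf{z}_T\|_q^q/\|\mathbf{z}_{T^c}\|_q^q$. Taking the supremum over $\mathbf{z}\in\nu\setminus\{\mathbf{0}\}$ and the maximum over $|T|\le k$ on the right produces exactly $\theta_{\ell_q}$, whence $\theta_J\ge\theta_{\ell_q}$. The one delicate point—and the only place the argument could go wrong—is the interchange of the limit with the supremum; I avoid it entirely by observing that $\theta_J$ is a \emph{single} number bounding the ratio uniformly in $t$, so the limit in $t$ is taken \emph{after} the bound rather than before. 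The finiteness of the index set is what legitimizes the convex-combination step, and the strict positivity of $c$ is essential so that the factor $c\,t^q$ may be cancelled without producing a $0/0$ ambiguity.
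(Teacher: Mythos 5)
Your proof is correct and follows essentially the same route as the paper's: both exploit that $t\mathbf{z}\in\nu\setminus\{\mathbf{0}\}$ for all $t>0$, use the limit hypothesis on $F(x)/x^q$ to show $J(t\mathbf{z}_T)/J(t\mathbf{z}_{T^c})\to\|\mathbf{z}_T\|_q^q/\|\mathbf{z}_{T^c}\|_q^q$, and bound the ratio by $\theta_J$ uniformly in $t$ before passing to the limit. Your treatment is somewhat more careful than the paper's (the explicit convex-combination computation and the degenerate case $\mathbf{z}_{T^c}=\mathbf{0}$, which the paper leaves implicit), but the underlying argument is identical.
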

\begin{proof}
We only prove for the case where $\lim_{t\downarrow 0}F(t)/t^q$ exists and is positive, because the case where $\lim_{t\to \infty}F(t)/t^q$ exists and is positive is essentially similar. By definition we only have to prove the following for any $\mathbf{z}\in \nu\setminus \{\textbf{0}\}$ and $T$ satisfying $|T|\le k$:
\begin{equation}\label{tosee}
\frac{\|\mathbf{z}_{T}\|^q_q}{\|\mathbf{z}_{T^c}\|^q_q}\le \theta_{J}.
\end{equation}
Notice that for any $t\in \mathbb{R}$, vector $t\mathbf{z}$ still belongs to $\mathcal{N}(\mathbf{A})$, hence
\begin{align}
\textrm{left side of}(\ref{tosee})&=\lim_{t\downarrow 0}
\frac{J(t\mathbf{z}_{T})}{J(t\mathbf{z}_{T^c})}\label{lhospital}\\
&\le\theta_{J},\label{supdef}
\end{align}
where (\ref{lhospital}) is because $\lim_{t\downarrow 0}F(t)/t^q$ exists and is positive, and (\ref{supdef}) is from the definition of supremum.
\end{proof}

\begin{lemma}\label{cl}
$\overline{\Omega_{\ell_q}}=\{\nu~|~\theta_{\ell_q}\le 1\}$.
\end{lemma}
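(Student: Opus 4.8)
The plan is to establish the two set inclusions separately. The inclusion $\overline{\Omega_{\ell_q}}\subseteq\{\nu:\theta_{\ell_q}(\nu)\le1\}$ is the soft one: by Lemma~\ref{le3} the map $\theta_{\ell_q}:G_l(\mathbb{R}^n)\to[0,+\infty)$ is continuous, so $\{\theta_{\ell_q}\le1\}$ is closed; since Lemma~\ref{le1} identifies $\Omega_{\ell_q}$ with $\{\theta_{\ell_q}<1\}$, this closed set contains $\Omega_{\ell_q}$ and hence its closure. All the work is in the reverse inclusion $\{\theta_{\ell_q}\le1\}\subseteq\overline{\Omega_{\ell_q}}$, which amounts to showing that every $\nu_0$ with $\theta_{\ell_q}(\nu_0)=1$ is a limit of subspaces on which $\theta_{\ell_q}<1$; equivalently, that $\theta_{\ell_q}$ admits no local minimum of value $1$.

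To produce such approximating subspaces I would perturb $\nu_0$ by a diagonal scaling $\Phi_\epsilon:\mathbf{z}\mapsto((1+\epsilon d_i)z_i)_{i=1}^n$, with $\mathbf{d}\in\mathbb{R}^n$ a direction to be chosen and $\epsilon>0$ small; since $\Phi_\epsilon\to\mathbf{I}$, the image $\nu_\epsilon:=\Phi_\epsilon(\nu_0)$ tends to $\nu_0$ in $G_l(\mathbb{R}^n)$. Because the unit sphere of $\nu_0$ is compact and only finitely many support sets $T$ occur, the value $\theta_{\ell_q}(\nu_0)=1$ is attained on a nonempty compact set $E$ of maximizing pairs $(\mathbf{z},T)$, each satisfying $\|\mathbf{z}_T\|_q^q=\|\mathbf{z}_{T^c}\|_q^q=:s$. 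A direct computation gives that the derivative at $\epsilon=0$ of the ratio $\|(\Phi_\epsilon\mathbf{z})_T\|_q^q/\|(\Phi_\epsilon\mathbf{z})_{T^c}\|_q^q$ equals $\tfrac{q}{s}\langle\mathbf{d},\mathbf{c}\rangle$, where $c_i=\sigma_i|z_i|^q$ and $\sigma_i=+1$ for $i\in T$, $\sigma_i=-1$ for $i\in T^c$. Hence it suffices to find a single $\mathbf{d}$ with $\langle\mathbf{d},\mathbf{c}_{(\mathbf{z},T)}\rangle<0$ for all $(\mathbf{z},T)\in E$: continuity and compactness then propagate the strict decrease from $E$ to a neighbourhood of it on the sphere, while the pairs whose ratio is already bounded away from $1$ stay below $1$ for small $\epsilon$, yielding $\theta_{\ell_q}(\nu_\epsilon)<1$.

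The hard part is exactly the existence of a common descent direction $\mathbf{d}$. By a standard separation (Gordan/Hahn--Banach) argument this is equivalent to $\mathbf{0}\notin\operatorname{conv}\{\mathbf{c}_{(\mathbf{z},T)}:(\mathbf{z},T)\in E\}$; note every constraint vector $\mathbf{c}$ is nonnegative on its $T$, nonpositive off it, and satisfies $\sum_i c_i=0$, so the obstruction can only come from several maximizers with mutually conflicting optimal supports balancing out to the origin. I expect ruling out this degenerate configuration --- using the homogeneity of the $\ell_q$ ratio and the freedom to shift mass from the top-$k$ coordinates to their complement --- to be the crux of the argument. As a consistency check, Theorem~\ref{probeq} applies to $F(x)=x^q\in\mathcal{M}$ (which is nondecreasing) and forces $\overline{\Omega_{\ell_q}}\setminus\interior(\Omega_{\ell_q})$ to be null and meagre; together with $\interior(\Omega_{\ell_q})=\Omega_{\ell_q}$ from Corollary~\ref{th5} this is fully consistent with the claimed identity. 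A tempting shortcut is to replace the tailored scaling by a doubly-stochastic contraction $S_\epsilon=(1-\epsilon)\mathbf{I}+\epsilon Q$, which spreads every vector and ought to lower the top-$k$-to-rest ratio; however the required monotonicity of that ratio under majorization involves top-$k$ partial sums of the concave powers $|z_i|^q$ and is itself delicate for $q<1$, so I would rely on the separation argument above rather than on majorization.
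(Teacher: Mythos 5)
Your first inclusion is exactly the paper's: continuity of $\theta_{\ell_q}$ from Lemma~\ref{le3} plus Lemma~\ref{le1} shows $\{\theta_{\ell_q}\le1\}$ is a closed set containing $\Omega_{\ell_q}$. The problem is the reverse inclusion, and there the proposal has a genuine gap that you yourself flag: the entire argument reduces to finding a common descent direction $\mathbf{d}$, i.e.\ to showing $\mathbf{0}\notin\operatorname{conv}\{\mathbf{c}_{(\mathbf{z},T)}:(\mathbf{z},T)\in E\}$, and you leave that ``crux'' unproven. Worse, it is not merely unfinished: within your chosen perturbation class (diagonal scalings $\Phi_\epsilon(\mathbf{z})=((1+\epsilon d_i)z_i)_i$) the degenerate configuration you hope to rule out actually occurs. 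Take $n=3$, $m=2$, $l=1$, $k=1$, any $q\in(0,1]$, and $\nu_0=\spn\{(1,1,0)^\top\}$. Then $\theta_{\ell_q}(\nu_0)=1$, the maximizing pairs are $T=\{1\}$ and $T=\{2\}$ with constraint vectors $(1,-1,0)$ and $(-1,1,0)$, whose convex hull contains the origin, so no descent direction exists. And this is forced, not an artifact of the first-order test: a diagonal scaling fixes the zero coordinate, so every $\Phi_\epsilon(\nu_0)=\spn\{(d_1,d_2,0)^\top\}$ satisfies $\theta_{\ell_q}\ge\max\left((d_1/d_2)^q,(d_2/d_1)^q\right)\ge1$. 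Yet $\nu_0\in\overline{\Omega_{\ell_q}}$, since $\spn\{(1,1,\epsilon)^\top\}$ has $\theta_{\ell_q}<1$ for every $\epsilon>0$. So the lemma is true at $\nu_0$ but unreachable by your method: the approximating subspaces must move zero coordinates off zero, i.e.\ you need the full tangent space of the Grassmannian (arbitrary perturbations of a basis matrix), not just the diagonal-scaling orbit, and with the enlarged perturbation class the separation question would have to be re-examined from scratch.

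For comparison, the paper's own proof of the reverse inclusion is a one-liner: it asserts that, directly from the definition of $\theta_{\ell_q}$ as a supremum of ratios, the set $\{\nu\mid\theta_{\ell_q}<1\}$ is dense at every point of $\{\nu\mid\theta_{\ell_q}\le1\}$ --- that is, it perturbs $\nu$ freely in $G_l(\mathbb{R}^n)$ rather than within any restricted family, and treats the resulting density as evident. You correctly sensed that there is real content hiding in that step (and your consistency check via Theorem~\ref{probeq} and Corollary~\ref{th5} is sound), but the route you committed to --- diagonal scalings plus Gordan-type separation --- provably cannot close it.
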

\begin{proof}
First we prove that $\overline{\Omega_{\ell_q}}\subseteq\{\nu~|~\theta_{\ell_q}\le 1\}$. This is because Lemma \ref{le3} shows that $\{\nu~|~\theta_{\ell_q}\le 1\}$ is closed. $\Omega_{\ell_q}\subseteq\{\nu~|~\theta_{\ell_q}\le 1\}$.
On the other hand, it is obvious that $\{\nu~|~\theta_{\ell_q}\le 1\}\subseteq\overline{\Omega_{\ell_q}}$. The proof is complete.
\end{proof}
\begin{lemma}
Given $\nu\in G_{l}(\mathbb{R}^n)$, if $\theta_{\ell_q}\le\theta_J$, then $\Omega_J\subseteq \overline{\Omega_{\ell_q}}$.
\end{lemma}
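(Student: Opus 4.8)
The plan is to reduce the set inclusion $\Omega_J\subseteq\overline{\Omega_{\ell_q}}$ to a pointwise comparison of null space constants, and then chain together three facts already recorded in the excerpt. Concretely, I would fix an arbitrary $\nu\in\Omega_J$ and show that $\nu\in\overline{\Omega_{\ell_q}}$; since $\nu$ is arbitrary, the inclusion follows at once.

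First I would observe that, by the definition (\ref{exactdef}), membership $\nu\in\Omega_J$ is precisely the statement that ERC holds for the $F$-minimization whose measurement matrix has null space $\nu$. By the necessity direction of Lemma \ref{nspcond2} (part 1), ERC forces $\theta_J(\nu)\le 1$. I would stress that one obtains only the non-strict inequality here, because the supremum defining $\theta_J$ in (\ref{nsc}) need not be attained; this is harmless, since all that is needed downstream is $\theta_J(\nu)\le 1$.

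Next I would invoke the standing hypothesis $\theta_{\ell_q}\le\theta_J$ (read as an inequality of functions on $G_l(\mathbb{R}^n)$, hence in particular valid at the chosen $\nu$), which yields the chain $\theta_{\ell_q}(\nu)\le\theta_J(\nu)\le 1$. Finally, Lemma \ref{cl} identifies the closure $\overline{\Omega_{\ell_q}}$ with the sublevel set $\{\nu~|~\theta_{\ell_q}(\nu)\le 1\}$, so $\theta_{\ell_q}(\nu)\le 1$ immediately gives $\nu\in\overline{\Omega_{\ell_q}}$, completing the argument.

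There is no substantial obstacle: the entire content is packaged into the three prior results, so the proof collapses to the one-line deduction $\nu\in\Omega_J\Rightarrow\theta_J(\nu)\le 1\Rightarrow\theta_{\ell_q}(\nu)\le 1\Rightarrow\nu\in\overline{\Omega_{\ell_q}}$. The only point worth double-checking is keeping the strict/non-strict inequalities consistent: one must use the \emph{necessary} condition $\theta_J\le 1$ for ERC (not the sufficient condition $\theta_J<1$), and one must compare against the closure $\overline{\Omega_{\ell_q}}=\{\theta_{\ell_q}\le 1\}$ rather than against $\Omega_{\ell_q}=\{\theta_{\ell_q}<1\}$, so that the non-strict bounds line up exactly. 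This is also the reason the conclusion is only $\Omega_J\subseteq\overline{\Omega_{\ell_q}}$ and cannot in general be sharpened to $\Omega_J\subseteq\Omega_{\ell_q}$.
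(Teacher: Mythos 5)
Your proof is correct and is essentially identical to the paper's: the paper's one-line argument is exactly the chain $\Omega_J\subseteq\{\nu\mid\theta_J\le 1\}\subseteq\{\nu\mid\theta_{\ell_q}\le 1\}=\overline{\Omega_{\ell_q}}$, invoking Lemma~\ref{nspcond2} and Lemma~\ref{cl} just as you do, with your version merely unrolling the set inclusions pointwise. Your care about strict versus non-strict inequalities (using the necessary condition $\theta_J\le 1$ and the closure rather than $\Omega_{\ell_q}$ itself) matches the paper's intent exactly.
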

\begin{proof}
By Lemma \ref{nspcond2}, the assumptions imply that
\begin{equation}
\Omega_J\subseteq\{\nu~|~\theta_J\le 1\}\subseteq\{\nu~|~\theta_{\ell_q}\le 1\}=\overline{\Omega_{\ell_q}}.
\end{equation}
\end{proof}
Theorem \ref{th1} then follows easily from the following lemma:
\begin{lemma}
\begin{equation}
\mu(\overline{\Omega_{\ell_q}})
=\mu(\Omega_{\ell_q}).
\end{equation}
\end{lemma}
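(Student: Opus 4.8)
The plan is to reduce the claim to the already-established Theorem~\ref{probeq}, rather than analyzing the level set $\{\theta_{\ell_q}=1\}$ directly. The crucial observation is that the $\ell_q$ cost function is generated by $F(x)=x^q$ with $0<q\le 1$, which is a \emph{non-decreasing} sparseness measure: it is non-negative, vanishes only at the origin, is monotone increasing, and is subadditive for $q\le 1$. Hence $F\in\mathcal{M}$ is non-decreasing and Theorem~\ref{probeq} applies verbatim, yielding that $\overline{\Omega_{\ell_q}}\setminus\interior(\Omega_{\ell_q})$ is a set of measure zero.

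First I would invoke Corollary~\ref{th5}, which asserts that $\Omega_{\ell_q}$ is open, so that $\interior(\Omega_{\ell_q})=\Omega_{\ell_q}$. Substituting this into the conclusion of Theorem~\ref{probeq} gives immediately that the boundary layer $\overline{\Omega_{\ell_q}}\setminus\Omega_{\ell_q}$ has measure zero. (Even without openness one has the inclusion $\overline{\Omega_{\ell_q}}\setminus\Omega_{\ell_q}\subseteq\overline{\Omega_{\ell_q}}\setminus\interior(\Omega_{\ell_q})$, which already suffices; openness merely makes the two differences literally coincide.) In the language of this appendix, $\overline{\Omega_{\ell_q}}\setminus\Omega_{\ell_q}=\{\nu\mid\theta_{\ell_q}=1\}$ by Lemma~\ref{cl} together with $\Omega_{\ell_q}=\{\nu\mid\theta_{\ell_q}<1\}$, so the content of this step is precisely that the level set $\{\theta_{\ell_q}=1\}$ is $\mu$-negligible.

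Finally I would write $\overline{\Omega_{\ell_q}}$ as the disjoint union of the measurable sets $\Omega_{\ell_q}$ (open) and $\overline{\Omega_{\ell_q}}\setminus\Omega_{\ell_q}$, and use additivity of the Haar measure $\mu$ to conclude
\begin{equation}
\mu(\overline{\Omega_{\ell_q}})=\mu(\Omega_{\ell_q})+\mu(\overline{\Omega_{\ell_q}}\setminus\Omega_{\ell_q})=\mu(\Omega_{\ell_q}),
\end{equation}
which is the desired identity.

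I do not expect any genuine obstacle here, since all the heavy lifting is done by Theorem~\ref{probeq}; the only points to verify carefully are that $F(x)=x^q$ really meets the hypotheses of that theorem (monotonicity and membership in $\mathcal{M}$) and that $\Omega_{\ell_q}$ is open so the interior can be dispensed with. The real difficulty---showing that the meagre, measure-zero boundary arises from the monotonicity of $F$ through a porosity argument---is entirely absorbed into the proof of Theorem~\ref{probeq} in Appendix~\ref{p_probeq}.
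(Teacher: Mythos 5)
Your proof is correct and takes essentially the same route as the paper, whose entire argument is the one-line observation that the claim ``follows immediately from Theorem~\ref{probeq}, with $J$ being the $\ell_p$ norm.'' You merely spell out the routine details the paper leaves implicit---that $F(x)=x^q$ is a non-decreasing sparseness measure, that $\interior(\Omega_{\ell_q})\subseteq\Omega_{\ell_q}\subseteq\overline{\Omega_{\ell_q}}$ (so the appeal to Corollary~\ref{th5} is, as you note, dispensable), and the additivity step---so there is nothing to correct.
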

\begin{proof}
This follows immediately from Theorem \ref{probeq}, with $J$ being the $\ell_p$ norm.
\end{proof}

\section{Proof of Proposition \ref{prop1}}\label{prop1_proof}
\begin{proof}[Proof of (1)]
Suppose $0<t_1<t_2$. From concavity we have
\begin{equation}
F(t_1)\ge \frac{t_2-t_1}{t_2}F(0)+\frac{t_1}{t_2}F(t_2)=\frac{t_1}{t_2}F(t_2).
\end{equation}
Therefore $F(t_1)/t_1\ge F(t_2)/t_2$, which implies that $F(t)/t$ is non-increasing on $(0,+\infty)$.
\end{proof}
\begin{proof}[Proof of (2)]
For arbitrary $0<t_1<t_2$ we have
\begin{align}
\frac{F(t_1)}{t}&=\frac{F(t_1)}{t_1^p}\cdot t_1^{p-1}\\
&\ge \frac{F(t_2)}{t_2^p}\cdot t_1^{p-1}\\
&=\frac{F(t_2)}{t_2}\cdot (\frac{t_2}{t_1})^{1-p}\\
&\ge \frac{F(t_2)}{t_2}
\end{align}
where the inequalities used the fact that $F(t)/t^p$ is non-increasing, and that $1-p\ge 0$. Thus $F(t)/t$ is also non-increasing.
\end{proof}

\begin{proof}[Proof of (3)]
For arbitrary $t_1,t_2>0$, the the assumption that $F(t)/t$ is non-increasing implies that
\begin{align}
F(t_1+t_2)&=\frac{F(t_1+t_2)}{t_1+t_2}\cdot(t_1+t_2)\\
&\ge (\frac{t_1}{t_1+t_2}\frac{F(t_1)}{t_1}+\frac{t_2}{t_1+t_2}\frac{F(t_2)}{t_2})\cdot(t_1+t_2)\\
&=F(t_1)+F(t_2).
\end{align}
Also $F(t_1+t_2)=F(t_1)+F(t_2)$ clearly holds in the case where $t_1=0$ or $t_2=0$. Thus $F$ is subadditive.
\end{proof}

\section{Proof of Lemma~\ref{lem6}}\label{lem6proof}
Fix an arbitrary ${\bf x}\in\mathcal{K}_{\ell_p}(n,k)$. Partition $T^c$ into $L\ge1$ subsets
\begin{align}
T^c=\bigcup_{l=1}^LT_l,
\end{align}
where $|T_l|=b$ for $k=1,\dots,L-1$ and $|T_L|\le b$, such that the entries of ${\bf x}_{T_l}$ have larger absolute values than those of ${\bf x}_{\bigcup_{j=1}^{l-1}T_j}$. We have by triangle inequality
\begin{align}\label{a0}
\|{\bf Ax}\|_2\ge\|{\bf A}_{T_{01}}{\bf x}_{T_{01}}\|_2-\sum_{l=2}^L\|{\bf A}_{T_l}{\bf x}_l\|_2,
\end{align}
where $T_{01}:=T\cup T_1$.
To bound the second term above, consider
\begin{align}\label{a1}
\|{\bf A}_{T_l}{\bf x}_{T_l}\|_2\le \sqrt{\phi_{\max}(b)}\|{\bf x}_{T_l}\|_2,
\end{align}
and
\begin{align}
\|{\bf x}_{T_{l+1}}\|_2&\le
\sqrt{b\max_{i\in T_{l+1}}|x_i|^2}
\\
&\le b\min_{i\in T_l}|x_i|
\\
&\le b\cdot\frac{\|{\bf x}_{T_l}\|_p}{b^{\frac{1}{p}}}
\\
&=\frac{1}{b^{\frac{1}{p}-\frac{1}{2}}}\|{\bf x}_{T_l}\|_p
\end{align}
which yields
\begin{align}
\sum_{l=2}^L\|{\bf x}_{T_l}\|_2
&\le\frac{1}{b^{\frac{1}{p}-\frac{1}{2}}}\sum_{l=2}^L\|{\bf x}_{T_l}\|_p
\\
&\le \frac{1}{b^{\frac{1}{p}-\frac{1}{2}}}\|{\bf x}_{T^c}\|_p \label{rmink}
\\
&\le \frac{\|{\bf x}_{T}\|_p}{b^{\frac{1}{p}-\frac{1}{2}}} \label{ad}
\\
&\le \left(\frac{k}{b}\right)^{\frac{1}{p}-\frac{1}{2}}\|{\bf x}_T\|_2,\label{a2}
\end{align}
where \eqref{rmink} is from the reverse Minkowski inequality and \eqref{ad} is because ${\bf x}\in\mathcal{K}_{\ell_p}(n,k)=\mathcal{D}_{\ell_p}(n,k)$.
Now \eqref{a0}, \eqref{a1} and \eqref{a2} give
\begin{align}
\|{\bf Ax}\|_2
\ge \left(\sqrt{\phi_{\min}(b+k)}-\sqrt{\phi_{\max}(b)}\left(\frac{k}{b}\right)^{\frac{1}{p}-\frac{1}{2}}\right)\|{\bf x}_{T_{01}}\|_2
\end{align}
But from triangle inequality, \eqref{a2} gives
\begin{align}
\|{\bf x}\|_2\le& \|{\bf x}_{T_{01}}\|_2+\|{\bf x}_{T_{01}^c}\|_2
\\
\le&\left(1+\left(\frac{k}{b}\right)^{\frac{1}{p}-\frac{1}{2}}\right)\|{\bf x}_{T_{01}}\|_2,
\end{align}
hence $\|{\bf Ax}\|_2\ge c\|{\bf x}\|_2$, as desired.


\ifCLASSOPTIONcaptionsoff
  \newpage
\fi

\bibliographystyle{ieeetr}
\bibliography{refs1}

\begin{thebibliography}{10}

\bibitem{DC}
E.~J. Cand\`{e}s and T.~Tao, ``Decoding by linear programming,'' {\em IEEE
  Trans. Inf. Theory}, vol.~51, no.~12, pp.~4203--4215, 2005.

\bibitem{Donoho}
D.~L. Donoho, ``Compressed sensing,'' {\em IEEE Trans. Inf. Theory}, vol.~52,
  no.~4, pp.~1289--1306, 2006.

\bibitem{zap}
J.~Jin, Y.~Gu, and S.~Mei, ``A stochastic gradient approach on compressive
  sensing signal reconstruction based on adaptive filtering framework,'' {\em
  IEEE Journal of Selected Topics in Signal Processing}, vol.~4, no.~2,
  pp.~409--420, 2010.

\bibitem{OMP}
J.~A. Tropp and A.~C. Gilbert, ``Signal recovery from random measurements via
  orthogonal matching pursuit,'' {\em IEEE Trans. Inf. Theory}, vol.~53,
  no.~12, pp.~4655--4666, 2007.

\bibitem{donoho2}
S.~Chen, D.~Donoho, and M.~Saunders, ``Atomic decomposition by basis pursuit,''
  {\em SIAM Journal on Scientific Computing}, vol.~20, no.~1, pp.~33--61.

\bibitem{chartrand}
R.~Chartrand and V.~Staneva, ``Restricted isometry properties and nonconvex
  compressive sensing,'' {\em Inverse Problems}, vol.~24, no.~3, pp.~20--35,
  2008.

\bibitem{Foucart1}
S.~Foucart and M.~Lai, ``Sparsest solutions of underdetermined linear systems
  via $\ell_q$-minimization for $0<q\le 1$,'' {\em Applied and Computational
  Harmonic Analysis}, vol.~26, no.~3, pp.~395--407, 2009.

\bibitem{Gribonval}
R.~Gribonval and M.~Nielsen, ``Highly sparse representations from dictionaries
  are unique and independent of the sparseness measure,'' {\em Applied and
  Computational Harmonic Analysis}, vol.~22, no.~3, pp.~335--355, 2007.

\bibitem{sl0}
H.~Mohimani, M.~B. Zadeh, and C.~Jutten, ``A fast approach for overcomplete
  sparse decomposition based on smoothed $\ell_0$ norm,'' {\em IEEE Trans.
  Signal Processing}, vol.~57, no.~1, 2009.

\bibitem{Fan}
J.~Fan and R.~Li, ``Variable selection via nonconcave penalized likelihood and
  its oracle properties,'' {\em Journal of the American Statistical
  Association}, vol.~96, no.~456, pp.~1348--1360, 2001.

\bibitem{MCP}
C.-H. Zhang, ``Nearly unbiased variable selection under minimax concave
  penalty,'' {\em The Annals of Statistics}, vol.~38, no.~2, pp.~894--942,
  2010.

\bibitem{chartrand1}
R.~Chartrand and W.~Yin, ``Iteratively reweighted algorithms for compressive
  sensing,'' {\em ICASSP 2008}, pp.~3869--3872, April.

\bibitem{irls}
I.~Daubechies, R.~DeVore, M.~Fornasier, and C.~S. Gunturk, ``Iteratively
  re-weighted least squares minimization for sparse recovery,'' {\em Princeton
  Univ NJ Program in Applied and Computational Mathematics}, vol.~63, no.~1,
  p.~35.

\bibitem{it}
I.~Daubechies, M.~Defrise, and C.~DeMol, ``An iterative thresholding algorithm
  for linear inverse problems,'' {\em Princeton Univ NJ Program in Applied and
  Computational Mathematics}, vol.~63, no.~1, p.~35.

\bibitem{wang}
X.~Wang, Y.~Gu, and L.~Chen, ``Proof of convergence and performance analysis
  for sparse recovery via zero-point attracting projection,'' {\em IEEE Trans.
  Signal Processing}, vol.~60, no.~8, pp.~4081--4093, 2012.

\bibitem{laming}
L.~Chen and Y.~Gu, ``Approximate projected generalized gradient methods with
  sparsity-inducing penalties,'' {\em Manuscript Submitted to IEEE Trans.
  Signal Processing}.

\bibitem{lqnsp}
A.~Aldroubi, X.~Chen, and A.~Powell, ``Stability and robustness of $\ell_q$
  minimization using null space property,'' {\em Proceedings of SampTA 2011},
  2011.

\bibitem{davenport2011introduction}
M.~A. Davenport, M.~F. Duarte, Y.~C. Eldar, and G.~Kutyniok, ``Introduction to
  compressed sensing,'' {\em Preprint}, vol.~93, 2011.

\bibitem{Foucart}
S.~Foucart, ``Notes on compressed sensing,'' {\em available online}, 2009.

\bibitem{tao1}
E.~Cand\`{e}s and T.~Tao, ``Near-optimal signal recovery from random
  projections: Universal encoding strategies?,'' {\em IEEE Trans. Inf. Theory},
  vol.~52, no.~12, pp.~5406--5425, 2006.

\bibitem{gribonval1}
R.~Gribonval and M.~Nielsen, ``Sparse representations in unions of bases,''
  {\em IEEE Trans. Inf. Theory}, vol.~49, no.~12, pp.~3320--3325, 2003.

\bibitem{nsp}
R.~Gribonval and M.~Nielsen, {\em On the Strong Uniqueness of Highly Sparse
  Representations from Redundant Dictionaries}, vol.~3195 of {\em Lecture Notes
  in Computer Science}.
\newblock Springer, 2004.

\bibitem{boothby}
W.~M. Boothby, {\em An Introduction to Differentiable Manifolds and Riemannian
  Geometry}.
\newblock Elsevier, 2~ed., 2003.

\bibitem{milnor}
J.~Milnor, {\em Differential Topology}.
\newblock 1958 Lecture Notes.

\bibitem{mattila}
P.~Mattila, {\em Geometry of Sets and Measures in Euclidean Spaces}.
\newblock Cambridge University Press, 1995.

\bibitem{raskutti2010restricted}
G.~Raskutti, M.~J. Wainwright, and B.~Yu, ``Restricted eigenvalue properties
  for correlated {Gaussian} designs,'' {\em The Journal of Machine Learning
  Research}, vol.~11, pp.~2241--2259, 2010.

\bibitem{tao2}
E.~Cand\`{e}s, J.~Romberg, and T.~Tao, ``Robust uncertainty principles: Exact
  signal reconstruction from highly incomplete frequency information,'' {\em
  IEEE Trans. Inf. Theory}, vol.~52, no.~2, pp.~489--509, 2006.

\bibitem{gordon}
Y.~Gordon, {\em On Milman's inequality and random subspaces which escape
  through a mesh in $\mathbb{R}^n$}.
\newblock Springer, 1988.

\bibitem{rudelson}
M.~Rudelson and R.~Vershynin, ``On sparse reconstruction from {F}ourier and
  {G}aussian measurements,'' {\em Communications on Pure and Applied
  Mathematics}, vol.~61, no.~8, pp.~1025--1045, 2008.

\bibitem{dyson1962threefold}
F.~J. Dyson, ``The threefold way. algebraic structure of symmetry groups and
  ensembles in quantum mechanics,'' {\em Journal of Mathematical Physics},
  vol.~3, no.~6, pp.~1199--1215, 1962.

\bibitem{stojnic}
M.~Stojnic, ``Under-determined linear systems and $\ell_q$-optimization
  thresholds,'' {\em CoRR}, vol.~abs/1306.3774, 2013.

\bibitem{amelunxen2014living}
D.~Amelunxen, M.~Lotz, M.~B. McCoy, and J.~A. Tropp, ``Living on the edge:
  Phase transitions in convex programs with random data,'' {\em Inform.
  Inference}, 2014.

\bibitem{bickel2009simultaneous}
P.~J. Bickel, Y.~Ritov, and A.~B. Tsybakov, ``Simultaneous analysis of {Lasso}
  and {Dantzig} selector,'' {\em The Annals of Statistics}, pp.~1705--1732,
  2009.

\bibitem{vershynin2010introduction}
R.~Vershynin, ``Introduction to the non-asymptotic analysis of random
  matrices,'' {\em arXiv preprint arXiv:1011.3027}, 2010.

\bibitem{bai1993limit}
Z.~Bai and Y.~Yin, ``Limit of the smallest eigenvalue of a large dimensional
  sample covariance matrix,'' {\em The annals of Probability}, pp.~1275--1294,
  1993.

\bibitem{adamczak2011restricted}
R.~Adamczak, A.~E. Litvak, A.~Pajor, and N.~Tomczak-Jaegermann, ``Restricted
  isometry property of matrices with independent columns and neighborly
  polytopes by random sampling,'' {\em Constructive Approximation}, vol.~34,
  no.~1, pp.~61--88, 2011.

\bibitem{rauhut2010compressive}
H.~Rauhut, ``Compressive sensing and structured random matrices,'' {\em
  Theoretical foundations and numerical methods for sparse recovery}, vol.~9,
  pp.~1--92, 2010.

\bibitem{lqharmonic}
A.~Aldroubi, X.~Chen, and A.~M. Powell, ``Perturbations of measurement matrices
  and dictionaries in compressed sensing,'' {\em Applied and Computational
  Harmonic Analysis}, vol.~33, no.~2, pp.~282 -- 291, 2012.

\bibitem{stein_func}
E.~M. Stein and R.~Shakarchi, {\em Functional Analysis: Introduction to Further
  Topics in Analysis}.
\newblock Princeton University Press, 2011.

\bibitem{troppdis}
J.~A. Tropp, ``Topics in sparse approximation, {Ph.D}. dissertation,'' {\em
  Computational and Applied Mathematics, Univ. Texas at Austin}, August 2004.

\bibitem{candes2006stable}
E.~J. Candes, J.~K. Romberg, and T.~Tao, ``Stable signal recovery from
  incomplete and inaccurate measurements,'' {\em Communications on pure and
  applied mathematics}, vol.~59, no.~8, pp.~1207--1223, 2006.

\bibitem{candes2008restricted}
E.~J. Candes, ``The restricted isometry property and its implications for
  compressed sensing,'' {\em Comptes Rendus Mathematique}, vol.~346, no.~9,
  pp.~589--592, 2008.

\bibitem{blanchard2011compressed}
J.~D. Blanchard, C.~Cartis, and J.~Tanner, ``Compressed sensing: How sharp is
  the restricted isometry property?,'' {\em SIAM review}, vol.~53, no.~1,
  pp.~105--125, 2011.

\bibitem{raskutti2011minimax}
G.~Raskutti, M.~J. Wainwright, and B.~Yu, ``Minimax rates of estimation for
  high-dimensional linear regression over-balls,'' {\em Information Theory,
  IEEE Transactions on}, vol.~57, no.~10, pp.~6976--6994, 2011.

\bibitem{wu2012optimal}
Y.~Wu and S.~Verd{\'u}, ``Optimal phase transitions in compressed sensing,''
  {\em Information Theory, IEEE Transactions on}, vol.~58, no.~10,
  pp.~6241--6263, 2012.

\bibitem{donoho2009observed}
D.~Donoho and J.~Tanner, ``Observed universality of phase transitions in
  high-dimensional geometry, with implications for modern data analysis and
  signal processing,'' {\em Philosophical Transactions of the Royal Society A:
  Mathematical, Physical and Engineering Sciences}, vol.~367, no.~1906,
  pp.~4273--4293, 2009.

\bibitem{bayati2012universality}
M.~Bayati, M.~Lelarge, and A.~Montanari, ``Universality in polytope phase
  transitions and message passing algorithms,'' {\em arXiv preprint
  arXiv:1207.7321}, 2012.

\bibitem{optheory}
N.~I. Akhiezer, I.~M. Glazman, and M.~K. Nestell, {\em Theory of linear
  operators in Hilbert space}, vol.~1.
\newblock F. Ungar Publishing Company, 1961.

\bibitem{rudin}
W.~Rudin, {\em Real and Complex Analysis}.
\newblock McGraw-Hill, 3~ed., 1987.

\bibitem{munk}
J.~R. Munkres, {\em Topology}.
\newblock Upper Saddle River, NJ Prentice Hall, Inc. c2000, 2nd ed~ed.

\bibitem{stein}
E.~M. Stein and R.~Shakarchi, {\em Real Analysis: Measure Theory, Integration,
  and Hilbert Spaces}.
\newblock Princeton University Press, 2005.

\end{thebibliography}

\end{document}